\newtheorem{theorem}{Theorem}
\newtheorem{definition}[theorem]{Definition}
\newtheorem{lemma}[theorem]{Lemma}
\newtheorem{corollary}[theorem]{Corollary}
\newtheorem{proposition}[theorem]{Proposition}
\newcommand{\M}{\mathcal{M}}
\renewcommand{\P}{\mathcal{P}}
\renewcommand{\S}{\mathcal{S}}
\newcommand{\an}[1]{\left\langle{#1}\right\rangle}
\newcommand{\unit}{\hat{\mathbf{1}}}
\newcommand{\ket}[1]{\left|{#1}\right\rangle}
\newcommand{\bra}[1]{\left\langle{#1}\right|}
\newcommand{\half}{\frac{1}{2}}
\newcommand{\nn}{\nonumber}
\newcommand{\s}[1]{\hat \sigma_{#1}}
\newcommand{\proj}[1]{\ket{#1}\!\bra{#1}}
\newcommand{\Alice}{Abydos }
\newcommand{\Aliceian}{Abydosian }
\definecolor{nblue}{rgb}{0.2,0.2,0.7}
\definecolor{ngreen}{rgb}{0.2,0.6,0.2}
\definecolor{nred}{rgb}{0.9,0,0}
\definecolor{nblack}{rgb}{0,0,0}
\newcommand{\red}{\color{nred}}
\newcommand{\sBox}[1]{
{\begin{center}\fbox{\begin{minipage}{.95\columnwidth}\small \red #1\end{minipage}}\end{center}}}
\begin{document}

\title{Specker's Parable of the Over-protective Seer: A Road to Contextuality, Nonlocality and Complementarity {\red PLUS AN ERRATUM}}

\author{Yeong-Cherng Liang }
\email{ycliang@mail.ncku.edu.tw
(Former affiliations: Group of Applied Physics, University of Geneva, CH-1211 Geneva 4, Switzerland \& School of Physics, The University of Sydney, Sydney, New South Wales 2006, Australia)}
\affiliation{Department of Physics, National Cheng Kung University, Tainan 701, Taiwan}

\author{Robert W. Spekkens}
\email{rspekkens@perimeterinstitute.ca (Corresponding author)}
\affiliation{Perimeter Institute for Theoretical Physics, 31 Caroline St. N, Waterloo,
Ontario, Canada N2L 2Y5}
\author{Howard M. Wiseman}
\email{h.wiseman@griffith.edu.au}
\affiliation{ARC Centre for Quantum Computation and Communication Technology, and Centre for Quantum Dynamics, Griffith University, Brisbane, Queensland 4111, Australia}
\date{\today}

\begin{abstract}
In 1960, the mathematician Ernst Specker described a simple example of nonclassical correlations, the counterintuitive features of which he dramatized using a parable about a seer who sets an impossible prediction task to his daughter's suitors.  We revisit this example here, using it as an entr\'ee to three central concepts in quantum foundations: contextuality,
Bell-nonlocality, and complementarity.  Specifically, we show that Specker's parable offers a narrative thread
that weaves together a large number of results, including: the impossibility of
measurement-noncontextual and outcome-deterministic ontological
models of quantum theory (the 1967 Kochen-Specker theorem), in particular the recent
state-specific pentagram proof of Klyachko;
the impossibility  of Bell-local models of quantum
theory (Bell's theorem), especially the proofs by Mermin and Hardy and extensions thereof;
the impossibility of a preparation-noncontextual ontological model of
quantum theory; and the existence
of triples of positive operator valued measures (POVMs) that can be
measured jointly pairwise but not triplewise.  Along the way,
several novel results are presented, including: a generalization of
a theorem by Fine connecting the existence of a
joint distribution over outcomes of counterfactual measurements to the existence
of a measurement-noncontextual and outcome-deterministic ontological model; a generalization of
Klyachko's proof of  the Kochen-Specker theorem  from pentagrams to a family of star polygons;
 a proof of the Kochen-Specker theorem in the style of Hardy's proof of Bell's theorem
 (i.e., one that makes use of the failure of the transitivity of implication for
counterfactual statements);  a categorization of contextual and Bell-nonlocal correlations in terms of frustrated networks; a derivation of a new inequality
testing preparation noncontextuality; and lastly, some novel results on the joint measurability of POVMs and the question of whether these can be modeled noncontextually. Finally, we emphasize that
Specker's parable of the over-protective seer provides a novel type of foil to quantum theory, challenging us to explain why the particular sort of contextuality and complementarity embodied therein does not arise in a quantum world.
\end{abstract}

\maketitle
\tableofcontents

\section{Introduction}

In the field of quantum foundations, the mathematician Ernst Specker is rightly
famous for introducing, with co-author Simon Kochen, the concept of a
\emph{noncontextual hidden variable model} and proving that such a model
cannot underly quantum theory.
This 1967 result, known as the Kochen-Specker theorem
\cite{KS67}, continues to be an active subject of research
today (see Ref.~\cite{Cabello:0012089} for a bibliography). One
finds precursors to this result in the 1957 work of Gleason
\cite{Gle57} and Bell's 1966 review article on hidden variable
models (which refers to Gleason's result) \cite{Bell66}, but
also in a 1960 paper by Specker entitled ``The logic of
propositions that are not simultaneously decidable''
\cite{Spe60}\footnote{The 1967 Kochen-Specker
theorem~\cite{KS67} improves upon many of these earlier results
by making use of a finite set of observables.   It should
be noted, however, that Bell's 1964 proof~\cite{Bell64} of
quantum nonlocality is also a proof of contextuality using only
a finite set of observables; unlike the Kochen-Specker proof,
it is {\em state-specific}, the first example of this kind.}.
This article studied logical features of quantum theory, in
particular the question of the consistency of counterfactual
propositions concerning the values of observables that are
not comeasurable\footnote{Specker did not use the modern term
``counterfactual'', but instead referred to
``infuturabilities'', which had been discussed in a scholastic
context in connection with the problem of whether God's
omniscience extended to knowing the truths of propositions
concerning what would have occurred if some event which did not
happen had in fact happened.}.  One of the points of the
paper was to show that it is possible to conceive of an
implication relation that is not transitive.

The idea is
illustrated with a parable wherein an over-protective seer sets a simple
prediction task to his daughter's suitors. The challenge cannot be met
because the seer asks the suitors for a noncontextual assignment of values but
measures a system for which the statistics are inconsistent with such an
assignment.  The present article considers the parable anew and seeks to
connect it with modern developments in quantum foundations. In particular, we
explore the extent to which the sorts of correlations instantiated in the
seer's prediction game can be achieved in a quantum world. Although the
precise correlations that are required by the seer do not occur in quantum
theory, the prediction game is found to be a good pump for quantum intuitions.
It leads quite naturally to proofs of nonlocality and contextuality, to a
novel kind of complementarity and to the notion of stronger-than-quantum
correlations. Indeed, it provides a narrative thread that is able to weave
together a great number of important modern results. That so much can be
gleaned from this little prediction game is a testament to the depth of
Specker's work. We offer this article as a small tribute to him on the occasion
of his 90th birthday.

\subsection{The parable of the over-protective seer}
We begin by reproducing Specker's parable of the overprotective seer\footnote{Our translation is an amalgam of those provided by Stairs \cite{Spe60} and Seevinck \cite{Seevinck}.}, with
clarifications by us in square brackets:
\begin{quotation}
At the Assyrian School of Prophets in Arba'ilu in the time of King Asarhaddon [(681-669 BCE)],
there taught a seer from Nineva. He was a distinguished representative of his
faculty (eclipses of the sun and moon) and aside from the heavenly bodies, his
interest was almost exclusively in his daughter. His teaching success was
limited; the subject proved to be dry and required a previous knowledge of
mathematics which was scarcely available. If he did not find the student
interest which he desired in class, he did find it elsewhere in overwhelming
measure. His daughter had hardly reached a marriageable age when he was flooded
with requests for her hand from students and young graduates. And though he
did not believe that he would always have her by his side, she was in any case
still too young and her suitors in no way worthy.  In order that the suitors might convince themselves of their unworthiness, he promised them that she would be wed to the one who could solve a prediction task that was posed to them.

Each suitor was taken before a table on which three little boxes stood in a row, [each of which might or might not contain a gem], and was asked to predict which of the boxes contained a gem and which did not.  But no matter how many times they tried, it seemed impossible to succeed in this task.  After each suitor had made his prediction, he was ordered by the father to open any two boxes which he had predicted to be both empty or any two boxes which he had predicted to be both full [in accordance with whether he had predicted there to be at most one gem among the three boxes, or at least two gems, respectively]. But it always turned out that one contained a gem and the other one did not, and furthermore the stone was sometimes in the first and sometimes in the second of the boxes that were opened. But how can it be possible, given three boxes, to neither be able to pick out two as empty nor two as full?

The daughter would have remained unmarried until the father's death, if not for the fact that, after the prediction of the son of a prophet [whom she fancied], she quickly opened two boxes herself, one of which had been indicated to be full and the other empty, and the suitor's prediction [for these two boxes] was found, in this case, to be correct. Following the weak protest of her father that he had wanted two other boxes opened, she tried to open the third. But this proved impossible whereupon the father grudgingly admitted that the prediction, being unfalsified, was valid. [The daughter and the suitor were married and lived happily ever after.]
\end{quotation}

\subsection{Contextuality and Complementarity} \label{Sec:CC}

Specker's parable presents us with apparently impossible correlations; as
he says ``But how can it be possible, given three boxes, to neither be able to pick out two as empty nor two as full?" Indeed, if a suitor reasons classically, then he expects that even if he chooses a configuration of gems at random from among the eight possibilities, it will be the true configuration one time out of eight, and when he opens two boxes he has marked both empty or both full, his prediction will be found to be correct one time out of four. The fact that no suitor manages to succeed after many trials suggests that
this reasoning must be flawed and that whichever two boxes are opened, one
will be found full and the other empty. Such correlations are contextual
in the sense that if one wishes to explain the measurements (opening a
box) as revealing a pre-existing property, then one must imagine that the
outcome of a measurement (or equivalently, the property that is measured)
is context-dependent - whether a gem is seen or not in the first box
depends on whether that box was opened together with the second or
together with the third. The seer's challenge cannot be met by the suitors
because he asks them for a noncontextual assignment of outcomes (i.e. a
specification of whether a gem will be found or not in each box,
independent of which other box is opened with it) but measures a system
for which the statistics are inconsistent with such an
assignment.\footnote{Because the suitors do not fathom this possibility,
they are led to interpret their consistent failure to provide a correct
prediction as a confirmation of the seer's assessment of their worth.  It
is in this sense that the seer's task is devised ``[i]n order that the
suitors might convince themselves of their unworthiness''.}

To imagine a world wherein the parable might occur, Specker must
effectively posit the existence of a system that exhibits a particular
kind of complementarity: the system must be such that three distinct
measurements can implemented upon it, any pair of which can be measured
jointly, but where a joint measurement of all three is not possible. To
see this, one need only note that if all three binary-outcome measurements
\emph{could} be implemented jointly, some pair would necessarily be found
to have correlated outcomes. 

We now ask the obvious question: Can the parable be implemented in quantum
theory? The reader is urged to pause and give this question some thought
before reading on.

There is of course a trivial sense in which the parable can be implemented in
a quantum world, namely the same way that it can be implemented
in a classical world: through a hidden mechanism under the seer's
table and under his control, which inserts and removes gems from
the closed boxes at his will. Such a mechanism would allow the seer to
enforce complementarity and contextuality ``by hand'', so to speak.
However this is clearly not
what Specker had in mind, because had that been the case, the seer
would not have been so easily stymied by his daughter's trick, as
there would have been no reason why the third box could not have
been opened. Rather, the seer seems to be in possession of a set of
``magic'' boxes that have particular, rather than arbitrary, correlations.
Thus in asking the question whether the parable can be
implemented in quantum theory, we mean: does quantum theory allow
for this sort of ``magic'', which would be truly
surprising for a naive suitor familiar only with classical theories,
which do not incorporate contextuality and complementarity at a
fundamental level?

Certainly, both complementarity and contextuality are required
at a fundamental level in quantum theory --
measurements that cannot be implemented jointly, and
correlations that cannot be explained by noncontextual
pre-existing properties (see Ref.~\cite{Peres:Book} for a review).  But what about
the particular correlations of the Specker parable? To get this
kind of contextuality, it is necessary to find a situation
wherein there are very specific sorts of limitations on joint
measurability --- there must exist a triple of measurements that
can only be implemented jointly in pairs. For projective
measurements in quantum theory, this sort of limitation on
joint measurability does \emph{not }occur. The reason is as
follows. Two Hermitian operators can be jointly measured if
and only they are jointly diagonalizable. But if we have three
Hermitian operators $\hat{A}_{1}$, $\hat{A}_{2}$, and
$\hat{A}_{3}$, and each pair of operators is jointly
diagonalizable, then all three are jointly diagonalizable. \
This is true for any number of Hermitian operators
--- one can implement all jointly if and only if one can implement every pair
jointly.

Nonetheless, one can imagine modifying the parable in various different ways to
obtain something for which an analogue \emph{can }be found in quantum theory,
and these different modifications are the topics of the different sections of our article.
In the following we outline each of them in turn.

\subsection{Outline}

We begin by providing,  in Sec.~\ref{sec:prelim}, a formalization of the original parable  as well as some refinements and elaborations thereof, together with definitions of the
key concepts. We then present the four different themes inspired by the parable,  with an interlude on frustrated networks.

\textbf{A double-query $n$-box system allowing only
adjacent queries (Sec.~\ref{sec:quantumcontextuality}).}
The seer could have a set of $n$ boxes, arranged in a
ring, for which only \emph{adjacent} pairs of boxes can be opened jointly. \
For $n$ odd, classical intuition leads one to expect that there must exist at
least one adjacent pair of boxes that are either both full or both empty, but
we can imagine that the seer has a special system wherein, regardless of which
adjacent pair of boxes is opened, it is always the case that one is found full
and the other empty.\ The $n=3$ case, which corresponds to the original
parable, is exceptional because the adjacent pairs constitute all the pairs. \
For $n>3,$ this is not the case, and so there is no longer any obstacle to
finding a set of projective measurements that have the same pattern of joint
measurability as these boxes. Indeed, one can find such sets. There are then
two ways of trying to obtain a quantum analogue of the new parable.

i)\emph{\ Klyachko's proof of contextuality.}\textbf{\ }Find a quantum state
that yields a nonzero probability of anti-correlation for every adjacent pair.
When the overall probability is higher than one could account for
classically, we arrive at a Klyachko-type proof of quantum
contextuality~\cite{A.A.Klyachko:PRL:020403}.

ii) \emph{A new variant of Klyachko's proof of
contextuality.}\textbf{\ } Find
a quantum state that supports the implication from one outcome
to the opposite outcome for every adjacent pair in the ring and  that assigns a non-zero probability to the first outcome in the sequence of inferences. In
conjunction with the transitivity of implication (a consequence of noncontextuality), and the fact that the ring contains an odd number of boxes, this gives rise to a contradiction, thereby
demonstrating the contextuality of quantum theory.

\textbf{A separated pair of single-query 3-box systems (Sec.~\ref{sec:separated}). }
One can imagine
that the seer's three-box system is such that only a single box (rather than
a pair of boxes) can be opened at any given time, but that it is possible to
prepare a pair of three-box systems such that by opening a single box on each
element of the pair, one reproduces the seer's correlations.  Specifically,
if the \emph{same} box is opened on each member of the pair, they are always
found to be both full or both empty, while if \emph{different} boxes are
opened on the two systems, one is always found full and the other empty.
(Classically, one would expect that some pair of boxes on a given wing are
both full or both empty, and by the assumed perfect correlation between the
wings, the same pair is similarly configured on the other wing, implying that
it is impossible to open different boxes on the two systems and always find
anti-correlation rather than correlation.)
Here, we are postulating six distinct measurements (three on each wing) only certain pairs
of which can be implemented jointly, namely, pairs that have one member from
each wing.  So again, there is no obstacle to finding a set of projectors
having this pattern of joint measurability. There are once again two ways of trying
to obtain a quantum analogue of the new parable.

i) \emph{Mermin's proof of Bell-nonlocality. } Find a quantum state that
yields perfect correlation when the same measurement is implemented on the
two wings. Demonstrate that the extent to which it can yield anti-correlation
when different boxes are opened is greater than is possible
in a Bell-local model~\cite{Mermin85}.

ii) \emph{Hardy's proof of Bell-nonlocality. }  Find a chain of choices of
measurement, alternating  between the two parties, and find a quantum state
that yields  implications connecting particular outcomes of all but one measurement
within this chain.
 Demonstrate that there is a nonzero
probability for the kind of correlation exhibited by the last pair in the
chain to be opposite to what one would expect by the transitivity of
implication~\cite{L.Hardy:PRL:1665}.

We also consider generalizations of these nonlocality proofs to rings of $n$
measurements where only adjacent members can be implemented jointly.

\textbf{Interlude on frustrated networks (Sec.~\ref{sec:frustratednetworks}). }
By representing correlations between binary-valued observables by frustrated networks, we provide a simple categorization of some of the contextual and Bell-nonlocal correlations outlined above.

\textbf{A diachronic pair of single-queries of a 3-box system (Sec.~\ref{quantumprepcontextuality}).}
In this case, the seer's three-box system is modified so that only a
single box can be opened at any given time, but that it is
possible to implement two \emph{consecutive} measurements in
such a way that if the same box is opened at the two times,
then the result of the measurement is always reproduced
faithfully, while if different boxes are opened at the two
times, then the results are always different.  In addition, we
impose the constraint that no measurement at the second time can yield any
information about the choice of the measurement at the first
time.

Now it is natural for a suitor to assume that statistical
indistinguishability among a set of choices implies that they
are equivalent at the level of an ontological model.  This
assumption is known as preparation noncontextuality
\cite{Spe05}.  It can be shown that no such preparation-noncontextual model can reproduce the diachronic
(two-time) correlations stated above.
But in quantum mechanics (which violates
preparation noncontextuality \cite{Spe05}),  there are sets of measurements for
which these correlations can be approximated even though the
quantum state after the first measurement reveals no
information about the identity of this measurement.

\textbf{Joint measurability of POVMs (Sec.~\ref{sec:JMPOVM}).}
A final path to a quantum analogue of the  overprotective seer
 (OS) parable is to
ignore the counter-intuitive correlations, and rather
concentrate on the complementarity exhibited by the three
boxes. As discussed above, the pairwise but not triplewise
joint measurability of three observables cannot exist in
quantum mechanics for traditional (projective) measurements.
However, this does not rule out the possibility that there
exists a triple of generalized measurements, described by POVMs
(positive operator valued measures), that can be jointly
measured pairwise but not triplewise. Indeed,  we will exhibit
two specific examples of such a triple of nonprojective
measurements. This thread
connects with some recent results on joint measurability of
POVMs~\cite{BGL95,HRS08,Wolf:PRL:2009,Yu:PRA:2010}.  We demonstrate that this example is not useful for approximating the OS correlations, nor for proving the contextuality of quantum theory.

\section{Preliminaries and a formalization of the parable} \label{sec:prelim}

\subsection{Joint measurability} \label{sec:JM}

We wish to flesh out the original parable by being more specific about the
nature of the correlations posited therein. We shall do this within the
context of operational theories. This is natural because the OS parable was
originally presented by Specker as a  ``toy  theory'' with similarities
to quantum theory, not as a scenario that arises within quantum theory.
We thus need a unified framework to compare the OS theory   both with quantum theory, and with classical
theories (i.e. theories without contextuality or complementarity).
 Also, to make the most of the OS parable we need to embellish the narrative (in a formal way)
 by adding extra assumptions, and this requires considering measurements and preparations
 beyond those discussed by Specker. Finally, we note that in the fields of quantum foundations
 and quantum information, there is currently considerable interest in operational
 ``foil'' theories  such as  Popescu-Rohrlich (PR)  boxes \cite{PR94} and the toy-bit
 theory \cite{Spe04}.

An operational theory is one that specifies the probabilities of each possible
outcome $X$ of each possible measurement procedure $M$ given each
possible preparation procedure $P$. We denote these probabilities by $%
p(X|M;P).$ It will be important for the later discussion of
contextuality to distinguish between a measurement procedure
$M$, which is a specification of a list of instructions of what
to do in the laboratory, and an equivalence class $\mathcal{M}$
of measurement procedures, where two procedures are equivalent
if they yield the same statistics for all preparation
procedures. For instance, the equivalence class
associated with a particular measurement procedure $M_{1}$ is%
\begin{equation}
\mathcal{M}_{1}\mathcal{\equiv }\left\{ M\mid \forall
P:p(X|M;P)=p(X|M_{1};P)\right\} .  \label{eq:opeq4mmts}
\end{equation}
We will refer to this equivalence relation over procedures as \emph{%
operational equivalence.} We will refer to the equivalence classes as simply
\emph{measurements}, and denote them by calligraphic font, while the
measurement procedures\ will be denoted by italic font. Similarly, we define
equivalence classes of preparation procedures. For instance, the equivalence
class associated with a particular preparation procedure $P_{1}$ is%
\begin{equation}
\mathcal{P}_{1}\mathcal{\equiv }\left\{ P\mid \forall
M:p(X|M;P)=p(X|M;P_{1})\right\} . \label{eq:opeq4preps}
\end{equation}
 Given that probabilities of outcomes of measurements depend only on the equivalence classes of the preparation and the measurement procedures, we typically condition only on the latter and write $p(X|\M;\P)$.

We begin by providing an operational definition of joint measurability\footnote{This
definition is a generalization of the notion of \emph{coexistence} of quantum
observables provided in Ref.~\cite{BGL95}.}.  We
consider only measurements with a discrete set of outcomes.

Joint measurability of a set of $N$ measurements can be defined
(recursively) as follows.

\begin{definition}[joint measurability]\label{Dfn:JointMeasurability}
 A set of $N$ measurements $\left\{
\mathcal{M}_{1},\mathcal{M}_{2},...,\mathcal{M}_{N}\right\} $
is jointly measurable if there exists a measurement
$\mathcal{M}$ with the following features: (i) The outcome set of
$\mathcal{M}$ is the Cartesian
product of the outcome sets of $\left\{ \mathcal{M}_{1},\mathcal{M}_{2},...,
\mathcal{M}_{N}\right\}$
and (ii) the outcome distributions for every joint measurement of a
subset $\left\{ \mathcal{M}_{s}|s\in S\right\} \subset
\left\{ \mathcal{M}_{1},\mathcal{M}_{2},...,\mathcal{M}_{N}\right\} $ are
recovered as marginals of the outcome distribution of $\mathcal{M}$ for all preparations $\mathcal{P}$.
Denoting a joint measurement of the subset $S$ by $\mathcal{M}_{S}$, and its outcome by $X_{S}$,
\begin{equation}\label{Eq:SubsetOfMmts}
\mathcal{M}_{S} \equiv \left\{ \mathcal{M}_{s}|s\in S\right\}, \;\;\;
X_{S} \equiv \left\{ X_{s}|s\in S\right\}
\end{equation}
the condition can be expressed as
\begin{equation}
\forall S,\forall \mathcal{P}:p\left( X_{S}|\mathcal{M}_{S};\mathcal{P}%
\right) =\sum_{X_{t}:t\notin S}p\left( X_{1},X_{2},...,X_{N}|\mathcal{M};%
\mathcal{P}\right) .
\end{equation}
\end{definition}

\begin{definition}[$n$-tuple-wise joint measurability]
A set of measurements $\left\{ \mathcal{M}_{1},\mathcal{M}_{2},...,\mathcal{M}%
_{N}\right\} $ is $n$-tuple-wise jointly measurable if every $n$-element subset
(i.e. every $n$-tuple of measurements) is jointly measurable.
\end{definition}

 Clearly joint measurability of all $n$-tuples implies joint
measurability of all $\left( n-1\right) $-tuples, but not vice-versa.

 Finally, we shall sometimes say that the measurements in a set $\left\{
\mathcal{M}_{1},\mathcal{M}_{2},...,\mathcal{M}_{N}\right\} $ exhibit \emph{complementary} if they are not jointly measurable.

We can now be precise about the nature of the correlations in the
overprotective seer's prediction game.

Abstracting from the story of boxes and gems, the parable
posits that there are three distinct measurement procedures,
which we shall denote by $M_{1},$ $M_{2}$ and $M_{3}$
(corresponding to the choice of box).  A key assumption that is
not explicit in Specker's description of the prediction game is
that these three measurement procedures are not operationally
equivalent.  That is, for every pair, there is a preparation procedure
that distinguishes them, that is, some $P$ such that
$p(X|M_1;P)\ne p(X|M_2;P)$.  Making this assumption, we
see that the game assumes the existence of three distinct
equivalence classes of measurement procedures,  which we denote
by $\mathcal{M}_{1},$ $\mathcal{M}_{2}$ and $\mathcal{M}_{3}$.
Furthermore, it is assumed that these are pairwise jointly
measurable.  It follows that there exist three joint
measurements, which we shall denote by $\mathcal{M}_{12},$
$\mathcal{M}_{13}$ and $\mathcal{M}_{23}$ and which, by virtue
of the definition of joint measurability, must have statistics
that reproduce the statistics of $\mathcal{M}_{1},$
$\mathcal{M}_{2}$ and $\mathcal{M}_{3}$ as marginals.  Note that, as the notation suggests, $\mathcal{M}_{12},$ $\mathcal{M}_{13}$ and $\mathcal{M}_{23}$ correspond to distinct equivalence classes of measurement procedures, a fact that follows from the operational distinguishability of $\mathcal{M}_{1},$ $\mathcal{M}_{2}$ and $\mathcal{M}_{3}$.

Note also that within the equivalence class of measurement procedures $\mathcal{M}%
_{1}$, there are procedures $M_{1}^{(2)}$ that involve implementing a joint
measurement of $\mathcal{M}_{1}$ and $\mathcal{M}_{2}$ and discarding the
outcome of the $\mathcal{M}_{2}$ measurement, and there are procedures $%
M_{1}^{(3)}$ that involve implementing a joint measurement of $\mathcal{M}%
_{1}$ and $\mathcal{M}_{3}$ and discarding the outcome of the $\mathcal{M}%
_{3}$ measurement. Which of these two procedures is implemented may be
relevant in a contextual hidden variable model, as we will see.

The seer's trick also requires that there is at least one preparation, call it
$\mathcal{P}_{\ast },$ that yields perfect negative correlations for the
joint measurement of any pair of $\mathcal{M}_{1}$, $\mathcal{M}_{2}$ and $%
\mathcal{M}_{3}.$
Perfect negative correlation for a single joint measurement of $\mathcal{M}_{1}$ and $\mathcal{M}_{2}$
does not imply that one must have equal probability for
$X_{1}=0$, $X_{2}=1$ and $X_{1}=1$, $X_{2}=0$ (the two ways of
achieving perfect negative correlation). However, this
equality does follow from demanding perfect negative
correlation for \emph{all three} joint measurements, as we show
in Appendix \ref{App:SpeckerCorrelations}. Consequently, the correlations are
of the form
\begin{align}
\forall i\ne j:
&p(X_i=0,X_j=1|\mathcal{M}_{i,j};\mathcal{P}_{\ast })= \frac{1}{2}
\nonumber \\
&p(X_i=1,X_j=0|\mathcal{M}_{i,j};\mathcal{P}_{\ast })= \frac{1}{2}
\label{eq:OScorrelations}
\end{align}
We call these the \emph{overprotective seer correlations}, or \emph{OS
correlations. }Note that it follows from this definition that individual
measurements have a uniformly random outcome,
\begin{equation}
p(X_{i}=0|\mathcal{M}_{i};\mathcal{P}_{\ast }) =p(X_{i}=1|\mathcal{M}_{i};\mathcal{P}_{\ast }) =\frac{1}{2}.
\end{equation}

\subsection{The existence of a joint distribution}

The question of joint measurability concerns what is physically possible, not what is logically possible.  If a physical theory postulates measurements that
cannot be jointly implemented, it could still be that there is a joint probability distribution over the
outcomes of these measurements that yields each measurement's statistics as a marginal.

\begin{definition}[existence of a joint distribution]
Consider a set of measurements $\left\{ \mathcal{M}_{1},\mathcal{M}_{2},...,\mathcal{M}%
_{N}\right\} .$ Let $S$ be a subset of their indices
and denote the joint measurement associated with this subset by
$\mathcal{M}_{S}$ and its outcome by  $X_{S}$, as in
Eq.~(\ref{Eq:SubsetOfMmts}). A joint distribution for the measurements
$\left\{ \mathcal{M}_{1},\mathcal{M}_{2},...,\mathcal{M}_{N}\right\} $ is said
to exist if there exists a distribution $p\left( X_{1}...X_{N}|\mathcal{P}%
\right) $ for every preparation $\mathcal{P}$ such that for any measurement $%
\mathcal{M}_{S},$
\begin{equation}
p\left( X_{S}|\mathcal{M}_{S};\mathcal{P}\right) =\sum_{X_{\nu }:\nu \notin
S}p(X_{1}...X_{N}|\mathcal{P}).
\end{equation}
\end{definition}

It is worth noting that within a given theory, the nonexistence of a joint
distribution for some set of measurements implies the physical impossibility of a joint measurement of these.  This follows from the fact that if a joint measurement is possible, then there must exist a joint distribution over the outcomes.  However, the converse implication need not hold.  For instance, there are theories, such as the toy theory of Ref.~\cite{Spe04} which postulate the physical impossibility of certain joint measurements, but for which a joint distribution over outcomes (effectively a hidden variable model) does exist.

The feature of the OS correlations that is at the root of their
peculiarities is the fact that they do not admit of a joint
distribution.

\begin{lemma}[no joint distribution for OS correlations]\label{lemma:noJMforOS}
There is no distribution $p(X_{1},X_{2},X_{3})$ on the three
binary variables $X_{1},X_{2}$ and $X_{3}$ such that the
marginals over pairs of these are of the form of
Eq.~(\ref{eq:OScorrelations}).
\end{lemma}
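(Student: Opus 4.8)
The plan is to prove the claim by contradiction, exploiting the perfect anti-correlation required by Eq.~(\ref{eq:OScorrelations}) for all three pairs simultaneously. Suppose, for the sake of contradiction, that a joint distribution $p(X_1,X_2,X_3)$ on three binary variables exists whose pairwise marginals all have the OS form, i.e.\ each pair is perfectly anti-correlated. The OS correlations demand in particular that $X_1$ and $X_2$ always take opposite values, that $X_2$ and $X_3$ always take opposite values, and that $X_1$ and $X_3$ always take opposite values.

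First I would reformulate ``perfect anti-correlation'' as a statement about the support of $p$: since $p(X_i=X_j\mid\mathcal{M}_{i,j};\mathcal{P}_\ast)=0$ for every pair, any assignment $(x_1,x_2,x_3)$ lying in the support of $p$ must satisfy $x_i\neq x_j$ for all three pairs $i\neq j$. The crux is then a simple parity or counting argument on the binary cube $\{0,1\}^3$. With only two possible values, the three conditions $x_1\neq x_2$, $x_2\neq x_3$, and $x_1\neq x_3$ cannot be met jointly: from $x_1\neq x_2$ and $x_2\neq x_3$ one infers $x_1=x_3$ (since each variable is forced to the complement of $x_2$), which directly contradicts the requirement $x_1\neq x_3$. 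Hence no assignment in $\{0,1\}^3$ satisfies all three anti-correlation constraints, so the support of any such $p$ would be empty, contradicting the fact that $p$ is a normalized probability distribution.

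The only place that requires a moment's care---though it is hardly an obstacle---is making sure that ``marginals of the OS form'' genuinely forces the pairwise support conditions rather than merely constraining expectation values; this is immediate because Eq.~(\ref{eq:OScorrelations}) assigns probability exactly $\tfrac12+\tfrac12=1$ to the two anti-correlated outcomes of each pair, leaving zero probability for the correlated outcomes, and marginalizing $p(X_1,X_2,X_3)$ down to any pair must reproduce this. I would phrase the conclusion as: the three pairwise anti-correlation constraints are mutually inconsistent on a triple of two-valued variables by a parity argument (the product $(-1)^{x_1+x_2}(-1)^{x_2+x_3}(-1)^{x_1+x_3}=(-1)^{2(x_1+x_2+x_3)}=+1$ must equal $(-1)\cdot(-1)\cdot(-1)=-1$), which is the compact algebraic restatement of the same contradiction. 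This establishes that no joint distribution reproducing the OS correlations can exist, proving Lemma~\ref{lemma:noJMforOS}.
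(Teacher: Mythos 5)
Your proof is correct and follows essentially the same route as the paper's: both arguments observe that no point of $\{0,1\}^3$ can satisfy all three pairwise anti-correlation constraints, so any valuation receiving nonzero probability forces some pair to exhibit $p(0,0)>0$ or $p(1,1)>0$, contradicting Eq.~(\ref{eq:OScorrelations}). Your closing parity identity $(-1)^{x_1+x_2}(-1)^{x_2+x_3}(-1)^{x_1+x_3}=+1\neq -1$ is just the compact algebraic restatement that the paper itself gives a few paragraphs later in its Mermin-style discussion of $\bar{X}_1\bar{X}_2=\bar{X}_1\bar{X}_3=\bar{X}_2\bar{X}_3=-1$.
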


\begin{proof}
There are eight valuations of $(X_{1},X_{2},X_{3}),$ of the form $%
(0,0,0),$ $(0,0,1),$ $(0,1,0),$ $\dots $ But whichever of
these valuations is assigned non-zero probability in
$p(X_{1},X_{2},X_{3}),$ one of the three pairs $(X_{1},X_{2}),$
$(X_{1},X_{3})$ or $(X_{2},X_{3})$ will have non-zero
probability assigned either to the valuation $(0,0)$,
\textquotedblleft both boxes empty\textquotedblright , or to
the valuation $(1,1)$ \textquotedblleft both boxes
full\textquotedblright . For this pair, either $p(0,0)>0$ or
$p(1,1)>0,$ so that perfect anti-correlation is not achieved.
\end{proof}

Given the discussion above, this result has immediate (negative) consequences for the
possibility of implementing a triplewise joint measurement of $\mathcal{M}_{1},$ $%
\mathcal{M}_{2}$ and $\mathcal{M}_{3}$.
\begin{corollary}
Measurements $\mathcal{M}_{1},$ $\mathcal{M}_{2}$ and $%
\mathcal{M}_{3}$ that can be pairwise jointly measured and that achieve the OS
correlations of Eq. (\ref{eq:OScorrelations}) cannot be triplewise jointly
measured.
\end{corollary}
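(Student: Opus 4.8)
The plan is to argue by contradiction, leveraging Lemma~\ref{lemma:noJMforOS} together with the observation made just before it: within any theory, the physical possibility of a joint measurement entails the existence of a joint distribution over its outcomes. First I would suppose, for contradiction, that $\mathcal{M}_{1}$, $\mathcal{M}_{2}$ and $\mathcal{M}_{3}$ \emph{are} triplewise jointly measurable. Applying Definition~\ref{Dfn:JointMeasurability} to the triple $\{\mathcal{M}_{1},\mathcal{M}_{2},\mathcal{M}_{3}\}$, there exists a measurement $\mathcal{M}$ whose outcome set is the Cartesian product of the three binary outcome sets, and whose outcome distribution $p(X_{1},X_{2},X_{3}|\mathcal{M};\mathcal{P})$ recovers, as marginals, the statistics of every pair $\mathcal{M}_{ij}$ for all preparations $\mathcal{P}$.

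Next I would specialize to the preparation $\mathcal{P}_{\ast}$ that yields the OS correlations. Defining $p(X_{1},X_{2},X_{3}):=p(X_{1},X_{2},X_{3}|\mathcal{M};\mathcal{P}_{\ast})$ gives a genuine probability distribution on the three binary variables. The marginalization property of Definition~\ref{Dfn:JointMeasurability} guarantees that, for each pair $i\ne j$, summing out the remaining variable reproduces $p(X_{i},X_{j}|\mathcal{M}_{ij};\mathcal{P}_{\ast})$, which by hypothesis is exactly the form given in Eq.~(\ref{eq:OScorrelations}). Hence $p(X_{1},X_{2},X_{3})$ is precisely a distribution of the kind whose existence Lemma~\ref{lemma:noJMforOS} forbids. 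Invoking that lemma then delivers the contradiction: no such distribution exists, so the assumed joint measurement $\mathcal{M}$ cannot exist, and the three measurements are not triplewise jointly measurable.

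There is, frankly, no serious obstacle here, since the substantive content lives entirely in Lemma~\ref{lemma:noJMforOS}; the corollary is essentially a translation of that combinatorial impossibility into the operational language of joint measurability. The one point requiring care is the logical direction of the bridging step: I would use only the valid implication ``joint measurement $\Rightarrow$ joint distribution'' and take care not to invoke its converse, which the text explicitly notes can fail (as in the toy theory of Ref.~\cite{Spe04}). I would also remark that the argument invokes only the single preparation $\mathcal{P}_{\ast}$ and the pairwise joint measurability needed to state the OS correlations, so it requires nothing further about the operational inequivalence of the $\mathcal{M}_{i}$.
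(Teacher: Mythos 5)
Your argument is correct and matches the paper's own reasoning: the corollary there is stated as an immediate consequence of the observation that a joint measurement implies a joint distribution over outcomes, combined with Lemma~\ref{lemma:noJMforOS}, which is exactly the contradiction you construct via Definition~\ref{Dfn:JointMeasurability} applied to $\mathcal{P}_{\ast}$. Your explicit caution about using only the implication ``joint measurement $\Rightarrow$ joint distribution'' (and not its failing converse) is precisely the point the paper itself flags with the toy-theory example, so nothing is missing.
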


\subsection{Measurement-noncontextual ontological models} \label{sec:defnNC}

In this article, we will make use of the generalized notion of
noncontextuality introduced in Ref.~\cite{Spe05}, which is operational
insofar as it is defined for ontological models of any operational theory,
not just quantum theory. An ontological model of an operational theory
specifies: (i) a set $ \Lambda $ of ontic (i.e. real, physical) states
$\lambda $; (ii) for each preparation procedure $P$, a distribution
$p(\lambda|P)$ describing the probability that the ontic state of the system
subsequent to the preparation procedure $P$ is $\lambda $; (iii) for each
measurement procedure $M$, a \emph{response function} $p(X|M;\lambda )$
describing the conditional probability of obtaining outcome $X$ given ontic
state $\lambda . $ Finally, one must recover the statistics of the
operational theory as
follows:%
\begin{equation}
p(X|M;P)=\sum_{\lambda} p(X|M;\lambda )p(\lambda |P).
\end{equation}
Here we have taken $\lambda$ to be a discrete variable for simplicity.

An ontological model is said to be \emph{measurement-noncontextual} if any two measurement procedures that are
operationally equivalent [in the sense of Eq.~(\ref{eq:opeq4mmts})] are represented similarly in the model:
\begin{eqnarray}
\forall P:p(X|M;P)=p(X|M^{\prime };P) \nonumber \\
\Rightarrow \forall \lambda:p(X|M;\lambda )=p(X|M^{\prime };\lambda ).  \label{eq:MNC}
\end{eqnarray}
Equivalently, the condition is that the response function for a measurement
procedure $M$ depends only on its operational equivalence class $\mathcal{M},
$ that is,%
\begin{equation}
p(X|M;\lambda )=p(X|\mathcal{M};\lambda ).
\end{equation}%

An ontological model is said to be \emph{outcome-deterministic
}for a measurement procedure $M$ if the outcome is uniquely
determined for every
ontic state,%
\begin{equation} \label{eq:OD}
\forall \lambda \in \Lambda :p(X|M;\lambda )\in \{0,1\}.
\end{equation}%

 The traditional notion of a noncontextual ontological
model of quantum theory incorporated both the assumption of
measurement noncontextuality and that of outcome determinism
for projective measurements.  Here, we will follow
Ref.~\cite{Spe05} and distinguish these assumptions so as not
to conflate issues about determinism with issues about
noncontextuality.  To avoid terminological confusion, we shall
say that an ontological model of quantum theory is
\emph{traditionally-noncontextual} if it is both measurement-noncontextual [in the sense of Eq.~(\ref{eq:MNC})] and outcome-deterministic for projective measurements.
Any proof of the impossibility of a traditionally-noncontextual model of quantum theory will be called a proof of the \emph{Kochen-Specker theorem}.

As it turns out, there is a close connection between the
existence of a joint distribution and noncontextuality:

\begin{theorem}\label{Thm:Fine}
For a given set of measurements, if there exists a
measurement-noncontextual and outcome-deterministic ontological
model then there exists a joint distribution for their outcomes.
\end{theorem}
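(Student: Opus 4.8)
The plan is to construct the desired joint distribution explicitly from the ontic-state distribution supplied by the ontological model, integrating out the ontic variable. Concretely, suppose we are given a measurement-noncontextual and outcome-deterministic ontological model for the set $\{\mathcal{M}_1,\dots,\mathcal{M}_N\}$. For each measurement $\mathcal{M}_i$, measurement noncontextuality together with Eq.~(\ref{eq:MNC}) guarantees that the response function $p(X_i|M_i;\lambda)$ depends only on the equivalence class $\mathcal{M}_i$, and outcome determinism [Eq.~(\ref{eq:OD})] guarantees that for each $\lambda$ this response function is a deterministic function, i.e.\ there is a well-defined value $X_i=f_i(\lambda)\in\{0,1,\dots\}$ assigned to $\mathcal{M}_i$ by the ontic state $\lambda$. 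The key point is that each $\lambda$ thereby fixes a definite joint valuation $(f_1(\lambda),\dots,f_N(\lambda))$ of \emph{all} the measurements simultaneously, independently of which subset one happens to query.

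The candidate joint distribution is then
\begin{equation}
p(X_1,\dots,X_N|\mathcal{P}) = \sum_\lambda p(\lambda|\mathcal{P})\prod_{i=1}^N p(X_i|\mathcal{M}_i;\lambda),
\end{equation}
where each factor $p(X_i|\mathcal{M}_i;\lambda)$ is the (deterministic, noncontextual) response function. I would then verify the defining marginalization property of a joint distribution. Fix any subset $S$ of indices with joint measurement $\mathcal{M}_S$. The crucial step is to argue that the model reproduces the operational statistics of $\mathcal{M}_S$ via its own response function $p(X_S|\mathcal{M}_S;\lambda)$, and that by measurement noncontextuality this response function factorizes as $\prod_{s\in S}p(X_s|\mathcal{M}_s;\lambda)$ -- that is, the joint measurement assigns to each component outcome exactly the value that the marginal measurement $\mathcal{M}_s$ would assign in the same ontic state. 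This is where outcome determinism does the real work: since each $\lambda$ deterministically fixes every $f_s(\lambda)$, the outcome of any joint measurement $\mathcal{M}_S$ in state $\lambda$ must be the tuple $(f_s(\lambda))_{s\in S}$, because the marginal of $\mathcal{M}_S$ onto the $s$-th component is operationally $\mathcal{M}_s$ and the model is noncontextual.

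Granting this factorization, the marginalization follows by a direct computation: summing the candidate $p(X_1,\dots,X_N|\mathcal{P})$ over the outcomes $X_\nu$ with $\nu\notin S$ collapses the product to $\prod_{s\in S}p(X_s|\mathcal{M}_s;\lambda)=p(X_S|\mathcal{M}_S;\lambda)$, and then summing against $p(\lambda|\mathcal{P})$ recovers exactly $p(X_S|\mathcal{M}_S;\mathcal{P})$ by the ontological-model representation of the operational statistics. This holds for every $S$ and every $\mathcal{P}$, which is precisely the definition of the existence of a joint distribution.

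I expect the main obstacle to be the factorization step rather than the bookkeeping. One must be careful that a joint measurement $\mathcal{M}_S$ is an \emph{a priori} independent object whose response function is not automatically the product of the marginal response functions; the argument that it \emph{is} such a product relies essentially on combining outcome determinism (each $\lambda$ fixes a single value per measurement, leaving no room for correlations among components beyond the shared $\lambda$) with measurement noncontextuality (the marginal of $\mathcal{M}_S$ onto component $s$ is operationally equivalent to $\mathcal{M}_s$, hence shares its response function). Making this inference airtight -- and in particular flagging that it would \emph{fail} without outcome determinism, since indeterministic response functions could carry genuine correlations between the components of $\mathcal{M}_S$ that are invisible at the single-measurement marginal level -- is the conceptual heart of the proof.
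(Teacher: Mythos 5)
Your proposal is correct and follows essentially the same route as the paper's proof in Appendix~\ref{App:ProofFineTheorem}: define the candidate joint distribution as the product of single-measurement response functions averaged over $p(\lambda|\mathcal{P})$, establish the factorization $p(X_S|\mathcal{M}_S;\lambda)=\prod_{s\in S}p(X_s|\mathcal{M}_s;\lambda)$ from determinism plus noncontextuality, and verify the marginals by direct summation. If anything, you spell out the justification of the factorization step (that the marginal of $\mathcal{M}_S$ on component $s$ is operationally equivalent to $\mathcal{M}_s$ and hence shares its response function) more explicitly than the paper's terse ``it follows.''
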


The proof is provided in Appendix \ref{App:ProofFineTheorem}.  This is a slight generalization of half of a theorem by Fine \cite{Fine82}. Combining this theorem with the nonexistence of a joint
distribution for the OS correlations (lemma
\ref{lemma:noJMforOS}), we have:

\begin{corollary} \label{corollary:OScontextuality}
There is no measurement-noncontextual and outcome-deterministic
ontological model of the OS correlations of
Eq.~(\ref{eq:OScorrelations}).
\end{corollary}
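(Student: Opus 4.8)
The plan is to obtain this result immediately, by a modus tollens combination of the two facts already established: Theorem~\ref{Thm:Fine} and Lemma~\ref{lemma:noJMforOS}. The strategy is to assume the negation of the claim and derive a contradiction with the lemma, so no new machinery needs to be built.

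First I would suppose, for contradiction, that there does exist a measurement-noncontextual and outcome-deterministic ontological model reproducing the OS correlations of Eq.~(\ref{eq:OScorrelations}) for the three measurements $\mathcal{M}_1,\mathcal{M}_2,\mathcal{M}_3$ and the preparation $\mathcal{P}_\ast$. Applying Theorem~\ref{Thm:Fine} to this set of three measurements, the mere existence of such a model guarantees the existence of a joint distribution over the outcomes $(X_1,X_2,X_3)$ for every preparation, and in particular for $\mathcal{P}_\ast$; call it $p(X_1,X_2,X_3)$.

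The next step is to verify that this joint distribution has exactly the marginals that the lemma forbids. By the definition of a joint distribution, its two-variable marginals must coincide with the statistics of the pairwise joint measurements $\mathcal{M}_{ij}$ evaluated on $\mathcal{P}_\ast$; but by hypothesis those statistics are precisely the OS correlations, so the pairwise marginals of $p(X_1,X_2,X_3)$ are of the form of Eq.~(\ref{eq:OScorrelations}). This directly contradicts Lemma~\ref{lemma:noJMforOS}, which asserts that no distribution on three binary variables can have all three pairwise marginals in this perfectly anti-correlated form. The contradiction shows the assumed model cannot exist, which is the claim.

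I do not expect any substantive obstacle to remain once these two ingredients are invoked, since the real work has been offloaded to Theorem~\ref{Thm:Fine} (proven in Appendix~\ref{App:ProofFineTheorem}) and to the elementary counting argument of the lemma. The only point demanding care is bookkeeping: confirming that ``a model of the OS correlations'' supplies exactly the input required by the theorem for this particular three-measurement set, and that the joint distribution the theorem produces is matched to the \emph{same} preparation $\mathcal{P}_\ast$ for which the pairwise statistics are the OS correlations, so that the lemma applies with no gap between the two hypotheses.
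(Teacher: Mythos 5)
Your proposal is correct and is exactly the paper's argument: the corollary is stated immediately after Theorem~\ref{Thm:Fine} as the direct combination of that theorem with Lemma~\ref{lemma:noJMforOS}, i.e.\ the same modus tollens you describe. Your extra bookkeeping remark (that the joint distribution produced by the theorem is tied to the same preparation $\mathcal{P}_\ast$ whose pairwise statistics are the OS correlations) is the right point of care and is implicitly handled by the theorem's quantification over all preparations.
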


It is also possible to write down inequalities which must be
satisfied by the experimental statistics if these are to admit
of an explanation in terms of a measurement-noncontextual and
outcome-deterministic model.
 We will call these \emph{Kochen-Specker inequalities}.
For the case of the OS correlations, if we imagine such a model then each box must be either empty or full.  Consequently, if we choose a pair of boxes uniformly at random at most two of the three pairs could exhibit anti-correlation, so that the
probability of obtaining anti-correlated outcomes is bounded
above by $2/3.$   More precisely, if $p\left( X_{i}\neq
X_{i\oplus 1}|\mathcal{M}_{i,i\oplus 1};\mathcal{P}\right)$
denotes the probability of obtaining anti-correlated outcomes in
a joint measurement of $\mathcal{M}_{i}$ and
$\mathcal{M}_{i\oplus 1}$, where $\oplus$ denotes addition
modulo 3, then the average probability of success is
\begin{equation} \label{eq:NCboundonR3}
R_3  \equiv \sum_{i=1}^{3}\frac{1}{3}p\left( X_{i}\neq
X_{i\oplus 1}|\mathcal{M}_{i,i\oplus 1};\mathcal{P}\right) ,
\end{equation}
and it satisfies
\begin{equation}
R_3\le R^{NC}_3=\frac{2}{3}.
\end{equation}
This is a Kochen-Specker inequality.

It is sometimes useful to express Kochen-Specker inequalities in an algebraic
form. We define new variables $\bar{X}_i = (-1)^{X_i}$, so that $\bar{X}_i = +1(-1)$ when $X_i = 0(1)$.
Using angle brackets to denote averages, we consider the  following combination of  correlation functions
\begin{equation}\label{Eq:Dfn:S3}
    \S_3 \equiv \an{\bar{X}_1\bar{X}_2}+\an{\bar{X}_2\bar{X}_3}+\an{\bar{X}_3\bar{X}_1}.
\end{equation}
Then the inequality takes the form
\begin{equation}\label{Ineq:S3}
    \S_3\ge \S^{NC}_3  =-1,
\end{equation}
The OS correlations, however, require $\an{\bar{X}_i\bar{X}_{i\oplus 1}}=-1$
for all $i$ and hence $\S_3=-3$, clearly violating the bound.

With the correlations in this form, one can also express a proof of the impossibility of an outcome-deterministic noncontextual model in the algebraic manner introduced by Mermin \cite{Mermin2theorems}.  Assuming that $\bar{X}_i \in \{ +1,-1\}$ has a value independent of context, the OS correlations require that these values satisfy the following algebraic relations.
\begin{eqnarray}
\bar{X}_1 \bar{X}_2 = -1, \\ \nonumber
\bar{X}_1 \bar{X}_3 = -1, \\ \nonumber
\bar{X}_2 \bar{X}_3 = -1.
\end{eqnarray}
However, these relations cannot be satisfied because the
product of the left-hand-sides is $\bar{X}^2_1 \bar{X}^2_2
\bar{X}^2_3 = +1$, while the product of the right-hand-sides is
$-1$.

Any theory that realizes the OS correlations fails to admit of
a measurement-noncontextual and outcome-deterministic ontological model.
However, as explained in the introduction, the kind of
complementarity one requires to achieve these correlations --
three measurements that are pairwise but not triplewise jointly
measurable --- cannot arise for projective measurements in
quantum theory. In Sec.~\ref{sec:quantumcontextuality}, we turn to the
modifications of the parable that do have a counterpart in
quantum theory.

\subsection{Preparation noncontextuality}\label{sec:PNC}

The notion of measurement noncontextuality defined in Eq.~(\ref{eq:MNC}) is motivated by a kind of equivalence principle: in the absence of observable differences between measurement procedures (i.e. differences in their statistics) one should not posit differences in their representations in the ontological model.  In Ref.~\cite{Spe05} it was argued that the same principle should lead one to an assumption of noncontextuality for preparation procedures.  Specifically, an ontological model is said to be \emph{preparation noncontextual} if any two preparation procedures that are operationally equivalent [in the sense of Eq.
(\ref{eq:opeq4preps})] are represented equivalently in the model:
\begin{eqnarray}
\forall M:p(X|M;P)=p(X|M;P^{\prime }) \nonumber \\
\Rightarrow \forall \lambda:p(\lambda|P)=p(\lambda|P^{\prime }).  \label{eq:PNC}
\end{eqnarray}
Preparation noncontextuality can also be characterized as the condition that the distribution for a preparation
procedure $P$ depends only on its operational equivalence class $\mathcal{P},
$ that is,
\begin{equation}
p(\lambda|P )=p(\lambda|\mathcal{P} ).
\end{equation}

Given their similar motivations, someone who endorses measurement noncontextuality ought also to endorse preparation noncontextuality just as enthusiastically.  One should endorse both notions or neither.  Therefore, it is most natural to ask about the possibility of an ontological model that is both preparation-noncontextual and measurement-noncontextual.  We will call such models  \emph{generalized-noncontextual} \footnote{ In Ref~\cite{Spe05}, these were called ``universally noncontextual.''}.  In this paper, we will consider suitors faced with the seer's prediction problem
who are committed to the kind of equivalence principle described above and therefore to  generalized noncontextuality.

Inequalities that must be satisfied by the experimental statistics if these are to admit of a generalized-noncontextual model will be called simply \emph{noncontextuality inequalities}.  Note that our terminology distinguishes such inequalities from the Kochen-Specker inequalities of the previous section: Kochen-Specker inequalities express constraints on statistics when one assumes outcome determinism in addition to measurement noncontextuality, while noncontextuality inequalities rely on no such assumption of determinism.  An example of a noncontextuality inequality will be provided in Sec.~\ref{quantumprepcontextuality}.

\subsection{Justifying outcome determinism}\label{sec:JustifyingOD}

Note that a commitment to the kind of equivalence principle described above does not
obviously provide any grounds for assuming outcome determinism
for measurements, Eq.~\eqref{eq:OD}.  Thus, faced with the OS
correlations and corollary \ref{corollary:OScontextuality}, a
suitor might simply deny outcome determinism to salvage
measurement noncontextuality.  For instance, seeing the
correlations in the seer's prediction game, a clever suitor
might hypothesize that they are explained by the following sort
of model. There is an ontic variable that flags when the
preparation $\mathcal{P}_*$ was implemented and if it was, the
measurements $\mathcal{M}_{12}$, $\mathcal{M}_{13}$ and
$\mathcal{M}_{23}$ each generate the outcomes $(0,1)$ and
$(1,0)$ uniformly at random.  Such an ontological model would
violate outcome determinism, but would preserve measurement
noncontextuality.

On the other hand, the assumption of outcome determinism can sometimes be
shown to be a consequence of preparation noncontextuality.  If such a
justification is forthcoming, then the OS correlations cannot be explained by
any ontological model that is generalized-noncontextual. For instance, in quantum theory, the assumption of outcome
determinism for projective measurements can be derived from preparation
noncontextuality, as shown in Ref.~\cite{Spe05}.  Therefore, in quantum
theory the conjunction of measurement noncontextuality and outcome
determinism for projective measurements ---  i.e. the assumption of traditional
noncontextuality of an ontological model --- is implied by the assumption of  generalized
noncontextuality  and all the no-go theorems for the former are
no-go theorems for the latter. In Sec.~\ref{sec:quantumcontextuality}, we
will provide proofs of the failure of traditional noncontextuality in quantum
theory using a generalization of the OS correlations.  Given the result just
mentioned, such proofs also demonstrate the failure of  generalized
noncontextuality.

Much of this article makes statements about correlations that are not found in
quantum theory but can easily be imagined to occur in more general
operational theories.  In such theories, a natural analogue of the
notion of a projective measurement can be defined.
The question thus arises of whether the assumption of preparation
noncontextuality might imply outcome determinism for such measurements for
an ontological model of a general operational theory.  The question is
currently open, but we conjecture that it has a positive answer.

Fortunately, we can still draw some negative conclusions about  generalized noncontextuality  in operational theories without settling this conjecture.  Specifically, in Sec.~\ref{quantumprepcontextuality}, we will demonstrate how a slight modification of the seer's game yields a set of correlations that fails to admit of a preparation-noncontextual ontological model.

\section{No-go theorems for measurement-noncontextual and outcome-deterministic models}\label{sec:quantumcontextuality}

\subsection{A double-query $n$-box system allowing only adjacent queries}
One way to generalize Specker's parable is to consider $n>3$
boxes, and allow only certain pairs to be opened jointly.  In
particular, one can imagine the boxes to be arranged in a ring
with adjacent pairs being the only ones that can be opened
jointly.  The resulting pattern of joint measurability can be
reproduced in quantum theory because there exist ordered sets
of $n>3$ projectors for which adjacent elements commute (where adjacency is determined modulo $n$).  If $n$
is odd, then for every  deterministic and noncontextual  assignment of
gems to boxes that the suitor might make, there must exist at
least one adjacent pair of boxes that are either both full or
both empty. Indeed, given any assignment of gems to boxes, if
we choose an adjacent pair of boxes uniformly at random, the
probability of obtaining anti-correlated outcomes is bounded
above by $(n-1)/n$.  We then imagine that the seer has a
special system such that, regardless of which adjacent pair of
boxes is opened, it is always the case that one is found full
and the other empty.
\footnote{Note that for
\emph{even} $n$, one cannot develop an interesting parable
because there are assignments of gems to boxes wherein no
adjacent pair is both full or both empty.}  For these correlations, unlike those  described in the original parable, one \emph{can} find a quantum analogue.
Although this analogue does not allow the seer to always defeat the suitor's prediction,
the probability of finding perfect anti-correlation between a
pair of adjacent boxes can be greater than  the success rate of
$(n-1)/n$ expected by classical reasoning.

Let us consider this situation more carefully. We are imagining an odd number
$n\geq 5$ of measurements, $\{\mathcal{M}_a|a=1,...,n\}$, such that for all
$a$, $\mathcal{M}_a$ and $\mathcal{M}_{a\oplus 1}$ are jointly measurable by
a measurement $\mathcal{M}_{a,a\oplus1}$ (here $\oplus$ denotes addition
modulo $n$), and that there is at least one preparation, call it
$\mathcal{P_*}$, such that the outcomes of all of these pairs of measurements
are anti-correlated. By a generalization of the argument provided in Appendix~\ref{App:SpeckerCorrelations},
the correlations must be of the form
\begin{align}
\forall\, a: &p(X_a=0,X_{a\oplus 1}=1|\mathcal{M}_{a,a\oplus 1};\mathcal{P_*})=\frac{1}{2} \nonumber \\
&p(X_a=1,X_{a\oplus 1}=0|\mathcal{M}_{a,a\oplus 1};\mathcal{P_*})=\frac{1}{2}.
\end{align}
We will call these the \emph{double-query $n$-box OS correlations}.

By an argument analogous to the one proving lemma \ref{lemma:noJMforOS} one
can show that there is no joint distribution over all the $X_a$ that
reproduces these correlations as marginals.  It then follows from theorem
\ref{Thm:Fine} that there is no measurement-noncontextual and outcome-deterministic ontological model of these correlations.

Indeed, if we choose an adjacent pair of boxes uniformly at random, the probability $R_n$ of obtaining anti-correlated outcomes,
\begin{equation}
R_n\equiv \sum_{a=1}^{n}\frac{1}{n}
p\left( X_{a}\neq X_{a\oplus 1}|\mathcal{M}_{a},\mathcal{M}_{a \oplus 1};\mathcal{P}\right),
\end{equation}
is clearly bounded above,
\begin{equation} \label{Eq:NCBound4doublequerynbox}
R_n \leq 1-\frac{1}{n}
\end{equation}
(because at most $n-1$ pairs can be anti-correlated if $n$ is
odd).  The double-query $n$-box OS correlations yield $R_n=1$, maximally violating this Kochen-Specker inequality.

We may equivalently state the restriction as follows. Following the convention
established in Sec.~\ref{sec:defnNC}, we define
$\bar{X}_a = (-1)^{X_a} \in \{+1,-1\}$.
For all measurement-noncontextual and deterministic assignments of the
value $\bar{X}_a$,  at most $n-1$ of the pair-wise products can be $-1$,  so that:
\begin{equation}\label{Ineq:Noncontextual:Sn}
    \S_n \equiv \an{\bar{X}_1\bar{X}_2}+\an{\bar{X}_2\bar{X}_3}+\cdots+\an{\bar{X}_n\bar{X}_1}\ge-(n-2),
\end{equation}
whereas the double-query $n$-box OS correlations give $\S_n=-n$.

Again, a simple algebraic way of
manifesting  the fact that  the double-query $n$-box correlations do not admit of a measurement-noncontextual and outcome-deterministic model  is
that they require $\bar{X}_a \in \{ +1,-1\}$ such that
\begin{gather}
\bar{X}_1 \bar{X}_2 = -1, \nonumber\\
\bar{X}_2 \bar{X}_3 = -1, \nonumber \\
\vdots \nonumber \\
\bar{X}_{n-1} \bar{X}_n = -1, \nonumber \\
\bar{X}_n \bar{X}_1= -1,
\end{gather}
but the product of the left-hand-sides is $\bar{X}^2_1
\bar{X}^2_2 \cdots \bar{X}^2_n = +1$, while the product of the
right-hand-sides is $-1$.

We now consider what values of $R_n$ and $\S_n$ can be achieved in quantum theory.

\subsection{Klyachko's proof of the Kochen-Specker theorem}

We require $n$ Hermitian observables $\hat X_1, \dots,
\hat{X}_n$ --- each having eigenvalues 0 and 1 --- associated
with the $n$ measurements $\M_1, \ldots, \M_n$. As
discussed in Sec.~\ref{Sec:CC}, for the specific case of
$n=3$, the pairwise commutativity of $\hat X_1$, $\hat X_2$ and
$\hat X_3$ implies their triplewise commutativity and
consequently the existence of a triplewise joint measurement
and of a measurement-noncontextual and outcome-deterministic model.\footnote{Another way to see that the
Kochen-Specker inequality \eqref{Ineq:S3} cannot be violated
in quantum theory is that we can treat $\S_3$ as a polynomial
of commuting variables, and thus its minimum can be attained by
assigning the value to each variable in a noncontextual and deterministic
manner.}

Nonetheless, we can obtain something interesting for odd $n$
greater than 3. We begin with the case of $n=5$.  A no-go theorem of this sort has recently been given by Klyachko
\cite{Klyachko} (see also Refs.~\cite{A.A.Klyachko:PRL:020403}
and ~\cite{Guhne}).  The construction is as follows.  We
consider a quantum system described by a 3-dimensional
Hilbert space, and all of the states we consider
require only real-valued coefficients in some basis. Thus the system
can be visualized in 3-dimensional Euclidean space.
The observables are projectors
$\hat{X}_a=|l_a\rangle\!\langle l_a|$, where the vectors
$\{|l_a\rangle:a=1,...,5\}$ are of the form
\begin{equation} \label{eq:la}
|l_a\rangle = (\sin\theta \cos\varphi_a , \sin\theta \sin\varphi_a, \cos\theta),
\end{equation}
and $\varphi_a = \tfrac{4\pi a}{5}$
so that the sequence of vectors forms a pentagram, as in
Fig.~\ref{Fig:Pentagrams}.  The angle $\theta$ is chosen such
that vectors adjacent in the sequence are orthogonal, $\langle
l_a | l_{a\oplus1} \rangle =0$, where $\oplus$ denotes sum
modulo 5. As a result of this orthogonality relation, adjacent
observables $\hat X_a$, $\hat X_{a\oplus1}$ are indeed jointly
measurable. It is clear that such a value of $\theta$ exists
because as it varies from 0 to $\tfrac{\pi}{2}$, the angle
between adjacent vectors varies from 0 to $\frac{4\pi}{5}$.  In
fact, orthogonality is achieved at $\cos \theta =
\frac{1}{\sqrt[4]{5}}$.

\begin{figure}[h!]
    \begin{center}
    \includegraphics[width=7cm]{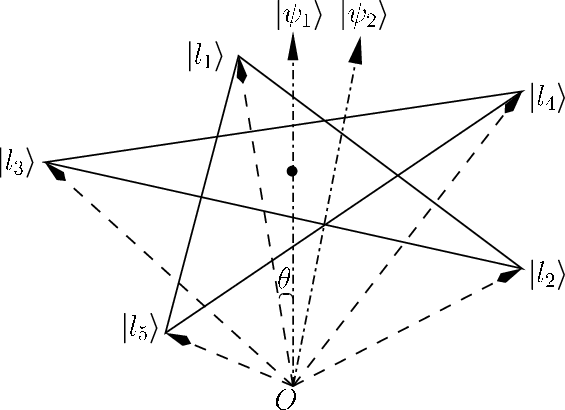}
     \end{center}
    \caption{\label{Fig:Pentagrams}
    Quantum states and observables used for Klyachko's proof of contextuality and
    the proof of contextuality via the failure of transitivity of implication.
    }
\end{figure}

Now consider a preparation of the quantum state $|\psi_1\rangle$
corresponding to the vector lying along the symmetry axis of the pentagram,
such that the angle between it and each of the $|l_a\rangle$ is $\theta$. In
a measurement of any adjacent pair of observables $\hat X_a$, $\hat
X_{a\oplus1}$, either just one of them yields the outcome 1, in which case
the outcomes are anti-correlated, or both yield the outcome 0. The probability
for anti-correlation is $ 2\cos^2(\theta)=2/\sqrt{5}$, which implies that
the Kochen-Specker bound of Eq.~(\ref{Eq:NCBound4doublequerynbox}) is
violated,
\begin{equation}
	R_5=R^{\textrm{quantum}}_5 =  \frac{2}{\sqrt{5}}\simeq  0.89442
	\nleq \frac{4}{5}.
\end{equation}
Equivalently, with the observables $\bar{X}_a=2\hat{X}_a-\unit$,
 where $\unit$ is the identity operator,  the
state $|\psi_1\rangle$ achieves $\S_5=5-4\sqrt{5}\approx -3.9443\not\ge-3$.

The value $2/\sqrt{5}$ is in fact the maximum possible quantum
violation of this Kochen-Specker inequality.  We show this in
Appendix~\ref{App:MaxQV:Klyachko} with the help from the
converging hierarchy of semidefinite programming (SDP) tools
discussed in Ref.~\cite{QMP.Hierarchy} [see also
Eq.~\eqref{Eq:MaxKlyachkoViolation:Prob:n} and
Eq.~\eqref{Eq:MaxKlyachkoViolation:n} below].

Note that unlike the no-coloring proofs of the Kochen-Specker
theorem, this is a \emph{state-specific} proof
\cite{Clifton,Cabellostatespecific}.\footnote{Note also that
the claim in Klyachko {\em et~al.}
\cite{A.A.Klyachko:PRL:020403} that the inequality in question
provides a ``test of arbitrary hidden variables model, context
free or not'' is mistaken. If the values assigned to
observables could be context-dependent, there would be no
contradiction.} In fact, for 3-dimensional quantum state, this
is a state-specific proof that involves the smallest set of
vectors $\{\ket{l_a}\}$ satisfying the orthogonality relation
$\langle l_a | l_{a\oplus1}
\rangle=0$~\cite{A.A.Klyachko:PRL:020403}.\footnote{The
2-vector and 3-vector cases are trivial. For the 4-vector case,
note that the orthogonality relation $\langle l_a |
l_{a\oplus1} \rangle=0$ implies that some pair of non-adjacent
vectors must be collinear which, in turn, implies that all the
four projectors $\proj{l_a}$ must commute, and therefore cannot
violate any Kochen-Specker inequality.}

\subsubsection{Generalization to all odd $n$.}
One can generalize Klyachko's no-go result to all odd $n$ as follows. Define $n$
observables by the projectors onto vectors $\{|l_a\rangle:a=1,...,n\}$
defined as in Eq.~(\ref{eq:la}) but with $\varphi_a = \tfrac{n-1}{n}\pi a$
and with $\theta$ chosen such that $\langle l_a | l_{a\oplus1}
\rangle =0$, where $\oplus$ denotes sum modulo $n$.  This is
achieved when $\cos^2\theta = \cos(\pi/n)/(1+\cos(\pi/n)).$
This set of $n$ vectors forms what is known as an $\{ n /
\frac{n-1}{2}\}$ star polygon \cite{Coxeter}.  The $\{5/2\}$,
$\{7/3\}$ and $\{9/4\}$ star polygons are depicted in
Fig.~\ref{Fig:StarPolygons}.  Again, preparing the quantum
state on the symmetry axis of the star polygon, the probability
of anti-correlation for adjacent observables violates the
Kochen-Specker bound of Eq.~(\ref{Eq:NCBound4doublequerynbox})
with
\begin{equation}\label{Eq:MaxKlyachkoViolation:Prob:n}
R_n=R_n^{\textrm{quantum}} =
\frac{2\cos(\tfrac{\pi}{n})}{{1+\cos(\tfrac{\pi}{n})}} \nleq 1-\frac{1}{n}
\end{equation}
or equivalently, the Kochen-Specker bound of
Eq.~\eqref{Ineq:Noncontextual:Sn} with
\begin{align}
\S_n=\S_n^{\textrm{quantum}} &= n-2nR_n^{\textrm{quantum}}\nonumber\\
\label{Eq:MaxKlyachkoViolation:n}
&=n-\frac{4n\cos(\tfrac{\pi}{n})}{1+\cos(\tfrac{\pi}{n})} \ngeq 2-n.
\end{align}
As with the $n=5$ case, these values also represent the
strongest possible quantum violation of these Kochen-Specker
inequalities, as is shown in
Appendix~\ref{App:MaxQV:Klyachko}. At large $n$, the quantum probability approaches
unity quadratically as
\begin{equation} \label{quadratic}
R_n^{\textrm{quantum}}  \simeq 1-\frac{\pi^2}{4n^2},
\end{equation}
in contrast to the linear approach to unity of the Kochen-Specker bound.

\begin{figure}[h!]
    \begin{center}
    \includegraphics[width=7cm]{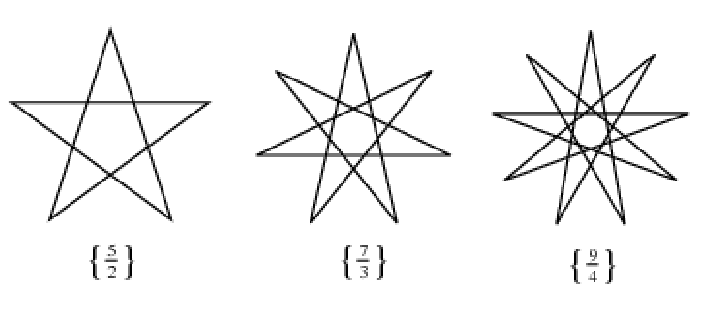}
   \end{center}
    \caption{\label{Fig:StarPolygons}
    $\{ n / \frac{n-1}{2}\}$ star polygons for $n=5,7,$ and $9$.}
\end{figure}

It is worth emphasizing that by using the quantum correlations for $n$ measurements, the seer can achieve something very close to the ends he achieved in the original parable. Specifically, the seer can construct a prediction game such that suitors who reason classically think the game is fair (i.e. they think it is highly likely that \emph{some} suitor will win) when in fact it is not (because classical reasoning does not apply and it is actually highly unlikely that any suitor will win).

The prediction game that meets the seer's ends is as follows.  The suitor is asked to pick an adjacent pair of boxes that he believes to be both empty or both full and to open those.  If his prediction for those two boxes is correct, the suitor wins, otherwise he loses.  With what probability will a suitor who reasons classically expect to win?  We presume that he knows the seer to be adversarial and so he reasons that the seer has prepared a classical configuration which makes his [the suitor's] task as difficult as possible.  He reasons therefore that the configuration is one wherein only \emph{one} adjacent pair of boxes is both full or both empty (by his classical lights, he knows that there must be at least one such pair for an odd number of boxes). Thus the suitor expects his probability of winning to be the probability that he has guessed correctly which of all the $n$ pairs is the correlated one, times the probability that he has guessed their contents correctly -- overall, a probability of $1/2n$.  In fact, the probability of the suitor's prediction coming true is only of order $1/n^2$ in the quantum scheme described above.  Let us say the number of suitors is $l$, assumed large. Then if the seer chooses the number of boxes  $n$ such that $n \ll l \ll n^2$,
the suitors believe it to be very likely that one of them will win when in fact it is very likely that none of them will win.

\subsection{A proof of the Kochen-Specker theorem based on the failure of transitivity of implication}\label{sec:variantKlyachko}

Specker's intent in introducing his parable was to demonstrate the logical possibility of a failure of the transitivity of implication.   The idea is straightforward.  Suppose $s_1$, $s_2$ and $s_3$ are propositions that assert the presence of a gem in boxes 1, 2 and 3 respectively, and $\lnot s_1$, $\lnot s_2$ and $\lnot s_3$ assert their negations.  We have $s_1 \implies \lnot s_2$ (because boxes $1$ and $2$ are never found both full), and $\lnot s_2 \implies s_3$ (because boxes $2$ and $3$ are never found both empty).  If implication were transitive, then we could conclude that $s_1 \implies s_3$.  But in fact we have  $s_1 \implies \lnot s_3$ (because boxes $1$ and $3$ are never found both full).  Therefore, assuming a gem is sometimes found in box 1, transitivity fails.

Specker's 1960 article was framed within the tradition of \emph{quantum logic}, and although some researchers have proposed that quantum theory might require us to abandon some of the rules of classical logic as rules of right-reasoning (see, for example, Ref.~\cite{I.Pitowsky:Book}), we will not consider this possibility here.  Indeed, if we incorporate the context of a measurement in the propositions, so that we distinguish $s_1$, finding a gem in box 1 in the context of measuring box 1 with box 2, from $s'_1$, finding a gem in box 1 in the context of measuring box 1 with box 3, then the transitivity of implication can be salvaged and there is no challenge to classical logic.

Nonetheless, the failure of the transitivity of implication provides another perspective on how to generate no-go results for measurement-noncontextual outcome-deterministic models. In such models, implications among value assignments of observables are necessarily transitive because these value assignments do not depend on the context of the measurement.  A failure of the transitivity of
implication therefore implies the impossibility of such a model.

In the case of the double-query $n$-box OS correlations, if $n$ is odd, the perfect
anti-correlations justify the following implications around the ring of boxes,
\begin{eqnarray}
X_1=1 \Longrightarrow X_2=0 \Longrightarrow X_3=1 \\ \nonumber
\Longrightarrow
... \Longrightarrow X_n=1 \Longrightarrow X_1=0.
\end{eqnarray}
By the transitivity of implication, we would conclude that
$X_1=1 \Longrightarrow X_1=0$.  Given that $X_1=1$ is sometimes observed, one has a contradiction.
Consequently, the observation of the double-query $n$-box OS
correlations implies the impossibility of a measurement-noncontextual
outcome-deterministic ontological model.

We now demonstrate the existence of a quantum analogue of this
argument in the case of $n=5$. Specifically, we demonstrate
that for the set of observables in Klyachko's proof, specified
in Eq.~(\ref{eq:la}) and depicted in Fig.~\ref{Fig:Pentagrams},
there is a quantum state such that
\begin{eqnarray} \label{Eq:5boxseqofimplications}
X_1=1 \Longrightarrow X_2=0 \Longrightarrow X_3=1 \\ \nonumber
\Longrightarrow
X_4=0 \Longrightarrow X_5=1 \Longrightarrow X_1=0.
\end{eqnarray}
First, note that an inference from $X_a=1$ to $X_{a\oplus 1}=0$ can be made
independently of the quantum state, because for any pair of orthogonal
projectors, at most one of them can take the value 1.  However, an inference
from $X_a=0$ to $X_{a\oplus 1}=1$ is only true for certain quantum states
because a pair of projectors may both be assigned the value 0.  To ensure
that $X_a=0$ implies $X_{a\oplus 1}=1$, we must choose a quantum state that
lies in the span of the vectors $|l_a\rangle$ and $|l_{a\oplus 1}\rangle$ in
Hilbert space.  This way, the vector orthogonal to this span is assigned
value 0, such that if $|l_a\rangle$ is assigned value 0, $|l_{a\oplus
1}\rangle$ must be assigned the value 1.  Starting with an assignment of
$X_1=1$, we need to make the $X_a=0$ to $X_{a\oplus 1}=1$ inference twice in
the pentagram: from $X_2=0$ to $X_3=1$ and from $X_4=0$ to $X_5=1$.
Consequently, we need a quantum state that lies in the subspace (plane)
spanned by $|l_2\rangle$ and $|l_3\rangle$ but also in the subspace spanned
by $|l_4\rangle$ and $|l_5\rangle$.  Fortunately, these subspaces intersect
on a ray (see Fig.~\ref{Fig:Pentagrams}), and therefore we take the quantum
state to be the one associated with that ray, indicated in
Fig.~~\ref{Fig:Pentagrams} as $|\psi_2\rangle$.

Therefore, assuming a preparation of the state
$|\psi_2\rangle$, we have the sequence of implications of
Eq.~(\ref{Eq:5boxseqofimplications}).  By the transitivity of
implication, we can conclude that $X_1=1 \Longrightarrow X_1=0$.
Given that $X_1=1$ is assigned non-zero probability by
$|\psi_2\rangle$, specifically,
$p=1-\tfrac{2}{\sqrt{5}}  \approx 0.1056$,  we have derived
a contradiction from the assumption of the transitivity of
implication, and therefore also from the assumption of an ontological model that is
measurement-noncontextual and outcome-deterministic for projective measurements
(i.e. traditionally-noncontextual)
 \footnote{A slightly different way of seeing the
contradiction is that transitivity of implication specifies
that $X_1=1$ implies $X_5=1$, whereas by a joint measurement of
$X_1$ and $X_5$, we would infer that $X_1=1$ implies $X_5=0$.}.

 This is a proof of the Kochen-Specker theorem which is analogous to Hardy's proof of
Bell's theorem, described in Sec.~\ref{Sec:Hardy}.  Interestingly,
it is not possible to generalize this type of proof to the case
of $n>5$ using a set of vectors that form an $\left\{ n /
\frac{n-1}{2}\right\}$ star polygon. For instance, in the case
of $n=7$, if we start with $X_1=1$, in order to make the
inference from $X_2=0$ to $X_3=1$, from $X_4=0$ to $X_5=1$ and
from $X_6=0$ to $X_7=1$, the quantum state would have to lie in
each of the following three subspaces: the one spanned by
$|l_2\rangle$ and $|l_3\rangle$, the one spanned by
$|l_4\rangle$ and $|l_5\rangle$ and the one spanned by
$|l_6\rangle$ and $|l_7\rangle$. But although any pair of these
subspaces intersect along a ray, the three do not, so there is
no quantum state that does the job.

The state-specific Kochen-Specker proof we have just presented turns out to be
related to Clifton's 8-ray Kochen-Specker proof \cite{Clifton}. The latter
makes use of the famous 8-vertex subgraph of the original 117-vertex
Kochen-Specker proof \cite{KS67}. Clifton's proof also has an interesting connection with the
pre and post-selection effect known as the ``three-box paradox''~\cite{3box}, as shown in Ref.~\cite{LS05}.
A connection between Klyachko's Kochen-Specker proof and the 8-ray proof (as well as Hardy's nonlocality proof) has also been noted previously in Ref.~\cite{BBCGL09}.

To see how our proof is related to Clifton's, let us denote the vector orthogonal to the span of $\left\vert l_{2}%
\right\rangle $ and $\left\vert l_{3}\right\rangle $ by $\left\vert
\chi\right\rangle $ and the one orthogonal to the span of $\left\vert
l_{4}\right\rangle $ and $\left\vert l_{5}\right\rangle $ by $\left\vert
\chi^{\prime}\right\rangle ,$ then the orthogonality relations of the eight
vectors $\{\left\vert l_{1}\right\rangle ,\left\vert l_{2}\right\rangle
,\left\vert l_{3}\right\rangle ,\left\vert l_{4}\right\rangle ,\left\vert
l_{5}\right\rangle ,\left\vert \chi\right\rangle ,\left\vert \chi^{\prime
}\right\rangle ,\left\vert \psi_{2}\right\rangle \}$ are summarized by the
diagram in Fig.~\ref{Fig:Clifton8ray} (where nodes represent rays and the presence of an edge
represents orthogonality). \ In an outcome-deterministic
measurement-noncontextual model, every vector must receive a value 0 or 1 with
exactly one member of every orthogonal triple receiving the value 1, and no
more than one member of an orthogonal pair receiving the value 1.

\begin{figure}[h!]
     \begin{center}
   \includegraphics[width=8cm]{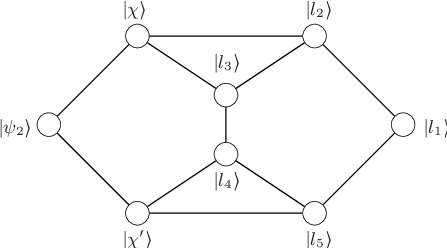}
    \end{center}
    \caption{\label{Fig:Clifton8ray}
    Clifton's 8-ray state-specific Kochen-Specker proof.}
\end{figure}

Clifton's proof can then be phrased as follows.  Given a preparation of $\left\vert \psi_{2}\right\rangle$, the vector $\left\vert \psi_{2}\right\rangle$ (considered as a measurement outcome) must be assigned the value 1 and the vector $\left\vert l_{1}\right\rangle $ has a nonzero probability of being assigned the value 1.
We denote the value assigned to vector $|\phi\rangle$ by $v(|\phi\rangle)$.  From $v\left(  \left\vert \psi_{2}\right\rangle \right)  =1$
we infer $v\left(  \left\vert \chi\right\rangle \right)  =v\left(  \left\vert
\chi^{\prime}\right\rangle \right)  =0$ and from $v\left(  \left\vert
l_{1}\right\rangle \right)  =1$ (which happens with nonzero probability) we infer $v\left(  \left\vert l_{2}%
\right\rangle \right)  =v\left(  \left\vert l_{5}\right\rangle \right)  =0.$
\ One then concludes from $v\left(  \left\vert \chi\right\rangle \right)  =0$
and $v\left(  \left\vert l_{2}\right\rangle \right)  =0$ that $v\left(
\left\vert l_{3}\right\rangle \right)  =1,$ and from $v\left(  \left\vert
\chi^{\prime}\right\rangle \right)  =0$ and $v\left(  \left\vert
l_{5}\right\rangle \right)  =0$ that $v\left(  \left\vert l_{4}\right\rangle
\right)  =1$. \ However, $v\left(  \left\vert l_{3}\right\rangle \right)  =1$
and $v\left(  \left\vert l_{4}\right\rangle \right)  =1$ is a contradiction.

This is the standard way of deriving a contradiction for the eight rays in Clifton's proof, however one could equally well use the fact that $v\left(  \left\vert
\chi\right\rangle \right)  =v\left(  \left\vert \chi^{\prime}\right\rangle
\right)  =0$ and $v\left(  \left\vert l_{1}\right\rangle \right)  =1$ to
justify anticorrelation across every edge around the ring \ $\{\left\vert
l_{1}\right\rangle ,\left\vert l_{2}\right\rangle ,\left\vert l_{3}%
\right\rangle ,\left\vert l_{4}\right\rangle ,\left\vert l_{5}\right\rangle
\}$, which is just the proof we have presented above.

\section{No-go theorems for Bell-local models}\label{sec:separated}

\subsection{A separated pair of single-query 3-box systems} \label{sec:separatedpair3box}

In this section, we consider another variation on Specker's parable.  The seer has a
novel 3-box system which allows only a single box to be opened, rather than
two.  To distinguish the two types of three-box systems, we call the former a
\emph{single-query} system and the latter a \emph{double-query} system. We
also assume that the seer can prepare a pair of single-query systems that
mimic the behavior of the double-query system as follows: if the same\ box
is opened on one system as is opened on the other, one obtains the same
result (both are always found to be full, or both empty); if different boxes
are opened, then one obtains different results (one is always full and the
other empty).  For the benefit of skeptical suitors, the seer allows
for the queries of the two different systems to be implemented at space-like
separation. We imagine that they are transported to different corners of
the Assyrian empire: one to \Alice and the other to Babylon. The suitor
dispatches two of his trusted classmates, one to each of these two cities, and instructs them
to choose a box at random.

We are therefore imagining a situation wherein two observables are measured
jointly by first preparing a pair of systems in a perfectly
correlated state and measuring one observable on each.

As we will demonstrate below, this version of the Specker parable allows us
to establish a simple proof of nonlocality in the same spirit as that
presented by Mermin in Ref.~\cite{Mermin85}. Let us denote the choices made
by the two class-mates by $a$ and $b$ respectively, taking values in the set
$\{1,2,3\}$, corresponding to the choice of box. Further, we denote the
results of box $a$ at \Alice and box $b$ at Babylon, respectively, by $A_a$
and $B_b$,  taking values in $\{1, 0\}$ corresponding to the observations
$\{$full, empty$\}$. Then we can express the condition that the outcomes must
satisfy in this two-wing version of the Specker parable as
\begin{equation}\label{Eq:WinningStrategy}
    A_a\oplus B_b=1\oplus \delta _{a,b},
\end{equation}%
where $\delta $ denotes the Kronecker delta function. To quantify the extent
to which these correlations are realized, let us define $R_3$ as the weighted sum, assuming $a$ and $b$ are chosen uniformly at random, of the probability of
achieving perfect negative correlation when $a\neq b$, and the probability of
achieving perfect correlation when $a=b$.  That is,
\begin{align}
    R_3 \equiv &\frac{1}{9}\sum_{a, b: b\neq a}p(A_a\neq B_b|\M_a, \M_b; \P)
    \nonumber\\
    +&\frac{1}{9}\sum_{a, b : b=a}p(A_a=B_b|\M_a,\M_b; \P),
     \label{eq:winprobOST}
\end{align}
where $p(A_a=B_b|\M_a,\M_b;\P)$ refers to the probability of finding
$A_a=B_b$ conditioned on box $a$ being opened at \Alice and box $b$ being
opened at Babylon; likewise for $p(A_a\neq B_b|\M_a,\M_b;\P)$.

The OS correlations described in the two-wing Specker parable can be summarized as
\begin{align} \label{Eq:SpeckerCorrelations}
    \sum_{A_a,B_b: B_b=A_a} p(A_a,B_b |\M_a,\M_b; \P)=1\quad\text{for}\,a=b,\nonumber\\
    \sum_{A_a,B_b: B_b\neq A_a} p(A_a,B_b |\M_a,\M_b; \P)=1\quad\text{for}\,a\neq b,
\end{align}
The assumption that $\M_a$ and $\M_b$ are jointly measurable in the sense of
definition~(\ref{Dfn:JointMeasurability}) implies that they must satisfy a
condition of no superluminal signaling~\cite{PR94,J.Barrett:PRA:2005}, namely,
\begin{align}
p(A_a|\M_a;\P)=\sum_{B_b}p(A_a,B_b |\M_a,\M_b;\P)\quad\forall\,b,\nonumber\\
\label{eq:non-signaling}
p(B_b|\M_b;\P)=\sum_{A_a}p(A_a,B_b |\M_a,\M_b;\P)\quad\forall\,a
\end{align}
which asserts that the conditional marginal probabilities
$p(A_a|\M_a;\P)$ obtained by summing over $B_b$ are independent
of the choice of the distant measurement procedure $\M_b$, and
likewise for $p(B_b|\M_b;\P)$.  It is simple to show, as we do
in Appendix \ref{App:Correlations:OS2}, that by imposing the
no-signaling condition, the correlations are constrained to be
of the following form:
\begin{align}
    \forall\, a\ne b: &p(0,1|\mathcal{M}_a,\mathcal{M}_{b},
    \mathcal{P_*})=\frac{1}{2},\nonumber \\
    &p(1,0|\mathcal{M}_a,\mathcal{M}_{b};\mathcal{P_*})
    =\frac{1}{2}, \nonumber \\
    \forall\, a= b: &p(0,0|\mathcal{M}_a,\mathcal{M}_{b},
    \mathcal{P_*})=\frac{1}{2},\nonumber \\
    &p(1,1|\mathcal{M}_a,\mathcal{M}_{b};\mathcal{P_*})
    =\frac{1}{2}.
    \label{Eq:SpeckerCorrelations:Explicit}
\end{align}
We will henceforth call these the nonlocal OS correlations. The
winning probability for Eq.~(\ref{eq:winprobOST}) is unity for
these correlations, i.e.,
\begin{equation}
	R^{\textrm{NLOS}}_3=1.
\end{equation}
They are the only non-signaling correlations that can win this
prediction game deterministically. This implies, in particular,
that the nonlocal OS correlations represent an {\em extreme
point} of the convex set of non-signaling
correlations~\cite{N.Jones:PRA:2005}, very much like the
archetypical PR box\footnote{The terminology ``box'' is, in the present
circumstances, unfortunate. It refers to a ``black-box'' (i.e.
unexplained --- indeed, inexplicable --- source of
correlations) between two distant parties, just as in our above
scenario.} correlations~\cite{PR94}
for the scenario where $a,b$ only run from 1 to 2.
 Although these correlations do not allow for
superluminal signaling, they do violate Bell's assumption of
local causality \cite{Belllocalcausality}, as we now
demonstrate.

In order to enforce perfect positive correlations when the suitor's two
classmates make the same measurement, the Babylonian system must be prepared
with an answer for each possible query that matches the answer that the
\Aliceian system is prepared to provide. It follows that there are
deterministic noncontextual hidden variables determining the outcome on the
Babylonian system. This step is familiar from Bell's original derivation of
his theorem~\cite{Bell64}: locality together with the assumption of perfect
correlations implies the existence of deterministic noncontextual values for
each system.
Given such values, it is easy to see \footnote{For instance, if the values assigned to the
state of the boxes 1,2, and 3 are $0,0,1,$ then we have positive correlation for $(a,b)\in\{(1,1),(2,2),(3,3)\}$ and negative correlation for $(a, b)\in \left\{ (1,3),(3,1),(2,3),(3,2)\right\}$, in accordance with the nonlocal OS correlations.  However, we also have positive correlation for $(a, b)\in \left\{ (1,2),(2,1)\right\} $ in disagreement with the nonlocal OS correlations.  Thus the correlations are correct in only 7 out of 9 cases.
Alternative strategies that do not enforce the perfect correlation when $a=b$ cannot do
any better.  For instance, if all boxes on the left are empty and all boxes
on the right are full, then one has 6 out of 6 anti-correlations but 0 out of
3 correlations, leaving a winning probability of 2/3, which is smaller than 7/9.}
that the overall probability of winning the game in a locally causal model is at most
$\tfrac{7}{9}$. That is,
\begin{equation}\label{Ineq:BI:GameTheoretic}
   R_3\le R^{\text{local}}_3= \frac{7}{9}.
\end{equation}%
This is a Bell inequality. The fact that  $R^{\mbox{\tiny
NLOS}}_3=1$ for the seer's system is a violation of this Bell
inequality and a proof that no locally causal model of the nonlocal OS
correlations is possible.

The Bell inequality~\eqref{Ineq:BI:GameTheoretic} can also be written in terms
of the more conventional correlation function, or the so-called two-party
correlator $\langle \bar{A}_a \bar{B}_b \rangle$,  where $\bar{A}_a$, $\bar{B}_b$ take on values
$\{+1,-1\}$ as usual:
\begin{align}
    \langle \bar{A}_a \bar{B}_b \rangle&=\sum_{\bar{A}_a,\bar{B}_b: \bar{B}_b
    =\bar{A}_a} p(\bar{A}_a,\bar{B}_b |\M_a,\M_b;\P)\nonumber\\
    &-\sum_{\bar{A}_a,\bar{B}_b: \bar{B}_b\neq \bar{A}_a} p(\bar{A}_a,\bar{B}_b |\M_a,\M_b;\P)
    \label{Eq:Dfn:Correlator}
\end{align}
The two-party correlator is simply the average value of the product of the
result in \Alice when box $a$ was chosen, multiplied by the result in Babylon
when box $b$ was chosen. Together with the normalization condition
$\sum_{\bar{A}_a,\bar{B}_b} p(\bar{A}_a,\bar{B}_b |\M_a,\M_b, \P)=1$, we can
now re-express the winning probability as
$R_3=\tfrac{1}{18}\mathcal{S}_3+\half$, where
\begin{equation}\label{Eq:BellFunction}
    \mathcal{S}_3 = \sum_{a} \an{ \bar{A}_a \bar{B}_a} -\sum_{a\neq b} \an{\bar{A}_a\bar{B}_b}.
\end{equation}
In these notations, it is again easy to verify that if the
variables $\bar{A}_a$, $\bar{B}_b$ admit pre-existing values $\pm1$ (i.e.
are determined by hidden variables), then $\S_3 \leq 5$ [{\em cf.}
Eq.~\eqref{Ineq:BI:GameTheoretic}]. (As is now
well-known~\cite{Fine82}, the same bound also applies to any
locally causal model where the values of the variables are
determined by stochastic hidden variable models.) Specker's correlations require $\S_3=9$, thus clearly
violating the Bell-inequality.

\subsection{Mermin's proof of Bell's theorem} \label{Sec:Mermin}

What about correlations allowed in quantum theory? We know from
a celebrated theorem by Cleve {\em et~al.} (Theorem 5.12,
Ref.~\cite{Cleve04}) that there is no quantum strategy that can
give unit winning probability. While it is not possible to
realize the over-protective seer parable as formulated above,
it is nevertheless possible to demonstrate, using quantum
mechanics, correlations that approximate the desired
correlations better than any locally causal model can. As it
turns out, the largest winning probability allowed by quantum
theory is  (see Appendix~\ref{App:MaxQV:Bell:n} for details)
\begin{equation}
    R^{\text{quantum}}_3 = \frac{5}{6},
\end{equation}%
and hence $\S^{\text{quantum}}_3 = 6$ (which exceed the Bell-local bounds of
$\frac{7}{9}$ and $5$ respectively). That quantum theory allows such
non-trivial correlations can be verified by considering the two-qubit
maximally entangled state $\tfrac{1}{\sqrt{2}}(\ket{0}\ket{0}+\ket{1}\ket{1})$ (in the
$\s{z}$ basis) and letting $\bar{A}_1$, $\bar{A}_2$, and $\bar{A}_3$ be the
results of measuring the three Pauli operators equally spaced in the
$\hat{z}$-$\hat{x}$ plane, defined by
\begin{equation} \label{eq:3Paulis}
    \hat{A}_a=\cos \frac{2\pi(a-1)}{3} \s{z} + \sin\frac{2\pi(a-1)}{3}\s{x};
\end{equation}
 likewise for the $\hat{B}_b$, which are defined identically.
Thus quantum mechanics allows us to move
towards the extremal non-local correlations in our
formulation of the parable.   This proof that quantum theory violates Bell-locality (Bell's theorem) is in fact the one popularized by David Mermin \cite{Mermin85}.

\subsubsection{Generalization to all odd $n$}

It is straightforward to generalize this new parable to the case of $n$ boxes
for all odd $n\geq 5$. Specifically, posit a separated pair of $n$-box rings
such that if the box that is opened in \Alice is the same as the one opened
in Babylon, the outcomes agree, while if the index $a$ of the box opened in
\Alice differs by 1 from the index $b$ of the one opened in Babylon, that is,
if  $b=a\oplus 1$ or $a=b\oplus 1$,  then the outcomes disagree.  As it turns
out, the correlations must be of the form\footnote{The proof of this proceeds
analogously to the one given in Appendix~\ref{App:Correlations:OS2} for the
specific case of $n=3$.}
\begin{align}\label{Eq:nSpeckerCorrelations:Explicit}
    b= a\oplus 1\textrm{ or }\;a= b\oplus 1:
    &p(1,0 |\M_a,\M_b;\P)=\half \nonumber \\
    &p(0,1 |\M_a,\M_b;\P)=\half \nonumber \\
    b=a:
    &p(0,0 |\M_a,\M_b;\P)=\half \nonumber \\
    &p(1,1 |\M_a,\M_b;\P)=\half .
\end{align}
We do not specify the nature of the correlation for other values of $a$ and $b$.

We can define the average probability of success as
\begin{align}\label{Eq:Dfn:WinningProb:TwoWing}
    R_n \equiv &\frac{1}{3n} \sum_{a,b:b=a\oplus 1 \textrm{ or } a=b\oplus1}
    p(A_a\ne B_b|\M_a,\M_b;\mathcal{P})\nonumber \\
    +&\frac{1}{3n} \sum_{a,b:b=a} p(A_a=B_b|\M_a,\M_b;\mathcal{P})  .
\end{align}
It is evident that with a local strategy if one has perfect
correlation when $a=b$, then when $a=b\oplus 1$ or $b=a\oplus
1$, one can have perfect anti-correlation with probability at
most $(n-1)/n$.  Furthermore, no local strategy can do any
better than this. Consequently, given that the conditions $a=b$, $a=b\oplus 1$, and $b=a\oplus
1$ arise with probability $1/3$ each, the winning probability with a
local strategy is upper bounded by~\footnote{It is worth noting
that none of the following Bell inequalities is facet-inducing
(following the terminology of Ref.~\cite{D.Avis:JPA:2005}), or
tight (following the terminology of
Ref.~\cite{Ll.Masanes:QIC:2003,D.Collins:JPA:2004}). That is,
they do not correspond to the boundary of the set of
locally causal correlations with maximal dimension. }
\begin{equation}
R_n\le R_n^{\textrm{local}}=\frac{1}{3}+\frac{2}{3}\frac{n-1}{n}
= 1 - \frac{2}{3n}.
\end{equation}

Quantum theory can violate this inequality.  Using the same entangled state
as above, we generalize Eq.~(\ref{eq:3Paulis}) to
\begin{eqnarray} \label{eq:spinoperators}
    \hat{A}_a=\cos \varphi_a \s{z} + \sin\varphi_a \s{x}, \nonumber \\
\end{eqnarray}
where $\varphi_a = \tfrac{n-1}{n}\pi (a-1)$, and likewise for
the $\hat{B}_b$ (Fig.~\ref{Fig:GeneralizationOfMermin}). There are $3n$ kinds of measurement statistics
that appear in $R_n$.  We consider each in turn.  For the $n$
terms wherein $a=b$, we obtain perfect correlation with
probability 1, while for the $n$ terms wherein $a=b\oplus 1$
and the $n$ terms wherein $b=a\oplus 1$, we obtain
anti-correlated outcomes with probability $\cos^2(\pi/2n)$.  In
all then, we find the corresponding probability of success as:
\begin{align}
R_n^{\textrm{quantum}}=\frac{1}{3}+\frac{2}{3}\cos^2\frac{\pi}{2n}
\simeq 1-\frac{\pi^2}{6n^2}.
\label{Eq:MaxQuantumRn}
\end{align}
Once again, for a large number of suitors, the seer can choose $n$, the number of measurement settings, to
ensure that with very high probability all of the suitors will lose,
despite their classically founded expectation that one of their number is very likely to win.

\begin{figure}[h!]
    \begin{center}
    \includegraphics[width=7cm]{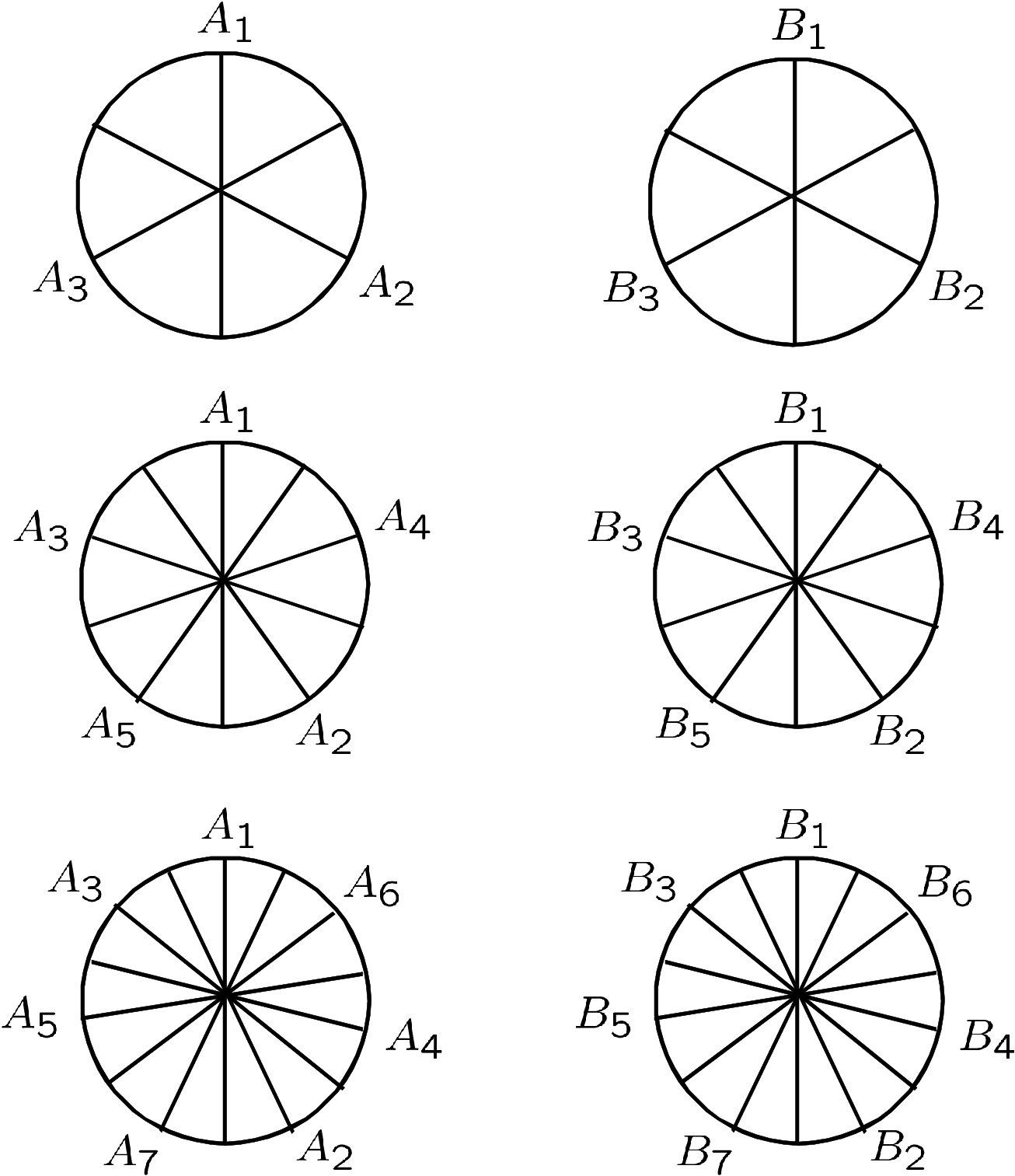}
     \end{center}
    \caption{\label{Fig:GeneralizationOfMermin}
    Representation of the observables used in the Mermin-type
    proofs of nonlocality for $n$=3, 5 and $7$. The observables are
    depicted by lines in a plane of the Bloch sphere.  The vertices
    of each line correspond to the eigenvectors of the associated
    observable, with the labeled vertex associated with eigenvalue
    1.}
\end{figure}

\subsection{Connection to previous work}

An analogous game is discussed by Vaidman~\cite{Vaidman} who
considers a slightly different narrative device: a necklace
having an even number $n$ of beads each of which can be one of
two colors and such that one finds all adjacent beads to be of
different colors except for the first and last beads which are
of the same color.  It is clear that by replacing the first
and last beads by a single bead, we have precisely the
correlations considered above.  Another variation of the game
was considered by Braunstein and Caves \cite{BraunsteinCaves}: there,
perfect correlation is required for all adjacent pairs of measurements except
that between the first and the last, in which case perfect anti-correlation is
 required. This game gives rise to the so-called
 ``chained Bell inequalities"~\cite{BraunsteinCaves}.

The problem of maximizing the winning probability $R_n$ is also relevant to
the strength of a two-prover interactive proof system of the type described
by Cleve~\emph{et al.}~\cite{Cleve04}.   The two provers are taken to be two
agents of the seer (one sent to \Alice and the other to Babylon), while the
suitor is the verifier. The provers' task is to convince the verifier that a
cyclic graph with an odd number $n$ of vertices is 2-colorable (despite the
fact that it is not).  The verifier sends the name of a vertex to each prover
such that the two vertices are either the same or adjacent.  The provers, who
cannot communicate with one another, must each respond with a color. The
existence of systems generating the nonlocal OS correlations would provide
the provers with a perfect winning strategy.

Cleve \emph{et al.} have analyzed a two-player interactive proof, called the
odd cycle game, which is very similar to the one we consider here.  The odd
cycle game is another natural generalization of Specker's parable to a pair
of systems where for a given measurement on the \Aliceian system, there are
two rather than three options for the measurement on the Babylonian system:
it is the same, i.e. $b=a$, or it has index one higher, i.e. $b=a\oplus 1$.
The possibility of $a=b\oplus 1$, which is allowed in the game we have
considered, and ensures symmetry between the two players, is excluded in the
odd cycle game. \footnote{The upper bound on the
winning probability with a local strategy for the odd cycle game is clearly
$R_n^{\textrm{local}} \le \frac{1}{2}+\frac{1}{2}\frac{n-1}{n} = 1 -
\frac{1}{2n}.$ The maximal quantum violation, which is determined in
Ref.~\cite{Cleve04}, is achieved if the measurements on Alice's system are
the spin operators $A_a$ in Eq.~(\ref{eq:spinoperators}), while the
measurements on Bob's system are a rotation by an angle of $\pi/4n$ of the
spin operators $B_b$ in Eq.~(\ref{eq:spinoperators}).  In this case, for the
$n$ terms wherein $a=b$, we have correlation with probability
$\cos^2(\pi/4n)$, and for the $n$ terms wherein $b=a\oplus 1$, we have
anti-correlation with probability $\cos^2(\pi/4n)$, such that
$R_n^{\textrm{quantum}}=\cos^2(\pi/4n) \simeq 1-\frac{\pi^2}{16n^2}.$}

\subsection{From OS correlations to PR-box correlations}
Another way of generalizing the single-query 3-box OS
correlations to a separated pair of parties is to imagine that
each party has a 3-box system, but the first party only ever
opens the first or second box, while the second party only ever
opens the second or third box.  If we imagine that there is
correlation when they both open the second box and
anti-correlation otherwise, then this set of measurements is
already sufficient to obtain a contradiction with a local
model,  Specifically, the local deterministic values must
satisfy
\begin{eqnarray} \label{eq:PRbox}
\bar{A}_2 \bar{B}_2 = +1, \\ \nonumber
\bar{A}_1 \bar{B}_2 = -1, \\ \nonumber
\bar{A}_1 \bar{B}_3 = -1, \\ \nonumber
\bar{A}_2 \bar{B}_3 = -1,
\end{eqnarray}
but the product of the left-hand-sides is $\bar{A}^2_1 \bar{A}^2_2
\bar{B}^2_1 \bar{B}^2_2 = +1$, while the product of the right-hand-sides is
$-1$.  The correlations of Eq.~\eqref{eq:PRbox} are precisely the PR box
correlations \cite{PR94} that have been extensively studied in recent years.

\subsection{Hardy-type no-go theorems for Bell-local models} \label{Sec:Hardy}

In outcome-deterministic ontological models that are local or noncontextual, implications among
value assignments of observables are transitive because these
value assignments do not depend on the context (local or
remote) of the measurement.  The failure of the transitivity of
implication therefore implies the impossibility of such models.  Again, we find that this conclusion has
been reached before in the literature on nonlocality.
Specifically, the Hardy-type proof of
nonlocality~\cite{L.Hardy:PRL:1665} can be
expressed in this fashion~\cite{D.Boschi:PRL:2755}, a fact that
was first noted by Stapp~\cite{H.Stapp:Hardy:Transitivity} (for
a simplified account, see Refs.~\cite{Mermin,Unruh}).

We begin by presenting Hardy's proof of nonlocality in its standard form. \
It uses a pair of binary-outcome observables on each wing of the experiment.
Hardy demonstrated a way of choosing these observables  such that
for any partially entangled pure state,
the correlations between these observables satisfy:%
\begin{eqnarray}
A_{1} &=&1\implies B_{1}=1,  \label{Eq:H1b} \\
B_{2} &=&1\implies A_{2}=1,  \label{Eq:H2b}
\end{eqnarray}%
while%
\begin{equation}
\text{sometimes }\left( A_{1}=1\text{ and }B_{2}=1\right)   \label{Eq:H3b}
\end{equation}%
(i.e. with probability $p_{\mathrm{Hardy}}\equiv p(A_{1}=1$ and $B_{2}=1)>0$%
), and%
\begin{equation}
\text{never }\left( A_{2}=1\text{ and }B_{1}=1\right) .\text{ }
\label{Eq:H4b}
\end{equation}%
\strut

We can express this as a failure of the transitivity of implication as
follows. From Eqs.~(\ref{Eq:H1b}), (\ref{Eq:H4b}) and (\ref{Eq:H2b}) (in its contrapositive form), we infer respectively,
\begin{eqnarray}
A_{1}=1 \implies B_{1}=1,  \label{Eq:HT1} \\
B_{1}=1 \implies A_{2}=0,  \label{Eq:HT2} \\
A_{2}=0 \implies B_{2}=0.  \label{Eq:HT3}
\end{eqnarray}
which we summarize graphically by
\begin{align*}
A_1=1&\Longrightarrow B_1=1\\
& \scalebox{1.05}{ \rotatebox{45}{$\Longleftarrow$}}\\
A_2=0&\Longrightarrow B_2=0
\end{align*}
If transitivity held, then these three inferences would imply that%
\begin{equation}
A_{1}=1 \implies B_{2}=0.  \label{Eq:HT4}
\end{equation}%
However, this contradicts Eq.~(\ref{Eq:H3b}) and consequently transitivity
must fail. \ More explicitly, taking $\implies$ to be material
implication, the negation of Eq.~(\ref{Eq:HT4}) is the conjunction of $%
A_{1}=1$ and $B_{2}=1$,%
\begin{equation}
\lnot \left( A_{1}=1\implies B_{2}=0\right) =\left( A_{1}=1\text{ and }%
B_{2}=1\right) ,
\end{equation}%
so that the probability $p_{\mathrm{Hardy}}\equiv p(A_{1}=1$ and $B_{2}=1)$
quantifies the frequency with which the transitivity of implication fails.

We now consider the status of this sort of proof for the PR box.  By
relabeling the outcomes of the standard PR box, one can obtain correlations
of the form%
\begin{eqnarray}
A_{1} &=&B_{1}  \label{Eq:PR1} \\
A_{1} &=&B_{2}  \label{Eq:PR2} \\
A_{2} &=&B_{1}\oplus 1  \label{Eq:PR3} \\
A_{2} &=&B_{2},  \label{Eq:PR4}
\end{eqnarray}
with marginals of the form $p(A_{1}=0)=p\left( A_{2}=0\right)
=p(B_{1}=0)=p\left( B_{2}=0\right) =1/2.$ \ Eqs.~(\ref{Eq:PR1}), (\ref{Eq:PR3}) and (\ref{Eq:PR4}) imply the inferences of Eqs.~(\ref{Eq:HT1}), (\ref{Eq:HT2}), and (\ref{Eq:HT3}) respectively.  Meanwhile, Eq.~(\ref{Eq:PR2}), together with the fact that $p(A_{1}=1)=1/2,$ implies that
sometimes $A_{1}=1$ and $B_{2}=1$, or equivalently, that sometimes Eq.~(\ref%
{Eq:HT4}) fails, so that we have a contradiction with transitivity. \
Indeed, the probability of this occurring is $p_{\mathrm{Hardy}}=p(A_{1}=1$
and $B_{2}=1)=1/2.$ \

Actually, $p_{\mathrm{Hardy}}$ only quantifies the probability for one
particular kind of contradiction, which requires $A_{1}=1$ to get going. In the rest
of the cases, where $A_{1}=0,$ we still obtain a contradiction because Eqs.~(\ref{Eq:PR1}), (\ref{Eq:PR3}) and (\ref{Eq:PR4}) also imply inferences of
the form of Eqs.~(\ref{Eq:HT1}), (\ref{Eq:HT2}), and (\ref{Eq:HT3}) where $%
A_{a} \Leftrightarrow A_{a}\oplus 1$ and $B_{b}\Leftrightarrow B_{b}\oplus 1.$
Transitivity then implies that $A_{1}=0\implies B_{2}=1,$ while Eq.~(\ref%
{Eq:PR2}) contradicts this. \ So one obtains a contradiction with certainty
for the PR box.

There is another aspect of these PR box implications that cannot be emulated by
quantum theory which has recently been pointed out by Fritz~\cite{Fritz}: if one
 supplements the implications in Eqs.~(\ref{Eq:HT1})-(\ref{Eq:HT3}) with
 the implication $B_2=1\implies A_1  =1$ or any of the two reverse implications, that is,  $A_{2}=0 \implies B_{1}=1$, or $B_{2}=0 \implies A_{2}=0$, then the resulting set of constraints cannot be satisfied by any quantum state and set of projective measurements.

As discussed in the introduction, and rehearsed in Sec.~\ref{sec:variantKlyachko}, Specker
introduced his parable of the overprotective seer in order to demonstrate
the possibility of a logic wherein there is a failure of the transitivity of
implication. \  One therefore expects that the nonlocal OS correlations from Sec.~\ref{sec:separatedpair3box}, which are
based on Specker's parable, ought to provide a proof of nonlocality via such a
failure of transitivity. \ This is indeed the case, as we now
show.
The nonlocal OS correlations, {\em cf.}  Eq.~\eqref{Eq:WinningStrategy},  imply the following chain of implications
\begin{align*}
A_1=1&\Longrightarrow B_1=1\\
& \scalebox{1.05}{ \rotatebox{45}{$\Longleftarrow$}}\\
A_2=0&\Longrightarrow B_2=0\\
& \scalebox{1.05}{ \rotatebox{45}{$\Longleftarrow$}}\\
A_3=1&\Longrightarrow B_3=1
\end{align*}
If the transitivity of implication held, we would have
\begin{equation}
A_{1}=1\implies B_{3}=1.  \label{Eq:kkk}
\end{equation}%
However, Eq. \eqref{Eq:WinningStrategy} together with the fact that $p(A_{1}=1)=1/2$, {\em cf.}  Eq.~\eqref{Eq:SpeckerCorrelations:Explicit},
implies that sometimes $A_{1}=1$ and $B_{3}=0,$ which contradicts Eq.~(\ref{Eq:kkk}). \ Indeed, we achieve this contradiction with probability $p_{\mathrm{Hardy}}=p\left( A_{1}=0\text{ and }B_{3}=1\right) =1/2.$ \ As with
the PR box, one can obtain a contradiction  with certainty  also in the cases where $A_{1}=0.$

Although the nonlocal OS correlations cannot be achieved in quantum theory, it is
interesting to ask whether the particular contradiction constructed above
might be achieved with some nonzero probability for some choice of state and
observables. \ Indeed, this is possible. In particular, this can be achieved with $p^{\mbox{\tiny
Hardy}}= 144/({27+\sqrt{3}})^2\approx 0.17443$ by using the
quantum state
$\ket{\psi}= ({1+\eta^2})^{-1/2}  (\ket{0}\ket{0}-\eta\ket{1}\ket{1})$
and the projectors defined by:
\begin{align*}
    \hat{A}_a=\proj{\uparrow_a^{(A)}},\quad
    \hat{B}_b=\proj{\uparrow_b^{(B)}},
\end{align*}
where
\begin{gather*}
    \ket{\uparrow_a^{(A)}}
    =\left\{
    \begin{array}{l@{\quad \quad}}
    \frac{1}{\sqrt{1+\kappa_1^2}}
    (\kappa_1\ket{0}+\ket{1}): a=1, \\
    \frac{1}{\sqrt{1+\kappa_2^2}}
    (\ket{0}-\kappa_2\ket{1}): a=2,\\
    \frac{1}{\sqrt{1+\kappa_3^2}}
    (\ket{0}+\kappa_3\ket{1}): a=3,
    \end{array}\right. ,\\
    \ket{\uparrow_{b}^{(B)}}
    =\left\{
    \begin{array}{l@{\quad \quad}}
    \frac{1}{\sqrt{1+\kappa_3^2}}
    (-\kappa_3\ket{0}+\ket{1}): b=1,\\
    \frac{1}{\sqrt{1+\kappa_2^2}}
    (\kappa_2\ket{0}+\ket{1}): b=2,\\
    \frac{1}{\sqrt{1+\kappa_1^2}}
    (\ket{0}-\kappa_1\ket{1}): b=3,
    \end{array}\right. ,
\end{gather*}
$\kappa_a=\eta^{(a+1\,\text{mod}\,3)+\frac{1}{2}}$, and $\eta=\sqrt{3}$.

The above Hardy-type proof of nonlocality via the failure of
the transitivity of implications is entirely equivalent to the
proof of nonlocality due to Boschi {\em
et~al.}~\cite{D.Boschi:PRL:2755}. Note that a slightly stronger contradiction with $p^{\mbox{\tiny Hardy}}\approx
0.17455$ can be obtained with a different choice of
$\eta$~\cite{D.Boschi:PRL:2755}. Moreover, this latter
value of $p^{\mbox{\tiny Hardy}}$ is only marginally different
from the quantum-mechanical upper bound $p^{\mbox{\tiny
Hardy}}\le 0.17456$ obtained from the tools of
Ref.~\cite{QMP.Hierarchy}. This suggests that the
strongest contradiction in this scenario may already be
achievable using a two-qubit partially entangled pure state.\footnote{See,
however, Ref.~\cite{K.Pal:1006.3032} for some strong evidence
that in some cases, it may require infinite-dimensional Hilbert space
to achieve the strongest correlations allowed by quantum mechanics
even though the two-qubit correlations are only marginally different from
the quantum mechanical upper bound derived from the tools of
Ref.~\cite{QMP.Hierarchy}.}

It is also worth noting that by considering a similar setup that involves an
increasing number of boxes, and hence a longer chain of {\em intransitive}
implications, quantum theory actually provides a contradiction with {\em
increasing} $p^{\mbox{\tiny Hardy}}$ that asymptotes to
$50\%$~\cite{D.Boschi:PRL:2755}.

We end this section with a demonstration that there is a particular
kind of failure of transitivity that one does not find in quantum theory. \
We begin by noting that with a PR box, we can get a contradiction with the
transitivity of implication in a manner which is different from that of Hardy's
proof, and in some ways more striking. \ In addition to deriving Eqs. (\ref%
{Eq:HT1}), (\ref{Eq:HT2}) and (\ref{Eq:HT3}) from Eqs. (\ref{Eq:PR1}), (\ref%
{Eq:PR3}) and (\ref{Eq:PR4}), we can derive%
\begin{equation} \label{Eq:priortoPRBoxInferences}
B_{2}=0\implies A_{1}=0.
\end{equation}%
Graphically, the chain of inferences is
\begin{align*}
A_1=1&\Longrightarrow B_1=1\\
& \scalebox{1.05}{ \rotatebox{45}{$\Longleftarrow$}}\\
A_2=0&\Longrightarrow B_2=0\\
& \scalebox{1.05}{ \rotatebox{45}{$\Longleftarrow$}}\\
A_1=0&
\end{align*}
Were transitivity of implication to hold, we would conclude that $
A_{1}=1\implies A_{1}=0$, which, together with the fact that $p(A_1=1)=1/2$, yields a contradiction. This sort of proof is
also available for the nonlocal OS correlations.  It can be characterized
as providing a sequence of inferences about values of observables wherein
the consequent of the last inference contradicts the antecedent of the first
inference. \ The question is whether this sort of contradiction can be
achieved in quantum theory. \ As it turns out, it cannot, as we now
demonstrate.

Consider an arbitrary pure bi-partite state $\left\vert \Psi \right\rangle
\in \mathcal{H}^{A}\otimes \mathcal{H}^{B}$ where $\mathcal{H}^{A}$ and $%
\mathcal{H}^{B}$ are Hilbert spaces of dimension $d.$\ \ Defining $\left\{
\left\vert k\right\rangle \right\} _{k=1}^{d}$ to be an orthonormal basis of
$\mathcal{H},$ $\rho $ to be a density operator$,$ $\unit$ to be the identity
operator and $U$ to be a unitary operator, we can always write $\left\vert
\Psi \right\rangle $ in the form%
\begin{equation}
\left\vert \Psi \right\rangle =\left( \unit\otimes U\sqrt{\rho }\right) \sum_{k}%
\left\vert k\right\rangle \otimes \left\vert
k\right\rangle .
\end{equation}

Now suppose that one measures system $A$ with the POVM $\left\{\proj{\phi} ,
\unit-\proj{\phi}\right\} $ and one obtains the $\proj{\phi} $ outcome.
This leads to an updating of the description of the state of system $B$ to
\begin{equation}
\left\vert \chi \right\rangle \equiv \mathcal{N}_{\chi} U\sqrt{\rho }\left\vert \phi ^{\ast
}\right\rangle
\end{equation}
where
\begin{equation}
\left\vert \phi ^{\ast }\right\rangle \equiv \sum_{k}\left\langle k|\phi
\right\rangle ^{\ast }\left\vert k\right\rangle ,
\end{equation}%
and $\mathcal{N}_{\chi}$ is a normalization factor.  Consequently, a subsequent measurement on system $B$ of the POVM $\left\{\proj{\chi} ,\unit-\proj{\chi}\right\} $ will yield the $%
\proj{\chi} $ outcome with
certainty.

Next, consider the experiment wherein  $\left\{\proj{\phi} ,\unit-\proj{\phi}
 \right\} $ is \emph{not }made on $A,$ but the measurement $\left\{\proj{\chi} ,
 \unit-\{\proj{\chi} \right\} $ is made on $B$ and the outcome
  $\proj{\chi} $ is obtained. \ One then updates the
description of the state of system $A$ to%
\begin{eqnarray}
\left\vert \phi ^{\prime }\right\rangle  &\propto & \left\langle \chi
\right\vert \left( \unit\otimes U\sqrt{\rho }\right) \sum_{k}
\left\vert k\right\rangle \otimes \left\vert k\right\rangle  \nn\\
&\propto& \left\langle \phi ^{\ast }\right\vert \left( \unit\otimes \rho \right)
\sum_{k}\left\vert k\right\rangle \otimes \left\vert
k\right\rangle  \nn\\
& \propto&\left\langle \phi ^{\ast }\right\vert \left( \rho ^{T}\otimes \unit\right)
\sum_{k}\left\vert k\right\rangle \otimes \left\vert
k\right\rangle  \nn\\
\ket{ \phi ^{\prime }} &=& \mathcal{N}_{\phi'} \rho ^{T}\left\vert \phi \right\rangle ,
\end{eqnarray}%
where $\mathcal{N}_{\phi'}$ is a normalization factor.
A subsequent measurement on system $A$ of the POVM
 $\left\{\proj{\phi ^{\prime}},\unit-\proj{\phi ^{\prime}} \right\} $
will then yield the $\proj{\phi ^{\prime}} $ outcome with
certainty.

The state $\left\vert \chi \right\rangle $ on $B$ is called the \emph{%
relative state} to $\left\vert \phi \right\rangle $ on $A$ given $\left\vert
\Psi \right\rangle $ on $AB$ \cite{Everett}. \ Similarly, $\left\vert \phi
^{\prime }\right\rangle $ on $A$ is the \emph{relative state} to $\left\vert
\chi \right\rangle $ on $B$ given $\left\vert \Psi \right\rangle $ on $AB.$
\ If we find a particular state on one system, then we are certain to find
the relative state on the other should we measure for it. \ Consequently, we
can consider an arbitrary chain of such pairs of measurements, and at every
step in the chain we can make a perfect inference from the positive outcome
of one to the positive outcome of the other.

We pause at this point in the proof to note that this analysis provides a
particularly simple way of understanding Hardy's proof of nonlocality. \
Using reasoning analogous to that above, the relative state to $\left\vert
\phi ^{\prime }\right\rangle $ is $\left\vert \chi ^{\prime }\right\rangle $
where $\left\vert \chi ^{\prime }\right\rangle \equiv U \rho\, U^\dag \left\vert \chi
\right\rangle $ (note that there clearly exist choices of $\rho$ and $U$ such that $%
\left\vert \chi ^{\prime }\right\rangle \neq \left\vert \chi \right\rangle $)%
$.$ If the transitivity of implication held, then by this sequence of
perfect inferences, we would conclude that whenever $\proj{\phi}$
 is found on $A,$ it would be the case that $\proj{\chi
^{\prime }} $ is necessarily found on $B.$ \ However, this
conclusion is false because the relative state to $\left\vert \phi
\right\rangle $ is $\left\vert \chi \right\rangle ,$ so that the probability
of finding $\proj{\chi^{\prime}}$ on $B$ is $|\left\langle \chi ^{\prime }|\chi
\right\rangle |^{2}$ which is less than one if $\left\vert \chi ^{\prime
}\right\rangle \neq \left\vert \chi \right\rangle .$ \ Thus transitivity
must fail.

We now show that a quantum proof of nonlocality cannot be constructed
in terms of a sequence of
inferences wherein the consequent of the last inference contradicts the
antecedent of the first. We define a set of $N$ observables on $A$%
, each of which is projective, namely, $\left\{\left.\proj{\phi
^{(i)}} \,\right| i=1,\dots
,N\right\} $ and a set of $N$ similar observables on $B$, $\left\{\left.\proj{\chi
^{(i)}} \,\right| i=1,\dots ,N\right\} $. \ Here, $\left\vert \chi ^{(i)}\right\rangle $ is
the relative state to $\left\vert \phi ^{(i)}\right\rangle $ and $\left\vert
\phi ^{(i+1)}\right\rangle $ is the relative state to $\left\vert \chi
^{(i)}\right\rangle .$ \ This implies that we can infer from finding $%
\proj{\phi^{(i)}} $ on $A$ to the necessity of finding $\proj{\chi
^{(i)}} $ on $B,$ and from finding $\proj{\chi
^{(i)}} $ on $B$ to the
necessity of finding $\proj{\phi^{(i+1)}}  $ on $A.$ \ If transitivity of implication held,
then we could chain these inferences together such that from finding $%
\proj{\phi^{(1)}} $ on $A,$ we would infer the necessity of finding
$\proj{\phi^{(N)}}$ on $A.$ \

The question is whether we can ever have such a chain where $\left\vert \phi
^{(N)}\right\rangle $ is orthogonal to $\left\vert \phi ^{(1)}\right\rangle .
$ \ Note from the analysis above that $\left\vert \phi ^{(i+1)}\right\rangle
\propto \rho ^{T}\left\vert \phi ^{(i)}\right\rangle ,$ so that
\begin{equation}
\left\vert \phi ^{(N)}\right\rangle \propto \left( \rho ^{T}\right) ^{N}\left\vert
\phi ^{(1)}\right\rangle .
\end{equation}%
Therefore the condition for orthogonality is
\begin{equation}
|\left\langle \phi ^{(1)}\right\vert \left( \rho ^{T}\right) ^{N}\left\vert
\phi ^{(1)}\right\rangle |^{2}=0.
\end{equation}%
Given the non-negativity of $\rho $ (and hence of $\rho ^{T}$ and $\left(
\rho ^{T}\right) ^{N}),$ this condition is only satisfied if $\rho
^{T}\left\vert \phi ^{(1)}\right\rangle =0,$ but this would imply that the
probability of finding $\proj{\phi^{(1)}} $ on the bipartite state $\left\vert \Psi
\right\rangle $ vanishes. \ In other words, the only bipartite state for
which we can have a chain of inference wherein the final consequent denies
the initial antecedent is one that denies the initial antecedent. \
Therefore, such a contradiction cannot be achieved. \ This is in contrast to
what occurs in the case of PR\ boxes and the nonlocal OS correlations, and is
therefore a feature which distinguishes quantum theory from these foil
theories.  It is interesting to note, however, that in quantum proofs
of contextuality one \emph{can }find a chain of inferences
where the final consequent denies the initial antecedent and the initial antecedent is sometimes true, as shown in Sec.~\ref{sec:variantKlyachko}.

\section{Frustrated networks} \label{sec:frustratednetworks}

It is instructive to consider a network representation of the
various correlations  that we have considered thus far.   The bit associated
with the outcome of a binary-outcome measurement (this is the
only type of measurement we've considered) is associated with a node.  Perfect
positive correlation between outcomes of distinct measurements
is represented by a solid line between the nodes, perfect
negative correlation by a dashed line.   Such representations of correlations have been discussed before in the context of nonlocality proofs, in particular by Mitchell, Popescu and Roberts \cite{MPR04} and in the Ph.D. thesis of Collins \cite{Collinsthesis} and by Schmidt \cite{Schmidt}.

Fig.~\ref{Fig:FN1} provides network representations of the extremal correlations that were used in the no-go theorems for measurement-noncontextual outcome-deterministic models.  The triangular network represents the OS correlations in Specker's parable; the square
network represents the PR-box correlations (understood as a proof of contextuality, i.e. where the four measurements are considered to be implemented in one spatial location); the pentagonal
network represents the extremal version of the correlations in
Klyachko's no-go theorem; the hexagonal network
represents the  kind of  correlations described by Vaidman in
Ref.~\cite{Vaidman}.
\begin{figure}[h!]
     \begin{center}
   \includegraphics[width=8.4cm]{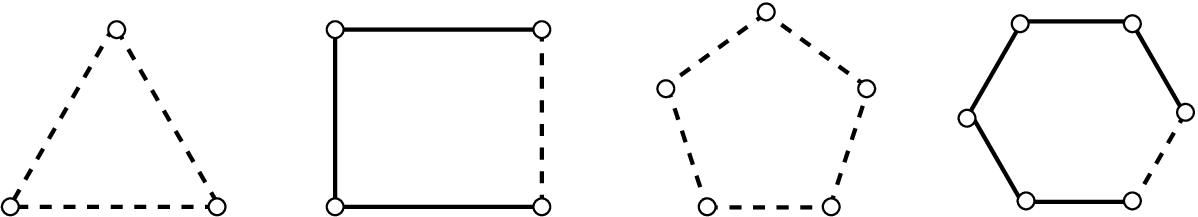}
     \end{center}
    \caption{\label{Fig:FN1}
    Frustrated networks representing the extremal correlations in various proofs of contextuality. }
\end{figure}

Fig.~\ref{Fig:NLFN} provides network representations of the extremal correlations that were used in proofs of nonlocality.  We have labeled the nodes to highlight the spatial region in which each of the outcomes occurs.
The network on the left, which is graph-isomorphic to the square network above, represents the correlations generated by a PR-box~\cite{PR94}.  The network on the right depicts the correlations found in the separated pair of
single-query 3-box systems of Sec.~\ref{sec:separated}.
\begin{figure}[h!]
     \begin{center}
   \includegraphics[width=6.2cm]{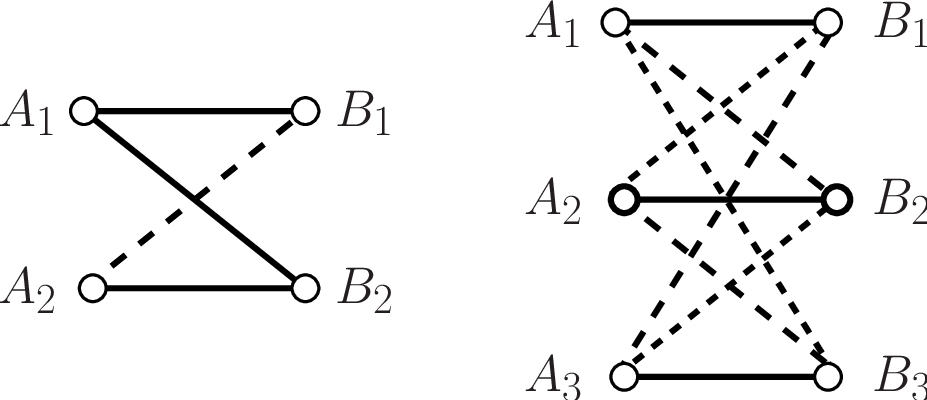}
     \end{center}
    \caption{\label{Fig:NLFN}
Frustrated networks representing the extremal correlations in various proofs of nonlocality.}
\end{figure}

Let the bit describing whether there is an even or an odd
number of dashed lines along a path be called the \emph{parity}
of the path.  We shall say that a network is \emph{frustrated}
if for any pair of nodes, there exist paths with different
parities connecting those nodes.  Clearly, each of the networks in Fig.~\ref{Fig:FN1} is frustrated.  It is this frustration which captures
the impossibility of an outcome-deterministic measurement-noncontextual
model of these correlations. For the networks given in Fig.~\ref{Fig:NLFN},
this impossibility also gives rise to a simple proof of nonlocality
of the depicted correlations.

For any network, we can determine whether or not it is frustrated by looking only at its cycles.  This is because frustration occurs when there are two paths with differing parities and this fact will reveal itself by examining the cycle consisting of that pair of paths.  Thus, to see the ways in which a network can be frustrated, it suffices to consider the ways in which cycles can be frustrated.  For any integer number of nodes, it is straightforward to find all the frustrated cycles with that number of nodes.  For two nodes, there is only a single path and therefore no possibility for frustration.  At 3 nodes, the frustrated networks are those indicated in Fig.~\ref{Fig:FN2}.  The case of two correlations and one anti-correlation corresponds, in the imagery of Specker's parable, to a case where if boxes 1 and 2 or boxes 1 and 3 are opened, one finds the same outcome, but if boxes 2 and 3 are opened, the outcomes always differ.
Note, however, that these different networks are equivalent up to a relabeling of the outcomes and consequently represent essentially the same correlations.
\begin{figure}[h!]
     \begin{center}
   \includegraphics[width=7.4cm]{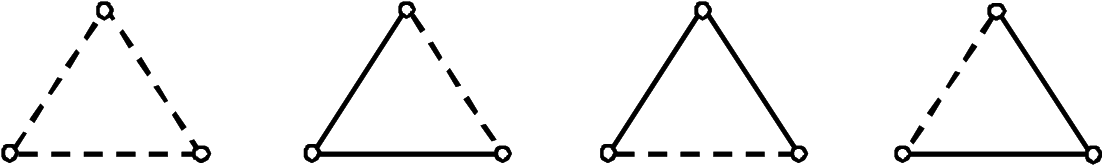}
     \end{center}
    \caption{\label{Fig:FN2}
    All the ways in which a triangular network can be frustrated.}
\end{figure}

Indeed, all the frustrated networks with a given number of nodes can be
obtained one from another by a relabeling of the outcomes.  It therefore
suffices to consider a single representative of the equivalence class of
frustrated networks with a given number of nodes.

It is also possible to have a similar graphical representation for some of the no-go theorems for noncontextuality and locality that are based on a failure of transitivity of implication.  We represent a set of implications among the values of binary-outcome observables by a \emph{directed} graph with decorated edges.  The implications of interest are of the form: $X_1=x \implies X_2=y$ where $x,y\in\{0,1\}$ and either $y=x$ or $y=x \oplus 1$.  We depict this by inserting a directed edge (i.e. an arrow) from the node for $X_1$ to the node for $X_2$ and decorating the base of the arrow with the value $x$; the directed edge is solid if $y=x$ and dashed if $y=x \oplus 1$.  Note that this implication can also be written in its contrapositive form as $X_2=y\oplus 1 \implies X_1=x\oplus 1$.  Therefore, we can always represent the same implication with an arrow in the opposite direction.  When reversing an arrow, the value decorating the arrow stays the same if the arrow is solid and flips if the arrow is dashed.

If the parity is odd around a closed loop in such a directed graph, then the antecedent of the first implication is denied by the consequent of the last implication.  Therefore, as long as the antecedent has non-zero probability, we have a failure of the transitivity of implication.  Such a directed network is said to be frustrated.

In the introduction, we described how Specker's parable implies a failure of the transitivity of implication (under the assumption that value-assignments to observables are context-independent).  Letting $s_i$ denote the proposition that $X_i=1$ (box $i$ contains a gem), the set of implications are: $s_1 \implies \lnot s_2$, $\lnot s_2 \implies s_3$, and $s_3 \implies \lnot s_1$.
These are represented by the directed network of Fig.~\ref{Fig:FDNtp}(a), which is clearly frustrated.
\begin{figure}[h!]
     \begin{center}
   \includegraphics[width=5.2cm]{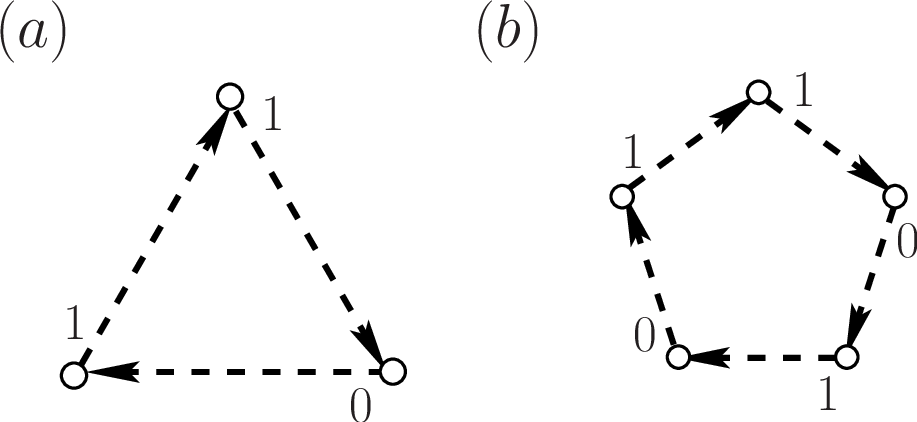}
     \end{center}
    \caption{\label{Fig:FDNtp}
    Some frustrated directed networks corresponding to contextual correlations.}
\end{figure}
The set of implications that are used in the transitivity-based no-go theorem of Sec.~\ref{sec:variantKlyachko} are represented by the pentagonal version of this directed network, Fig.~\ref{Fig:FDNtp}(b), which is also frustrated.

Unlike the undirected frustrated networks, which are composed of a set of correlations some or all of which are only \emph{approximated} by the quantum correlations, the \emph{directed} frustrated network of Fig.~\ref{Fig:FDNtp}(b) is an exact specification of implications one finds in quantum theory, specifically, those described in the proof of Sec.~\ref{sec:variantKlyachko}.  The only sense in which one could imagine a theory being ``more contextual'', according to this sort of proof, is by assigning a higher probability to the contradiction-generating valuation of the first observable in the chain.  An \emph{extremal} version of such a proof would be one wherein both possible valuations of the first observable yielded a contradiction.   We conjecture that such a proof cannot be found in quantum theory.

Finally, the ``striking'' form of the PR box correlations, presented in Sec.~\ref{Sec:Hardy} and associated with the set of implications below Eq.~(\ref{Eq:priortoPRBoxInferences}), is represented by the frustrated directed network in Fig.~\ref{Fig:FDNPRboxNLOS}(a), and the generalization of this to the case of the nonlocal OS correlations is represented in Fig.~\ref{Fig:FDNPRboxNLOS}(b).
\begin{figure}[h!]
     \begin{center}
   \includegraphics[width=6.2cm]{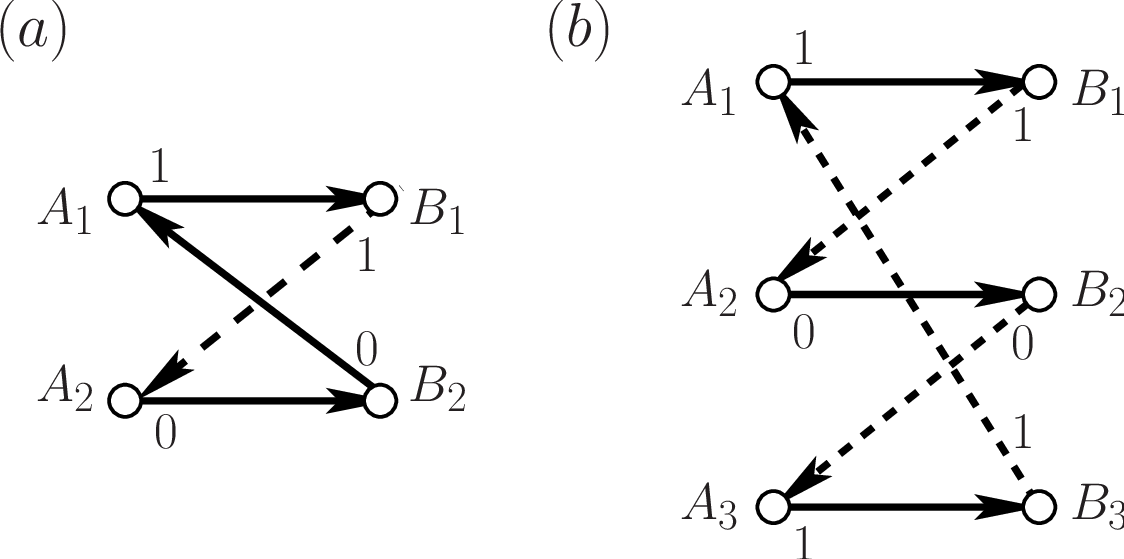}
     \end{center}
    \caption{\label{Fig:FDNPRboxNLOS}
    Some frustrated directed networks corresponding to Bell-nonlocal correlations.}
\end{figure}
As was shown at the end of Sec.~\ref{Sec:Hardy}, it is not possible to find a quantum state and a set of observables that instantiates such a set of implications while assigning a nonzero probability to the contradiction-generating valuation of the first observable.

\section{No-go theorems for preparation-noncontextual models} \label{quantumprepcontextuality}

So far, in all of our quantum analogues of Specker's parable, the
correlations examined were between the outcomes of pairs of measurements that could
be implemented \emph{jointly}. In this section, we consider the possibility
of achieving these correlations between the outcomes of pairs of
measurements that are implemented \emph{consecutively}\footnote{Because implementing the first measurement and selecting a particular outcome constitutes a preparation, one can equally well describe
this section as a consideration of the possibility of achieving analogues of
the OS correlations between preparations and measurements. This is discussed further below.}.

It is important to recognize that one need not rule out the possibility of
consecutive measurements to ensure the impossibility of joint measurements.
The original version of the Specker parable is misleading in this respect.
It asks us to imagine that after opening two boxes, one is simply unable to
open the third (as if its lid were glued shut with an unbreakable seal). The
literal generalization to arbitrary measurements $\mathcal{M}_{1},$
$\mathcal{M}_{2}$ and $\mathcal{M}_{3}$ that can be measured jointly pairwise
but not triplewise would seem to be that if $\mathcal{M}_{1}$ and
$\mathcal{M}_{2}$ have been implemented, a mysterious force prevents us from
carrying out the instructions that correspond to implementing
$\mathcal{M}_{3}.$ However, this conclusion does not follow from a denial of
joint measurability as it is defined in Sec.~\ref{sec:JM}.  One can always
implement $\mathcal{M}_{3}$ following a measurement of $\mathcal{M}_{1}$ and
$\mathcal{M}_{2}$ on a preparation $\mathcal{P}.$  It is just that the
statistics of outcomes of $\mathcal{M}_{3}$ that one thereby obtains is not
the same as one would have obtained if $\mathcal{M}_{3}$ were implemented on
$\mathcal{P}$ directly. To be precise, if the joint statistics of outcomes of
a pair of measurements $\mathcal{M}$ and $\mathcal{M}^{\prime }$ are
independent of the order in which they are implemented, then the consecutive
implementation of the two measurements constitutes a joint measurement of
$\mathcal{M}$ and $\mathcal{M}^{\prime }$ by the definition of
Sec.~\ref{sec:JM}.  Consequently, a denial of joint measurability implies a
denial of the invariance of statistics under a reordering of the
measurements.
This way of interpreting a lack of joint measurability is precisely the one
that is familiar from the quantum  theory of projective measurements.

To see how the OS correlations might obtain for consecutive measurements, we present a new parable.
We consider a single-query 3-box system, that is, one where only a single box can be opened at a time.  A pair of boxes can be opened consecutively,
but the second box-opening need not reproduce the statistics of outcomes that would have been observed had it been opened first.  In this sense, the measurements associated with opening distinct boxes cannot be implemented jointly.

We now get to the specifics of the correlations, which are inspired by the
original Specker parable. We assume that there is a special preparation $%
\mathcal{P}_{\ast }$ of the 3-box system, such that if the \emph{same}
box is opened at the two times, then the same outcome is found, while if
\emph{different} boxes are opened at the two times, then different outcomes
are found.

So far, there is nothing in this set of correlations that prohibits their
being explained by a generalized-noncontextual ontological model. Because no two
measurements are ever implemented jointly in this parable, there is no sense
in which any measurement has a nontrivial context upon which its ontological
representation might depend. Indeed, there are ontological models that
explain the correlations easily. They need only posit that the first
measurement disturbs the ontic state of the three-box system in order to
enforce the appropriate correlations. For instance, suppose that three bits
specify the gem occupation numbers of the three boxes and completely
characterize the ontic state. It could be that finding a $0$ ($1)$ for a box
forces the other two boxes to have occupation number $1\left(0\right)$.
(Indeed, if the suitor is opening boxes on a table, this kind
of disturbance to the ontic state might be enforced by having a hidden
mechanism under the table that automatically
inserts or removes gems from the two boxes
that were not opened.)

To obtain a set of correlations that can challenge the assumption of
generalized-noncontextuality, we need to modify the thought experiment slightly by
adding the following assumption: \ in addition to the correlations
described, it is the case that after the early measurement is complete, for
every possible subsequent measurement (the theory may well allow more than
the three measurements that are used in the protocol), it is impossible to
obtain any information about the identity of the early measurement. \ We
call this the \emph{trit-obliviousness} condition (this terminology has its
precedent in Ref.~\cite{SBKPT09}).

Note that implementing the early measurement procedure and selecting
a particular outcome constitutes a preparation. For each of the three possible
measurement procedures, $M_{1},$ $M_{2}$ and $M_{3},$ and each of the
outcomes $0$ and $1$, we obtain a distinct preparation procedure. We
denote these by $P_{1,0},$ $P_{1,1},$ $P_{2,0},$ $P_{2,1},$ $P_{3,0}$ and $%
P_{3,1}$ in an obvious notation. \footnote{Note that measurement procedures $%
M_{1}$ and $M_{1}^{\prime }$ that are in the same operational equivalence
class $\mathcal{M}_{1}$, may nonetheless define preparation procedures that fail to
be operationally equivalent, because operational equivalence for preparation procedures is decided by the
statistics of all possible subsequent measurements.} We can
also define the preparations that result when one chooses not to condition
on the outcome of the measurement procedure.  We denote these by $
P_{1},P_{2}$ and $P_{3}$. Finally, we denote the probability of obtaining
outcome $0$ when the first measurement $M_{t}$ is implemented on the special
preparation $\mathcal{P}_{\ast }$ by $w_{t,0}\equiv p(0|M_{t};\mathcal{P}%
_{\ast }),$ and we define $w_{t,1}\equiv 1-w_{t,0}.$ The statistics for the
unconditional preparations are then given by%
\begin{equation}
    p(X|M;P_{t})=w_{t,0}p\left( X|M;P_{t,0}\right)
    +w_{t,1}p\left(X|M;P_{t,1}\right) .  \label{eq:cs}
\end{equation}%
The trit-obliviousness condition states that preparation procedures $%
P_{1},P_{2}$ and $P_{3}$ are operationally equivalent, that is,
\begin{equation}
\forall M:p(X|M;P_{1})=p(X|M;P_{2})=p(X|M;P_{3}).  \label{eq:oep}
\end{equation}%
\qquad We now see how the new parable might have troubles with
generalized-noncontextuality. In Sec.~\ref{sec:PNC}, we defined a preparation-noncontextual ontological model to be one wherein operational equivalence of
preparation procedures implies that they are represented by the same
distributions in the ontological model, {\em cf.} Eq.~\eqref{eq:PNC}. Thus, from
Eq.~(\ref{eq:oep}) and preparation noncontextuality, we infer that
\begin{equation}
p(\lambda |P_{1})=p(\lambda |P_{2})=p(\lambda |P_{3}).
\label{eq:consequenceofoep}
\end{equation}%
Given that convex combinations of preparation procedures are represented in
an ontological model by convex combinations of the associated distributions
(see Ref.~\cite{Spe05}), we infer from Eq.~(\ref{eq:consequenceofoep}) that
\begin{eqnarray}
&&w_{1,0}p\left( \lambda |P_{1,0}\right) +w_{1,1}p\left( \lambda
|P_{1,1}\right)  \notag\\
&=&w_{2,0}p\left( \lambda |P_{2,0}\right) +w_{2,1}p\left( \lambda
|P_{2,1}\right)  \notag\\
&=&w_{3,0}p\left( \lambda |P_{3,0}\right) +w_{3,1}p\left( \lambda
|P_{3,1}\right) .\label{eq:PNC2}
\end{eqnarray}
The ``preparation context'' is the specification of which of the three mixtures of preparation procedures was implemented, and the assumption of preparation noncontextuality is that the distribution over $\lambda$ does not depend on this context.

Equation~(\ref{eq:PNC2}) is a nontrivial constraint which is not
necessarily consistent with the posited correlations between the preparation
procedures $P_{1,0},$ $P_{1,1},$ $P_{2,0},$ $P_{2,1},$ $P_{3,0}$ and $P_{3,1}$ and the outcomes of the subsequent measurements of $M_{1},$ $M_{2}$ or $M_{3}.$  Indeed, in the ontological model we proposed above, where the ontic state is a triple of bits specifying the occupation numbers of each box, the
distributions corresponding to the six preparation procedures are:
\begin{eqnarray}
p\left( \lambda |P_{1,0}\right)  &=&\delta _{\lambda ,(0,1,1)} \\
p\left( \lambda |P_{1,1}\right)  &=&\delta _{\lambda ,(1,0,0)} \\
p\left( \lambda |P_{2,0}\right)  &=&\delta _{\lambda ,(1,0,1)} \\
p\left( \lambda |P_{2,1}\right)  &=&\delta _{\lambda ,(0,1,0)} \\
p\left( \lambda |P_{3,0}\right)  &=&\delta _{\lambda ,(1,1,0)} \\
p\left( \lambda |P_{3,1}\right)  &=&\delta _{\lambda ,(0,0,1)},
\end{eqnarray}%
where $\delta $ denotes the Kronecker delta function. Supposing that $w_{1,0}=w_{2,0}=w_{3,0}=1/2,$ we have
\begin{eqnarray}
p\left( \lambda |P_{1}\right)  &=&\frac{1}{2}\delta _{\lambda ,(0,1,1)}+%
\frac{1}{2}\delta _{\lambda ,(1,0,0)}, \\
p\left( \lambda |P_{2}\right)  &=&\frac{1}{2}\delta _{\lambda ,(1,0,1)}+%
\frac{1}{2}\delta _{\lambda ,(0,1,0)}, \\
p\left( \lambda |P_{3}\right)  &=&\frac{1}{2}\delta _{\lambda ,(1,1,0)}+%
\frac{1}{2}\delta _{\lambda ,(0,0,1)},
\end{eqnarray}%
and therefore
\begin{equation}
p\left( \lambda |P_{1}\right) \neq p\left( \lambda |P_{2}\right) \neq
p\left( \lambda |P_{3}\right) .
\end{equation}
So we find that the distributions representing $P_{1},$ $P_{2}$ and $P_{3}$
in the ontological model are distinct even though these preparation
procedures are operationally equivalent\footnote{Remember that in the
ontological model we are considering, the
measurement $M_{y}$ simply reveals the value of the $y$th bit, that is, $%
p(X=\lambda _{y}|M_{y};(\lambda _{1},\lambda _{2},\lambda _{3}))=1.$ It
follows that $\sum_{\lambda }$ $p(X=0|M_{y};\lambda )p(\lambda |P_{t})=\frac{%
1}{2}$ for all $y,t\in \left\{ 1,2,3\right\} ,$ and consequently, the
two outcomes of $M_{y}$ occur with equal probability given a preparation
procedure $P_{t}.$ Therefore, the ontological model captures the fact that $P_{1},$ $P_{2}$ and $P_{3}$ are operationally indistinguishable.}
--- a violation of the assumption of
preparation noncontextuality.

We have demonstrated that the simple ontological model suggested earlier to
explain the two-time OS correlations cannot also satisfy the condition of trit-obliviousness while preserving preparation noncontextuality.
In the next subsection, we will show that \emph{no} ontological model that
can explain the OS correlations and the trit-oblivious condition
can be preparation-noncontextual.   In
this sense, a suitor who is committed to generalized noncontextuality should be
surprised if he sees the specified two-time correlations after having
confirmed the trit-obliviousness condition.

It is useful to summarize the correlations that we have described above.

\subsection{Diachronic pair of single-query 3-box OS correlations}

There are six possible preparation procedures, denoted $P_{t,b}$ where $t\in \left\{
1,2,3\right\} $ ($t$ for trit) and $b\in \{0,1\},$ and three possible
measurement procedures, denoted $M_{y}$ where $y\in \left\{ 1,2,3\right\} .$
 For simplicity, we assume that the prior over each of $t,b$ and $y$ to be uniform.
The outcome $X$ of the measurement procedure $M_{y}$ given a preparation
procedure $P_{t,b}$ is the following function of $t,$ $b$ and $y,$
\begin{equation}\label{Dfn:c_y(t,b)}
c_y(t,b)\equiv
\begin{array}{c}
b\text{ if }t=y \\
b\oplus 1\text{ if }t\neq y
\end{array},
\end{equation}
that is, the correlations are such that
\begin{eqnarray}
p(X=c_y(t,b)|M_y;P_{t,b})=1.
\end{eqnarray}
Finally, defining the effective preparation procedure $P_t$ as the mixture of $P_{t,0}$ and $P_{t,1}$,
it is assumed that no measurement can reveal any information about which of $P_1$, $P_2$ or $P_3$ was implemented,
\begin{equation} \label{eq:tritoblivious}
\forall M: p(X|M;P_1)=p(X|M;P_2)=p(X|M;P_3).
\end{equation}
This is the \emph{trit-obliviousness condition}. \

Defining the average probability of success as
\begin{eqnarray}
R_3 \equiv \frac{1}{18}
\sum_{t,b,y}
p(X=c_y(t,b)|M_{y};P_{t,b}),
\end{eqnarray}
we can also characterize the two-time OS correlations as those achieving
$R_3=1$. Using the trit-obliviousness condition, we shall see that the
assumption of preparation noncontextuality places a bound on the average
probability of success, namely,
\begin{equation}\label{eq:winpropPNC}
R_3\le R^{\text{PNC}}_3 = \frac{7}{9}.
\end{equation}
We refer to this bound as a \emph{noncontextuality inequality}.

The proof is as follows. For any measurement $M,$ the probability of outcome
$X$ given preparation $P_{t}$ is simply
\begin{equation}
p(X|M;P_{t})=\frac{1}{2}\sum_{b\in \{0,1\}}p(X|M;P_{t,b}).
\end{equation}%
Similarly, the probability of the ontic state $\lambda $ given an
implementation of $P_{t}$ is simply
\begin{equation}
p(\lambda |P_{t})=\frac{1}{2}\sum_{b\in \{0,1\}}p(\lambda |P_{t,b}).
\end{equation}%

Given the trit-obliviousness condition, Eq.~(\ref{eq:tritoblivious}), and the assumption of preparation noncontextuality, Eq.~(\ref{eq:PNC}), we infer that $p(\lambda |P_{0})=p(\lambda |P_{1})=p(\lambda
|P_{2}),$ which states that mixed preparations corresponding to different
values of the trit $t$ are not only indistinguishable at the operational level, but at the ontic level as well.  Therefore, even if one knew $\lambda $, the posterior probabilities for $t=1,
$ $t=2$ and $t=3$ would be the same, that is, one would know nothing about
the trit $t.$  The argument so far can be summarized as follows: for
preparation-noncontextual models, trit-obliviousness at the operational
level implies trit-obliviousness at the ontic level. The ontic state $
\lambda $ provides a classical encoding of $(t,b)$, but one that does not contain any information about $t$.

To finish the argument, we take note of all the functions of
$t$ and $b$ that contain no information about $t$.\footnote{In
the sense that for any given value of the function $f(t,b)$,
the conditional probability $p(t|f(t,b))=1/3$ for all $t$.} \
These are equivalent, up to an affine transformation (i.e. up
to a scalar multiple and an additive
constant), to one of the following four functions%
\begin{equation}
\begin{array}{c||c|c|c|c}
t,b & b & c_1(t,b) & c_2(t,b) & c_3(t,b)
\\
\hline
1,0 & 0 & 0 & 1 & 1 \\
1,1 & 1 & 1 & 0 & 0 \\
2,0 & 0 & 1 & 0 & 1 \\
2,1 & 1 & 0 & 1 & 0 \\
3,0 & 0 & 1 & 1 & 0 \\
3,1 & 1 & 0 & 0 & 1%
\end{array}%
\end{equation}
 where $c_y(t,b)$ is defined in Eq.~\eqref{Dfn:c_y(t,b)}.
In an ontological model that respects preparation noncontextuality
and the trit-obliviousness condition,  the ontic
state must be given by one of these four functions, that is, $p\left( \lambda
|P_{t,b}\right) =\delta _{\lambda ,b}$ or $\delta _{\lambda ,c_1(t,b)}$ or
$\delta _{\lambda ,c_2(t,b)}$ or $\delta _{\lambda ,c_3(t,b)}.$ Note that in
each case, the ontic state space is a single bit \footnote{It is not a triple
of bits and therefore cannot specify the occupation numbers of each of the
three boxes. In this sense, the narrative device of a three-box system
cannot do justice to this version of the parable.
We must think about the preparations and measurements more abstractly.}.

In the case of an ontological model wherein $\lambda =b,$ the best the
measurement device can do is to always output $b\oplus 1$ because with probability 2/3, $y\neq t$ and $c_y(t,b)=b\oplus 1$, while with probability 1/3, $y= t$ and $c_y(t,b)=b$.  Thus, for this ontological model, the average success
probability is $2/3.$

In the case of an ontological model wherein $\lambda =c_1(t,b),$ the best the measurement device can do is to output $c_1(t,b)$ when $y=1$ and $c_1(t,b)\oplus 1$ when $y\neq 1$. \
Note that $c_1(t,b)\oplus 1=c_2(t,b)$ for $2/3$ of
the values of $t,b$ and $c_1(t,b)\oplus 1=c_3(t,b)$
also for $2/3$ of the values of $t,b.$ (To see this, it suffices to take
the negation of the $c_1(t,b)$ column of the table and compare it with the $c_2(t,b)$ and $c_3(t,b)$
columns.) So we see that this choice of
output generates the right correlations $2/3$ of the time for $y\neq 1$. Thus for this
ontological model, the overall success probability is $7/9$.

By symmetry, the cases of $\lambda = c_2(t,b)$ and $\lambda = c_3(t,b)$ also achieve a success probability of at
most $7/9.$ Therefore, the probability of success in a preparation-noncontextual ontological model is bounded above by $7/9.$

\subsection{Quantum case} We now consider to what extent one can achieve the diachronic OS
correlations in quantum theory.  The following is a protocol that uses a
single qubit.  The three measurements correspond to the three Pauli operators
$\hat{A}_t$ of Eq.~(\ref{eq:3Paulis}) corresponding to directions equally
spaced in an equatorial plane of the Bloch sphere.  The positive and negative
eigenvalues are mapped onto outputs $X=0$ and $X=1$ respectively.  The
preparation procedures $P_{t,0}$ and $P_{t,1}$ correspond to the two
eigenstates of $\hat{A}_t$, with positive and negative eigenvalues mapped
onto $b=0$ and $b=1$ respectively.  We denote these states by the Hilbert
space vectors $|\phi_{t,b}\rangle$.  The Bloch sphere representation of these
states and measurements is provided in Fig.~\ref{Fig:PC}.  When $y=t$, the
preparation corresponds to an eigenstate of the observable being measured,
and the outcome $X$ equals the bit $b$.  Thus, $X=c_y(t,b)$ with probability
1 in this case.  When $y\neq t$, the probability of obtaining $X=b$ is
$|\langle \phi_{t,b} | \phi_{y,b} \rangle|^2 = \cos^2(\pi/3)=1/4$ while the
probability of obtaining $X=b\oplus 1$ and thus $X=c_y(t,b)$ is $3/4$.  We
have $y\neq t$ in 2/3 of cases, so that the overall probability of success is
\begin{equation}
R^{\textrm{quantum}}_3=\frac{5}{6}.
\end{equation}
Meanwhile, no information about $t$ can be obtained by any quantum
measurement given that the mixtures associated with different values of $t$
are represented by the same density operator: $\frac{1}{2}\proj{\phi_{0,0}} +\frac{1
}{2}\proj{\phi_{0,1}}$= $\frac{1}{2}\proj{\phi_{1,0}} +\frac{1}{2}\proj{\phi_{1,1}}$
= $\frac{1}{2}\proj{\phi_{2,0}} +\frac{1}{2}\proj{\phi_{2,1}} = \unit/2.$
Thus we have a violation of the noncontextuality inequality of Eq.~(\ref{eq:winpropPNC}).
\begin{figure}[h!]
     \begin{center}
   \includegraphics[width=7cm]{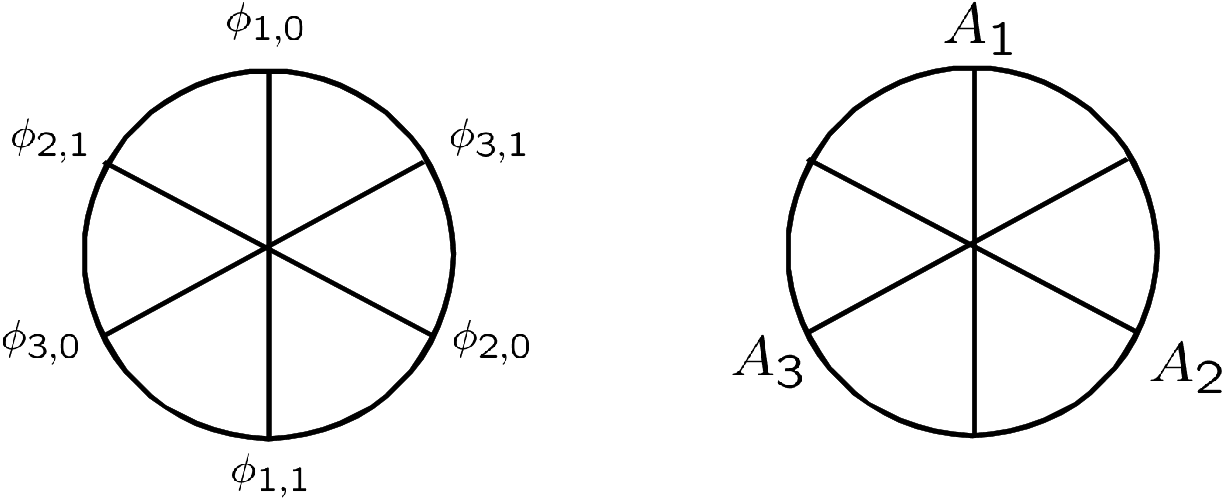}
     \end{center}
    \caption{\label{Fig:PC}
    Quantum states and observables used for proof of the impossibility of a preparation-noncontextual ontological model.
    }
\end{figure}

Note that the OS correlations are useful for achieving the following
two-party secure computation, which is a kind of multiplexing. Let the two
parties be called Alice and Bob.  Alice has as input a trit $t\in \{1,2,3\}$
and a bit $b\in \{0,1\}$, each chosen uniformly at random.   Bob has as input
a trit $y\in \{1,2,3\}$ chosen uniformly at random$.$ Bob outputs a bit $c$
and the goal of the task is for Bob to output $c=c_y(t,b),$ that is, Bob
should output $b$ if $y=t$ and the negation of $b$ otherwise. Alice can send
a system to Bob encoding information about her input, however there is a
cryptographic constraint: no information about the trit $t$ can be
transmitted to Bob, which is to say that the protocol must be trit-oblivious.
This information-theoretic manner of characterizing the correlations provides
a connection with the discussion of preparation noncontextuality found in
Ref.~\cite{SBKPT09}.

\subsection{Justifying preparation noncontextuality by locality}
As discussed in Ref.~\cite{Spe05}, it is sometimes possible to justify an
assumption of preparation noncontextuality using Bell's assumption of local
causality \cite{Belllocalcausality}.  This is the case for the assumptions of
preparation noncontextuality that appear in the derivation of the
noncontextuality inequality of Eq.~(\ref{eq:winpropPNC}).  It suffices to
note that if one implements a measurement procedure on half of a correlated
pair of systems and one conditions upon its outcome, then this procedure can also
be considered a preparation procedure for the other half of the correlated
pair.  Indeed, given the separated pair of single-query 3-box systems
considered in Sec.~\ref{sec:separatedpair3box}, every measurement procedure $M_t$ on the
3-box system in \Alice chosen from $t\in \{1,2,3\}$ and yielding outcome
$b\in\{0,1\}$ corresponds to a preparation procedure $P_{t,b}$ for the 3-box
system in Babylon.  If $M_t$ is measured in \Alice but one does not condition
on the outcome, then this corresponds to a preparation procedure $P_{t}$ of
the system in Babylon.  In this case, the probability of observing an outcome
$X$ for a measurement of $M_y$ in Babylon given a preparation $P_{t,b}$ is
precisely equal to the probability of observing an outcome $X$ for a
measurement of $M_y$ in Babylon given an outcome $b$ for $M_t$ in \Alice.
There is an isomorphism between the diachronic pair of single-query 3-box
systems and the separated pair.

Now suppose that the \Aliceian and Babylonian measurements are space-like
separated. In this case, the no-signaling constraint ensures that the choice
of $t$ in \Alice cannot influence the outcome statistics of any measurement in
Babylon and consequently that the three preparation procedures $P_{1}$,
$P_{2}$ and $P_{3}$ are operationally equivalent, that is, $\forall
M:p(X|M;P_1)=p(X|M;P_2)=p(X|M;P_3)$.  This is the condition of trit-obliviousness.

Furthermore, an assumption of local causality implies that the choice of
measurement in \Alice also cannot influence the distribution over ontic
states for the 3-box system in Babylon.  Denoting the ontic state of the
Babylonian system by $\lambda$, local causality implies
$p(\lambda|P_1)=p(\lambda|P_2)=p(\lambda|P_3)$.  But this is precisely the
content of the assumption of preparation noncontextuality for the
operationally equivalent procedures $P_{1}$, $P_{2}$ and $P_{3}$.  Therefore
local causality justifies this assumption.

This reasoning also shows that any local strategy for winning the prediction
game for the separated pair of single-query 3-box systems implies a
preparation-noncontextual strategy for winning the prediction game for the
diachronic pair with the same winning probability \footnote{It has been shown
that for every inequality on correlations between pairs of separated
measurements that is implied by the assumption of a local ontological model,
an equivalent inequality for the correlations between preparations and
measurements is implied by the assumption of a preparation-noncontextual ontological model~\cite{Barrettunpublished}.}  It follows that another way to
derive the local bound of $7/9$ for the probability of achieving the OS
correlations for the separated pair, Eq.~(\ref{Ineq:BI:GameTheoretic}), is to
appeal to this implication and the fact that the optimal preparation-noncontextual strategy achieves a winning probability of $7/9$ for the
diachronic pair, as shown in Eq.~(\ref{eq:winpropPNC}).

\section{Joint measurability of POVMs} \label{sec:JMPOVM}

\sBox{Regarding this section, see also the erratum}

As we showed early on, we cannot find a triple of projective
measurements in quantum theory that are jointly measurable
pairwise but not triplewise. However, not all measurements in
quantum theory are projective. The most general measurement
is one that is associated with a positive operator valued
measure (POVM). A POVM is a set of operators $\{E_{X}:X\in
S\}$ such that $E_{X}\geq0$, and $\sum_{X}E_{X}=\unit$. The
parameter $X$ labels the outcomes of the measurement, which we
assume form a discrete set. If the preparation procedure
preceding the measurement is represented by the density
operator ${\rho}$, then the probability of outcome $X$ is
given by $\mathrm{Tr}(\rho E_{X}).$

\ In this section, we consider the question of whether one could find a triple
of non-projective measurements in quantum theory that are pairwise but not
triplewise jointly measurable. As it turns out, this is indeed possible.

First, we adapt the definition of joint measurability to the case of POVMs. \
A pair of measurements associated with POVMs $\{E_{X_{1}}^{1}\}$ and
$\{E_{X_{2}}^{2}\}$ are jointly measurable iff there exists a third POVM
$\{F_{X_{1},X_{2}}\}$ such that $E_{X_{1}}^{1}=\sum_{X_{2}}F_{X_{1},X_{2}}$
and $E_{X_{2}}^{2}=\sum_{X_{1}}F_{X_{1},X_{2}}$. It is worth noting that
the problem of mathematically characterizing jointly measurable observables
when these are not projective is a subject of on-going research
\cite{BGL95,HRS08,Wolf:PRL:2009,Yu:PRA:2010}.

We will consider two examples of such triples of POVMs  such
that any pair can be implemented jointly, but the triple cannot. They both
make use of noisy spin observables. The three measurements we consider,
labelled by an integer $k\in\{1,2,3\},$ are associated with POVMs $\left\{
E_{+}^{k},E_{-}^{k}\right\}  $ %
\begin{equation}
E_{\pm}^{k}\equiv\frac{1}{2}\unit\pm\frac{\eta}{2}\vec{\sigma}\cdot \hat{n}_{k},
\quad 0\le\eta\le1,
\label{eq:noisyspinobservables}
\end{equation}
where $\vec{\sigma}=\left(  \sigma_{x},\sigma_{y},\sigma_{z}\right)$ is the
vector of Pauli spin operators, whilst $\hat{n}_{1},$ $\hat{n}_{2}$ and
$\hat{n}_{3}$ are the three axes along which the spin is measured. Note
that the POVM $\left\{  E_{+}^{k},E_{-}^{k}\right\}  $ can be written as a
convex combination of the projective spin measurement along $\hat{n}_{k}$ --- associated with
the projectors
$\Pi_{\pm}^{k}\equiv\frac{1}{2}\unit\pm\frac{1}{2}\vec{\sigma}\cdot\hat{n}_{k}$
--- and the trivial measurement $\{\unit/2,\unit/2\}.$ That is,%
\begin{equation}
E_{\pm}^{k}=(1-\eta)\frac{1}{2}\unit+\eta\Pi_{\pm}^{k}.\label{eq:sharpplusnoise}%
\end{equation}
This is the sense in which we can consider $\left\{  E_{+}^{k},E_{-}%
^{k}\right\}  $ with $\eta<1$ to be a noisy version of the observable
$\vec{\sigma}\cdot\hat{n}_{k}.$

\subsection{Orthogonal spin axes}
Our first example of such a triple of nonprojective measurements uses noisy versions
of spin operators along three orthogonal axes:%
\begin{align}
\hat{n}_{1}  &  =\hat{z}\equiv(0,0,1),\nonumber\\
\hat{n}_{2}  &  =\hat{x}\equiv(1,0,0),\nonumber\\
\hat{n}_{3}  &  =\hat{y}\equiv(0,1,0). \label{eq:orthogonalspinaxes}%
\end{align}
\begin{proposition} \label{prop:etafororthogonaltriple}
The triple of measurements defined by Eqs.~(\ref{eq:noisyspinobservables})
and (\ref{eq:orthogonalspinaxes}), that is, noisy spin observables along
three orthogonal axes, are pairwise jointly measurable iff
$\eta\leq1/\sqrt{2}\approx0.707$, but triplewise jointly measurable iff
$\eta\leq1/\sqrt{3}\approx0.577$.
\end{proposition}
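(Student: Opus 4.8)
The plan is to work entirely in the Bloch representation of qubit effects, writing each candidate joint-POVM element as $G=\alpha\unit+\vec b\cdot\vec\sigma$ with $\alpha$ real and $\vec b$ a real three-vector, for which positivity $G\ge 0$ is equivalent to the single condition $\alpha\ge\|\vec b\|$. By the POVM version of Definition~\ref{Dfn:JointMeasurability}, triplewise joint measurability is exactly the existence of an eight-outcome POVM $\{G_{\vec s}\}$, indexed by $\vec s=(s_1,s_2,s_3)\in\{+1,-1\}^3$, whose three single-observable marginals reproduce $\{E^k_\pm\}$ of Eq.~\eqref{eq:noisyspinobservables}; pairwise joint measurability is the analogous statement for a four-outcome POVM. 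I would prove the ``if'' and ``only if'' directions separately and treat the pairwise and triplewise cases in parallel, since one technique covers both.

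For sufficiency I would simply exhibit a fully symmetric joint POVM. For the triple, set $G_{\vec s}=\tfrac18\bigl(\unit+\eta\,\vec u_{\vec s}\cdot\vec\sigma\bigr)$ with $\vec u_{\vec s}\equiv s_1\hat z+s_2\hat x+s_3\hat y$. Using $\sum_{s}s=0$ and $\sum_{s}1=2$, a one-line check shows that $\sum_{\vec s}G_{\vec s}=\unit$ and that the marginals are exactly $E^k_\pm$, so the only binding constraint is positivity $\eta\|\vec u_{\vec s}\|\le 1$. As the axes are orthogonal, $\|\vec u_{\vec s}\|=\sqrt3$, giving the threshold $\eta\le 1/\sqrt3$. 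The identical construction with the two-axis vector $\vec u_{s_1s_2}=s_1\hat z+s_2\hat x$ (norm $\sqrt2$) and weight $\tfrac14$ settles the pairwise ``if'' at $\eta\le 1/\sqrt2$.

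The harder half is necessity, and this is where I expect the real work to lie. Given \emph{any} joint POVM $\{G_{\vec s}=\alpha_{\vec s}\unit+\vec b_{\vec s}\cdot\vec\sigma\}$, I would extract a single scalar inequality from the full set of operator marginal constraints by a contraction-with-signs trick. The marginal condition $\sum_{s_2,s_3}\vec b_{s_1s_2s_3}=\tfrac12 s_1\eta\,\hat z$, contracted with $s_1$ and summed, yields $\sum_{\vec s}s_1\vec b_{\vec s}=\eta\,\hat z$, and likewise $\sum_{\vec s}s_2\vec b_{\vec s}=\eta\,\hat x$ and $\sum_{\vec s}s_3\vec b_{\vec s}=\eta\,\hat y$. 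Dotting these into $\hat z,\hat x,\hat y$ and adding gives $\sum_{\vec s}\vec u_{\vec s}\cdot\vec b_{\vec s}=3\eta$. Cauchy--Schwarz with $\|\vec u_{\vec s}\|=\sqrt3$, followed by positivity $\|\vec b_{\vec s}\|\le\alpha_{\vec s}$ and the normalization $\sum_{\vec s}\alpha_{\vec s}=1$, then chains into
\[
3\eta=\sum_{\vec s}\vec u_{\vec s}\cdot\vec b_{\vec s}\le\sqrt3\,\sum_{\vec s}\|\vec b_{\vec s}\|\le\sqrt3\,\sum_{\vec s}\alpha_{\vec s}=\sqrt3 ,
\]
so $\eta\le 1/\sqrt3$, contradicting $\eta>1/\sqrt3$. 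The pairwise ``only if'' is the same computation truncated to two axes, producing $\sum_{\vec s}\vec u_{\vec s}\cdot\vec b_{\vec s}=2\eta\le\sqrt2$ and hence $\eta\le 1/\sqrt2$.

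The main obstacle is precisely this necessity direction: one must rule out \emph{all} conceivable joint POVMs, not merely the symmetric one, and the contraction-with-signs device is the key that reduces the operator constraints to one tight scalar bound. Observing that the symmetric construction of the sufficiency step saturates each inequality in the chain confirms that $1/\sqrt3$ (respectively $1/\sqrt2$) is exactly tight. As a cross-check on the pairwise claim, one could alternatively invoke the known Busch joint-measurability criterion $\|\eta\hat n_1+\eta\hat n_2\|+\|\eta\hat n_1-\eta\hat n_2\|\le 2$, which for orthogonal axes collapses to $2\sqrt2\,\eta\le 2$; but the moment argument is preferable here because it dispatches both the pairwise and triplewise thresholds uniformly.
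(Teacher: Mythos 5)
Your proposal is correct and is essentially the paper's own argument (Appendix~\ref{app:jointmmtPOVMs}) specialized to orthogonal axes: your ``contraction-with-signs'' identity $\sum_{\vec s}\vec u_{\vec s}\cdot\vec b_{\vec s}=N\eta$ is the Bloch-vector transcription of the paper's trace computation $\eta=\tfrac{1}{2N}\sum_{X_1\dots X_N}\mathrm{Tr}[(\vec\sigma\cdot\vec m_{X_1\dots X_N})E_{X_1\dots X_N}]$, your Cauchy--Schwarz-plus-positivity step matches the paper's bound $\mathrm{Tr}[(\vec\sigma\cdot\hat m)E]\le\mathrm{Tr}[E]$, and your symmetric joint POVMs coincide (at saturation) with the paper's construction in Eq.~(\ref{eq:POVMthatsimulates}). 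The only difference is scope: the paper proves a general necessary/sufficient criterion for arbitrary axes and arbitrary $N$ (needed later for the trine case) and reads off the orthogonal thresholds as a corollary, whereas you derive the two orthogonal-axis thresholds directly, which is all the proposition requires.
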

In other words, the condition $1/\sqrt{3}<\eta\leq1/\sqrt{2}$ is
necessary and sufficient for the triple to be pairwise jointly measurable but
not triplewise jointly measurable.

This result is proven in Ref.~\cite{HRS08}, but for completeness, we provide
an independent proof in Appendix \ref{app:jointmmtPOVMs}.  For pedagogical
reasons, we also provide a geometric picture in the Bloch sphere of the
measurements that saturate these inequalities. To this end, defining the
index set $\mathcal{I}\subseteq\{1,2,3\}$, we introduce the (unnormalized
Bloch) vectors
\begin{equation}\label{Eq:vec:m2}
\vec{m}_{\{X_k\}_{k\in\mathcal{I}}}\equiv\sum_{k\in\mathcal{I}}X_{k}\hat{n}_{k},
\end{equation}
 where $X_k\in\{-1,+1\}$  and write the respective unit vectors as
$\hat{m}_{\{X_k\}_{k\in\mathcal{I}}}$.

The POVM that measures a noisy spin observable along the $\hat{z}$-axis
jointly with the one along the $\hat{x}$-axis and that saturates $\eta
\leq1/\sqrt{2}$ is of the form
\begin{equation} \label{eq:JMzandx}
\left\{  F_{X_{1}X_{2}}\equiv\frac{1}{2}
\Pi_{X_{1}X_{2}}\right\}
\end{equation}
where the projectors $\left\{  \Pi_{X_{1}X_{2}%
}\right\}$ are associated with Bloch vectors $\left\{  \hat{m}_{X_{1}X_{2}%
}\right\}$ forming the vertices of a square in the $\hat{z}$-$\hat{x}$ plane,
depicted in Fig.~\ref{Fig:JMXandZ}. Coarse-graining over $X_{2}$ yields the
POVM $\left\{  F_{\pm}^{1}\equiv \frac{1}{2}\unit+\frac{1}{2}
\vec{\sigma}\cdot\vec{s}_{\pm}^{1} \right\}$ where $\vec{s}_{\pm}^{1}\equiv
\pm\frac{1}{\sqrt{2}}\hat{z}$, which is to say, a measurement of the
$\eta$-sharp spin observable along the $\hat{z}$ axis with
$\eta=\frac{1}{\sqrt{2}}$, depicted in Fig.~\ref{Fig:JMXandZ}. Similarly,
coarse-graining over $X_{1}$ yields  noisy spin observable
associated with Bloch vectors $\vec{s}_{\pm}^{2}=\pm\frac{1}{\sqrt{2}}\hat
{x},$ which is to say along the $\hat{x}$ axis with
$\eta=\frac{1}{\sqrt{2}}$.  Joint measurements of every other pair of spin
axes are described similarly.

\begin{figure}[h!]
     \begin{center}
   \includegraphics[width=7cm]{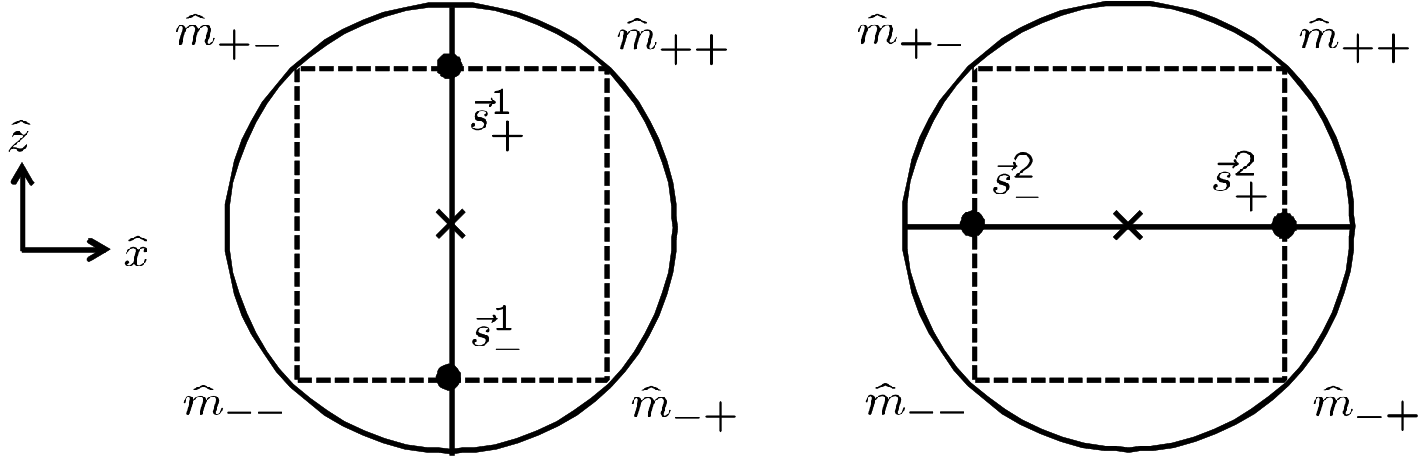}
     \end{center}
    \caption{\label{Fig:JMXandZ}
    Bloch sphere representation of the joint measurement of the noisy spin observables along the $\hat{x}$ and $\hat{z}$ axes.
    }
\end{figure}

The POVM that measures noisy spin observables along axes $\hat{z},\hat{x}$ and
$\hat{y}$ jointly and that saturates $\eta\leq1/\sqrt{3}$ is of the form
$\left\{  F_{X_{1}X_{2}X_{3}}\equiv\frac{1}{4}\Pi_{X_{1}X_{2}X_{3}}\right\}  $
where the projectors $\left\{  \Pi_{X_{1}X_{2}X_{3}}\right\}  $ are associated
with the Bloch vectors $\left\{  \hat{m}_{X_{1}X_{2}X_{3}}\right\}  $ forming
the vertices of a cube, depicted in Fig.~\ref{Fig:JMXandYandZ}. Coarse-graining over $X_{2}$
and $X_{3}$ yields the POVM $\left\{  F_{\pm}^{1}\equiv \frac{1}{2}\unit+\frac{1}{2} \vec{\sigma}\cdot\vec{s}_{\pm}^{1} \right\}$ where
$\vec{s}_{\pm}^{1}=\pm\frac{1}{\sqrt{3}}\hat{z}$, which is to say an $\eta$-sharp spin observable along the $\hat{z}$ axis with $\eta=1/\sqrt{3}$, also depicted in Fig.~\ref{Fig:JMXandYandZ}.
Similarly, coarse-graining over $X_{1}$ and $X_{3}$ yields a noisy spin
observable associated with Bloch vectors $\vec{s}_{\pm}^{2}=\pm\frac{1}%
{\sqrt{3}}\hat{x},$ while coarse-graining over $X_{1}$ and $X_{2}$ yields one
associated with $\vec{s}_{\pm}^{3}=\pm\frac{1}{\sqrt{3}}\hat{y}.$

\begin{figure}[h!]
     \begin{center}
   \includegraphics[width=8.5cm]{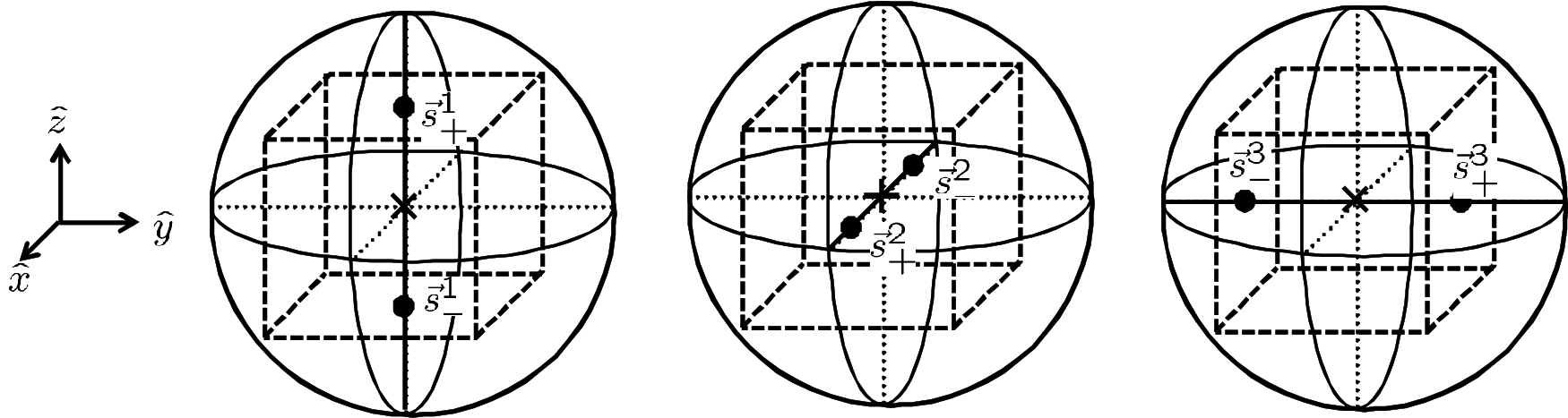}
     \end{center}
    \caption{\label{Fig:JMXandYandZ}
    Bloch sphere representation of the joint measurement of the noisy spin observables along the $\hat{x}$, $\hat{y}$ and $\hat{z}$ axes.
    }
\end{figure}

It is clear from these geometric representations that the reason there is a
gap between the noise required for jointly measuring a pair and that required
for jointly measuring the triple is that the length of the edge of a cube
inscribed in a sphere is less than that of a square inscribed in an equatorial
plane of that sphere.

Joint measurements of observables along orthogonal spin axes are not very
useful for approximating the OS correlations. Indeed, defining the probability of obtaining anti-correlated outcomes when a pair of nonprojective measurements is implemented jointly, averaged uniformly over the three pairs,
\begin{equation} \label{Retasharp}
R_{3}\equiv \frac{1}{3}\sum_{j\neq k}p(X_{j}\neq X_{k}|\mathcal{M}_{jk};\lambda
),
\end{equation}
we find the following result.

\begin{proposition} For the triple of measurements defined by Eqs.~(\ref{eq:noisyspinobservables}) and (\ref{eq:orthogonalspinaxes}), that is,
noisy spin observables along three orthogonal axes,
the quantum probability of anti-correlation when
a pair is measured jointly, averaged uniformly over the three pairs is
\begin{equation}
R_{3}^{\text{quantum}}=\frac{1}{2}.
\end{equation}
(independent of the quantum state).
\end{proposition}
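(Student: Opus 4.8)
The plan is to reduce the evaluation of $R_3$ in Eq.~\eqref{Retasharp} to the two-point correlator $\an{X_j X_k}$ of each pairwise joint measurement, and then to show that this correlator vanishes \emph{as an operator}, so that the probability of anti-correlation is exactly $1/2$ for every state and every pair. Concretely, writing the outcomes as $X_j,X_k\in\{+1,-1\}$, anti-correlation is the event $X_jX_k=-1$, so $p(X_j\neq X_k|\mathcal M_{jk};\rho)=\tfrac12\bigl(1-\an{X_jX_k}\bigr)$, and it suffices to prove $\an{X_jX_k}=0$ for each of the three pairs.

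First I would record the explicit symmetric joint POVM for two orthogonal noisy spin axes $\hat n_j$, $\hat n_k$, namely the sharpness-$\eta$ version of the square construction of Eq.~\eqref{eq:JMzandx},
\[
F_{X_j X_k}=\tfrac14\left(\unit+\eta\,\vec\sigma\cdot\bigl(X_j\hat n_j+X_k\hat n_k\bigr)\right),\qquad X_j,X_k\in\{+1,-1\},
\]
and check that it is a legitimate POVM for $\eta\le 1/\sqrt2$: each $F_{X_jX_k}$ is a positive multiple of a qubit effect whose Bloch vector $\eta(X_j\hat n_j+X_k\hat n_k)$ has length $\eta\sqrt2\le1$, and summing over $X_k$ reproduces the single-axis noisy observable $E^j_{X_j}=\tfrac12\unit+\tfrac{\eta X_j}{2}\vec\sigma\cdot\hat n_j$ of Eq.~\eqref{eq:noisyspinobservables} (the cross term drops because $\sum_{X_k}X_k=0$), as required by Eqs.~\eqref{eq:noisyspinobservables} and \eqref{eq:orthogonalspinaxes}. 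Next I would form the correlation operator
\[
G_{jk}\equiv\sum_{X_j,X_k}X_jX_k\,F_{X_j X_k}
\]
and show it is the zero operator. The identity term gives $\tfrac14\unit\bigl(\sum_{X_j}X_j\bigr)\bigl(\sum_{X_k}X_k\bigr)=0$, while the spin term gives $\tfrac{\eta}{4}\vec\sigma\cdot\bigl(\hat n_j\sum X_j^2X_k+\hat n_k\sum X_jX_k^2\bigr)=0$, both vanishing because $\sum_X X=0$ while $\sum_X X^2=2$. Hence $\an{X_jX_k}=\mathrm{Tr}(\rho\,G_{jk})=0$ for every $\rho$, so $p(X_j\neq X_k|\mathcal M_{jk};\rho)=\tfrac12$ independently of the state, and averaging over the three pairs yields $R_3^{\text{quantum}}=1/2$.

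The arithmetic is elementary, so there is no real obstacle; the one conceptual point I would flag is that the anti-correlation probability is \emph{not} fixed by the single-axis marginals alone. Writing a general joint POVM with the correct marginals in terms of one free effect $F_{++}$ shows that $G_{jk}$ can be made nonzero, so different joint measurements with identical one-axis statistics generally produce different correlators. The proposition must therefore be read as referring to the symmetric joint measurements constructed above (and in Appendix~\ref{app:jointmmtPOVMs}), and the vanishing of $G_{jk}$ — hence the state-independence of $R_3^{\text{quantum}}$ — is a special feature of the orthogonal, symmetric construction: the four Bloch vectors $\hat m_{X_jX_k}\propto X_j\hat n_j+X_k\hat n_k$ form a centred square, whose sum weighted by $X_jX_k$ is zero.
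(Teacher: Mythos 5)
Your proof is correct and is essentially the paper's own argument: the paper simply coarse-grains the anti-correlation effects of the square joint POVM of Eq.~(\ref{eq:JMzandx}) to obtain $F_{+-}+F_{-+}=\tfrac{1}{2}\unit$, which is the same computation as your vanishing correlator $G_{jk}=0$, since $F_{+-}+F_{-+}=\tfrac{1}{2}\left(\unit-G_{jk}\right)$. Your closing caveat --- that the state-independence is a feature of the symmetric joint measurement rather than being fixed by the single-axis marginals alone --- is a fair reading of the proposition and consistent with the paper's implicit restriction to the explicit joint POVMs it constructs.
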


\begin{proof}
The intuitive reason is that each pair of spin observables is unbiased.
More precisely, if we coarse-grain over the effects in the joint POVM
$\left\{  F_{X_{1}X_{2}}\right\}$ of Eq.~(\ref{eq:JMzandx}) with outcomes corresponding to
anti-correlation, we get
\begin{equation}
F_{+-}+F_{-+}=\frac{1}{2}\unit.
\end{equation}
Therefore, for all quantum states, the probability of finding anti-correlated results is $1/2.$
\end{proof}

There is consequently no bias towards anti-correlation and therefore this
triple of measurements is not helpful for approximating the OS correlations.

\subsection{Trine spin axes}
Our second example consists of noisy versions of spin observables along three
axes equally separated in a plane (i.e. separated by a trine or an angle of
$120^{\circ}$):
\begin{align}
\hat{n}_{1}  &  =(0,0,1)\nonumber\\
\hat{n}_{2}  &  =\left(  \frac{\sqrt{3}}{2},0,-\frac{1}{2}\right) \nonumber\\
\hat{n}_{3}  &  =\left(  -\frac{\sqrt{3}}{2},0,-\frac{1}{2}\right)  .
\label{eq:planarspinaxes}%
\end{align}
These are depicted in Fig.~\ref{Fig:JMn1andn2}.
\begin{proposition} \label{prop:etafortrine}
The triple of measurements defined by Eqs.~(\ref{eq:noisyspinobservables})
and (\ref{eq:planarspinaxes}), that is, noisy spin observables along three
equally-spaced axes in a plane, are pairwise jointly measurable if
$\eta\le\sqrt{3}-1\simeq0.732\,05,$ but triplewise jointly measurable only if
$\eta\leq2/3.$
\end{proposition}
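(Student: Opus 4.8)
The plan is to treat the pairwise and triplewise claims separately, exploiting that all six effects are \emph{unbiased} qubit effects, so that the Bloch-vector criteria for joint measurability apply directly.

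For the pairwise direction I would invoke the standard necessary-and-sufficient condition for the joint measurability of two unbiased qubit effects $\frac{1}{2}(\unit+\vec{a}\cdot\vec{\sigma})$ and $\frac{1}{2}(\unit+\vec{b}\cdot\vec{\sigma})$, namely $|\vec{a}+\vec{b}|+|\vec{a}-\vec{b}|\le 2$ (see Ref.~\cite{BGL95}; the same criterion underlies the orthogonal-axes result of Ref.~\cite{HRS08}). Here any pair of trine observables has $\vec{a}=\eta\hat{n}_j$, $\vec{b}=\eta\hat{n}_k$ with $\hat{n}_j\cdot\hat{n}_k=\cos 120^\circ=-\tfrac{1}{2}$. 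A one-line computation gives $|\vec{a}+\vec{b}|=\eta$ and $|\vec{a}-\vec{b}|=\sqrt{3}\,\eta$, so the condition reads $(1+\sqrt{3})\eta\le 2$, i.e. $\eta\le 2/(1+\sqrt{3})=\sqrt{3}-1$. Since the criterion is an equivalence, this establishes both the necessity and the sufficiency of $\eta\le\sqrt{3}-1$ for every pair, and the three pairs are equivalent by symmetry. (As a check, the same computation with $\hat{n}_j\cdot\hat{n}_k=0$ reproduces the orthogonal threshold $\eta\le 1/\sqrt{2}$ of Proposition~\ref{prop:etafororthogonaltriple}.)

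For the triplewise claim only the \emph{necessity} of $\eta\le 2/3$ is asserted, which I would prove by a witness argument. Suppose a joint POVM $\{G_{\vec{s}}\}$ exists, indexed by $\vec{s}=(s_1,s_2,s_3)\in\{+1,-1\}^3$, and write each effect as $G_{\vec{s}}=\frac{1}{2}(g^0_{\vec{s}}\unit+\vec{g}_{\vec{s}}\cdot\vec{\sigma})$. Positivity of a qubit operator means $g^0_{\vec{s}}\ge|\vec{g}_{\vec{s}}|$, while normalization $\sum_{\vec{s}}G_{\vec{s}}=\unit$ gives $\sum_{\vec{s}}g^0_{\vec{s}}=2$. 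The requirement that the single-axis marginals reproduce $E^i_{\pm}$ translates, at the level of Bloch vectors, into $\sum_{s_j,s_k}\vec{g}_{\vec{s}}=\eta\,s_i\hat{n}_i$ for each axis $i$ (with $\{i,j,k\}=\{1,2,3\}$). The key step is to evaluate the linear functional $\sum_{\vec{s}}\vec{m}_{\vec{s}}\cdot\vec{g}_{\vec{s}}$, where $\vec{m}_{\vec{s}}\equiv s_1\hat{n}_1+s_2\hat{n}_2+s_3\hat{n}_3$ is the all-three-index version of Eq.~(\ref{Eq:vec:m2}), in two ways. Using the marginal constraints one finds $\sum_{\vec{s}}\vec{m}_{\vec{s}}\cdot\vec{g}_{\vec{s}}=\sum_i\hat{n}_i\cdot(2\eta\hat{n}_i)=6\eta$. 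On the other hand, $\vec{m}_{\vec{s}}\cdot\vec{g}_{\vec{s}}\le|\vec{m}_{\vec{s}}|\,|\vec{g}_{\vec{s}}|\le|\vec{m}_{\vec{s}}|\,g^0_{\vec{s}}$, and since $\hat{n}_i\cdot\hat{n}_j=-\tfrac{1}{2}$ one computes $|\vec{m}_{\vec{s}}|^2=3-\sum_{i<j}s_is_j$, so that $|\vec{m}_{\vec{s}}|=0$ for the two uniform strings $(+,+,+)$, $(-,-,-)$ and $|\vec{m}_{\vec{s}}|=2$ otherwise. Hence $6\eta\le\sum_{\vec{s}}|\vec{m}_{\vec{s}}|\,g^0_{\vec{s}}\le 2\sum_{\vec{s}}g^0_{\vec{s}}=4$, giving $\eta\le 2/3$.

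The main obstacle is not the computation but the choice of witness: the functional must be built from the very vectors $\vec{m}_{\vec{s}}$ whose vanishing on the uniform strings (a consequence of $\sum_k\hat{n}_k=0$ for a trine) discards two of the eight positivity budgets and so tightens the bound to $2/3$. This mirrors, in the language of vectors rather than inscribed polyhedra, the geometric gap between a cube and a square noted for the orthogonal triple. I would also flag that, consistent with the careful wording ``only if'', I am establishing necessity alone: a matching construction showing sufficiency at $\eta=2/3$ is not required by the statement and, unlike the orthogonal case, is not being claimed.
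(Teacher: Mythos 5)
Your proof is correct, and it splits into one half that coincides with the paper's argument and one half that is genuinely different. For the triplewise bound you have essentially reconstructed the paper's own necessity argument: Appendix~\ref{app:jointmmtPOVMs} proves a general $N$-axis theorem whose necessary condition, Eq.~(\ref{eq:etanecessary}), is derived from precisely your identity $2N\eta=\sum_{\vec{s}}|\vec{m}_{\vec{s}}|\,\mathrm{Tr}\left[\left(\vec{\sigma}\cdot\hat{m}_{\vec{s}}\right)G_{\vec{s}}\right]$ together with $\mathrm{Tr}\left[\left(\vec{\sigma}\cdot\hat{m}_{\vec{s}}\right)G_{\vec{s}}\right]\leq\mathrm{Tr}\left[G_{\vec{s}}\right]$ and $\sum_{\vec{s}}\mathrm{Tr}\left[G_{\vec{s}}\right]=2$; your streamlined finish via the uniform bound $|\vec{m}_{\vec{s}}|\leq 2$ is legitimate here exactly because all nonvanishing $|\vec{m}_{\vec{s}}|$ are equal (a consequence of $\sum_k\hat{n}_k=0$ for the trine), and it lands on the same $2/3$. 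Where you diverge is the pairwise half: the paper never invokes Busch's exact criterion $|\vec{a}+\vec{b}|+|\vec{a}-\vec{b}|\leq 2$, but instead proves sufficiency constructively, exhibiting the parent POVM of Eq.~(\ref{eq:defnJMtrinepair}) [an instance of the general simulating POVM of Eq.~(\ref{eq:POVMthatsimulates})], supported on the vertices of a square of Bloch vectors $\hat{m}_{X_1X_2}$ with unequal weights $w_{++}=w_{--}=1/(\sqrt{3}+1)$ and $w_{+-}=w_{-+}=\sqrt{3}/(\sqrt{3}+1)$, whose marginals are exactly the $\eta=\sqrt{3}-1$ noisy observables. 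Your route is shorter and, since the criterion is an equivalence, delivers necessity at $\sqrt{3}-1$ for pairs as a bonus; the paper's route buys an explicit joint measurement, a geometric picture, and a general theorem in which the necessary bound of Eq.~(\ref{eq:etanecessary}) and the sufficient bound of Eq.~(\ref{eq:etasufficient}) coincide for all four cases treated. One consequence worth noting against your closing remark: while the proposition's ``only if'' wording indeed demands necessity alone, the paper's corollary in Appendix~\ref{app:jointmmtPOVMs} shows that $\eta\leq 2/3$ is also \emph{sufficient} for the trine triple, realized by the hexagonal joint POVM with $w_{+++}=w_{---}=0$; so the matching construction you explicitly declined to claim does exist and makes the threshold tight.
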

In other words, the condition $2/3<\eta\leq\sqrt{3}-1$ is sufficient
for the triple to be pairwise jointly measurable but not triplewise jointly measurable.

Again, the proof is provided in Appendix \ref{app:jointmmtPOVMs}, but we can understand the result geometrically. The trine directions $\hat{n}_{1},\hat{n}_{2}$
and $\hat{n}_{3}$ of Eq.~(\ref{eq:planarspinaxes}) are indicated in Fig.~\ref{Fig:JMn1andn2}.
The POVM that measures a noisy spin observable along the $\hat{n}_{1}$-axis
jointly with the one along the $\hat{n}_{3}$-axis and that saturates $\eta
\leq\sqrt{3}-1$ is of the form
\begin{equation}\label{eq:defnJMtrinepair}
\left\{  F_{X_{1}X_{2}}\equiv w_{X_{1}X_{2}}\Pi_{X_{1}X_{2}}\right\}
\end{equation}
where
\begin{align}
w_{++}=w_{--}= \frac{1}{\sqrt{3}+1}\\
w_{+-}=w_{-+}= \frac{\sqrt{3}}{\sqrt{3}+1},
\end{align}
and where the projectors $\left\{  \Pi_{X_{1}X_{2}}\right\}
$ are associated with Bloch vectors $\left\{  \hat{m}_{X_{1}X_{2}}\right\}  $
forming the vertices of a square, depicted in Fig.~\ref{Fig:JMn1andn2}. Coarse-graining over
$X_{2}$ yields the POVM $\left\{  F_{\pm}^{1}\equiv\frac{1}{2}\unit+\frac{1}{2} \vec{\sigma}\cdot\vec{s}_{\pm}^{1} \right\}$ with $\vec{s}_{\pm}^{1}=\pm\left(  \sqrt
{3}-1\right)  \hat{n}_{1}$ depicted in Fig.~\ref{Fig:JMn1andn2}. Similarly, coarse-graining
over $X_{1}$ yields a noisy spin observable associated with Bloch vectors
$\vec{s}_{\pm}^{3}=\pm\left(  \sqrt{3}-1\right)  \hat{n}_{3}.$ Joint
measurements of every other pair of spin axes are described similarly.

\begin{figure}[h!]
     \begin{center}
   \includegraphics[width=8cm]{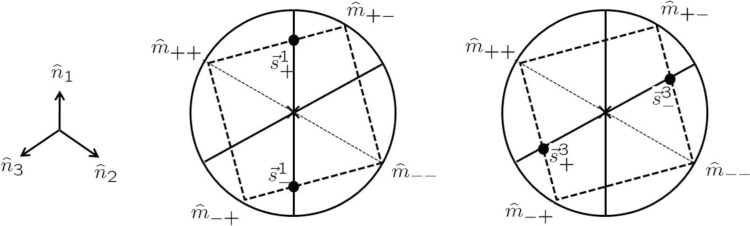}
     \end{center}
    \caption{\label{Fig:JMn1andn2}
    Bloch sphere representation of the joint measurement of the noisy spin observables along trine axes $\hat{n}_1$ and $\hat{n}_3$.
    }
\end{figure}

The POVM that measures noisy spin observables along axes $\hat{n}_{1},\hat
{n}_{2}$ and $\hat{n}_{3}$ jointly and that saturates $\eta\leq2/3$ is of the
form $\left\{  F_{X_{1}X_{2}X_{3}}\equiv w_{X_1 X_2 X_3} \Pi_{X_{1}X_{2}X_{3}%
}\right\}  $ where $w_{+++}=w_{---}=0$ (implying that one
never obtains a triplewise coincidence in the joint measurement) while $w_{+--}=w_{-++}=w_{+-+}=w_{-+-}=w_{--+}=w_{++-}=1/3$ and where the projectors $\left\{  \Pi_{X_{1}X_{2}X_{3}}\right\}$ are associated with Bloch vectors $\left\{  \hat{m}_{X_{1}X_{2}X_{3}%
}\right\} $
forming the vertices of a hexagon
for the six values of $X_{1}X_{2}X_{3}$ such that $w_{X_{1}X_{2}X_{3}}\ne 0$, as depicted in
Fig.~\ref{Fig:JMn1andn2andn3}. Coarse-graining over $X_{2}$ and $X_{3}$ yields the POVM $\left\{
F_{\pm}^{1}\equiv\frac{1}{2}\unit+\frac{1}{2} \vec{\sigma}\cdot\vec{s}_{\pm}^{1} \right\}$
where $\vec{s}_{\pm}^{1}=\pm\frac{2}{3}\hat{n}
_{1},$ depicted in Fig.~\ref{Fig:JMn1andn2andn3}. Similarly, coarse-graining over $X_{1}$ and
$X_{3}$ yields a noisy spin observable associated with Bloch vectors $\vec
{s}_{\pm}^{2}=\pm\frac{2}{3}\hat{n}_{2},$ while coarse-graining over $X_{1}$
and $X_{2}$ yields one associated with $\vec{s}_{\pm}^{3}=\pm\frac{2}{3}%
\hat{n}_{3}.$  Note that, unlike the three previous examples, the Bloch-directions of the fine-grained
(saturating) POVM elements coincide with the Bloch-directions of the coarse-grained POVM elements.
This is a peculiarity of geometry, and is a feature also seen in the dual problem of identifying pure-state
ensembles that saturate the bounds  of so-called EPR-steering inequalities \cite{Saunders10}.

\begin{figure}[h!]
     \begin{center}
   \includegraphics[width=8.5cm]{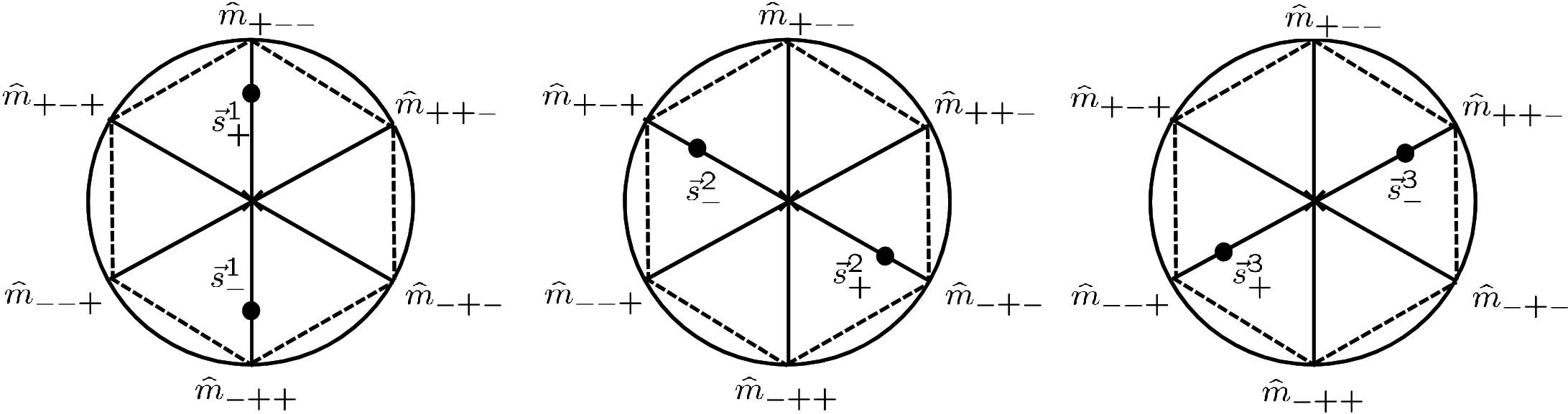}
     \end{center}
    \caption{\label{Fig:JMn1andn2andn3}
    Bloch sphere representation of the joint measurement of the noisy spin observables along trine axes $\hat{n}_1$, $\hat{n}_2$ and $\hat{n}_3$.
    }
\end{figure}

Given the discussion in Sec.~\ref{Sec:Mermin}, one might expect
the trine spin observables to instantiate a better
approximation of the OS correlations. Indeed, we have the
following proposition that supports this intuition.

\begin{proposition} For the triple of measurements defined by Eqs. (\ref{eq:noisyspinobservables}) and (\ref{eq:planarspinaxes}), that is, a triple of noisy spin observables along trine axes,
the quantum probability of anti-correlation when a pair is measured jointly,
averaged uniformly over the three pairs is
\begin{equation}
R_{3}^{\text{quantum}}= \frac{\sqrt{3}}{\sqrt{3}+1} \simeq 0.63397.
\end{equation}
(independent of the quantum state).
\end{proposition}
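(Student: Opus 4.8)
The plan is to mirror the proof of the preceding proposition for orthogonal axes: compute the probability of anti-correlation for a single pair directly from the explicit saturating joint POVM of Eq.~(\ref{eq:defnJMtrinepair}), show that the relevant coarse-grained effect is proportional to the identity (hence state-independent), and then invoke the threefold symmetry among the pairs. First I would fix the pair $(\hat{n}_1,\hat{n}_3)$ and identify the two outcomes contributing to anti-correlation, namely $(X_1,X_2)=(+,-)$ and $(-,+)$, so that the anti-correlation effect is $F_{+-}+F_{-+}=w_{+-}\Pi_{+-}+w_{-+}\Pi_{-+}$.

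The key step is the geometric input. The weights attached to the two anti-correlated outcomes are equal, $w_{+-}=w_{-+}=\tfrac{\sqrt{3}}{\sqrt{3}+1}$, and their associated (unnormalized) Bloch vectors, $\vec{m}_{+-}=\hat{n}_1-\hat{n}_3$ and $\vec{m}_{-+}=-(\hat{n}_1-\hat{n}_3)$, are antipodal [{\em cf.} Eq.~(\ref{Eq:vec:m2})]. For a qubit, antipodal Bloch vectors correspond to orthogonal states, so the two rank-one projectors are complementary, $\Pi_{+-}+\Pi_{-+}=\unit$. Consequently
\begin{equation}
F_{+-}+F_{-+}=\frac{\sqrt{3}}{\sqrt{3}+1}\left(\Pi_{+-}+\Pi_{-+}\right)=\frac{\sqrt{3}}{\sqrt{3}+1}\unit,
\end{equation}
so that for \emph{every} density operator $\rho$ one has $p(X_1\neq X_2|\mathcal{M}_{13};\rho)=\mathrm{Tr}\!\left[\rho(F_{+-}+F_{-+})\right]=\tfrac{\sqrt{3}}{\sqrt{3}+1}$, independent of the state.

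Finally I would observe that the identical computation applies to the pairs $(\hat{n}_1,\hat{n}_2)$ and $(\hat{n}_2,\hat{n}_3)$, since the saturating joint POVM of each pair has the same structure (equal cross-weights and antipodal off-diagonal Bloch vectors) by the symmetry of the trine. Averaging the three identical, state-independent probabilities with uniform weight in Eq.~(\ref{Retasharp}) then yields $R_3^{\text{quantum}}=\tfrac{\sqrt{3}}{\sqrt{3}+1}\approx0.63397$, as claimed. The only point requiring care --- and where the trine case differs from a naive expectation --- is the unequal weighting $w_{++}=w_{--}\neq w_{+-}=w_{-+}$ in the joint POVM: it is precisely the equality of the two \emph{cross}-weights (rather than of all four) that makes the anti-correlation effect collapse to a multiple of the identity. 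Were the cross-weights unequal, the probability would acquire state dependence, so verifying this equality directly from Eq.~(\ref{eq:defnJMtrinepair}) is the crux of the argument.
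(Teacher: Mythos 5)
Your proposal is correct and follows essentially the same route as the paper: coarse-graining the two anti-correlation effects of the saturating joint POVM in Eq.~(\ref{eq:defnJMtrinepair}) to obtain $F_{+-}+F_{-+}=\tfrac{\sqrt{3}}{\sqrt{3}+1}\unit$, whence state-independence and, by trine symmetry, the stated value of $R_3^{\text{quantum}}$. Your explicit justification (equal cross-weights together with antipodal Bloch vectors $\vec{m}_{+-}=-\vec{m}_{-+}$, so $\Pi_{+-}+\Pi_{-+}=\unit$) simply spells out the computation the paper leaves implicit.
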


\begin{proof}
If, in the joint measurement of Eq.~(\ref{eq:defnJMtrinepair}), we coarse-grain the two effects that correspond to anti-correlation, we obtain
\begin{equation}
F_{+-} + F_{-+}= \frac{\sqrt{3}}{\sqrt{3}+1} \unit,
\end{equation}
from which the result follows trivially.
\end{proof}

Can we explain this degree of anti-correlation within a generalized-noncontextual ontological model? Given that the
measurements involved are nonprojective,  we need not represent them as assigning
deterministic outcomes for every ontic state.  Indeed, as discussed in Sec.~\ref{sec:PNC}, for nonprojective measurements, one is not warranted in assuming outcome
determinism. It follows that the bound of $2/3$ on the
probability of anti-correlation, Eq.~(\ref{eq:NCboundonR3}), which we derived under the
assumption of measurements being projective, need not apply.  Conceivably, the
bound implied by generalized noncontextuality could be smaller for nonprojective measurements, and the quantum
degree of anti-correlation might therefore still violate it. As it turns out
however, the bound is actually \emph{larger }for nonprojective
measurements, and therefore the quantum degree of anti-correlation is entirely consistent with an ontological model that is measurement-noncontextual and outcome-deterministic for projective measurements. We show this
now.

\subsection{Generalized-noncontextual models for joint measurements of POVMs}

Each measurement that is modeled by a POVM of the form of Eq.~(\ref{eq:noisyspinobservables}) can be considered as a convex combination of a projective measurement and a measurement of the trivial two-outcome POVM $\{\unit/2,\unit/2 \}$, as seen in Eq.~(\ref{eq:sharpplusnoise}).
In Ref.~\cite{Spe04}, it is proven that within any ontological model, the response function that represents a convex combination of measurement procedures is simply the convex combination of the associated response functions. Ref.~\cite{Spe04} also contains a proof that within a measurement-noncontextual model, the response function that represents each outcome of the trivial two-outcome POVM $\{\unit/2,\unit/2\}$ is the uniform function $1/2$, i.e., regardless of the value of $\lambda$ in the ontological model, the two outcomes occur with equal probability.   We also recall from Sec.~\ref{sec:JustifyingOD} that in models of quantum theory, preparation noncontextuality implies outcome determinism for projective measurements.   From these facts,
we obtain the following result.
\begin{lemma} In an ontological model that is generalized-noncontextual,
the response function for the $\eta$-sharp spin observable of Eq.~(\ref{eq:noisyspinobservables}),
denoted by $\mathcal{M}_{k}$,
is
\begin{equation}
p\left(  X_{k}|\mathcal{M}_{k}; \lambda\right)  =\eta\lbrack X_{k}\left(
\lambda\right)  ]+\left(  1-\eta\right)  \left(  \frac{1}{2}[0]+\frac{1}%
{2}[1]\right)  ,
\end{equation}
where $\left[  X(\lambda)\right]  $ denotes the response function
$p\left(  X|\lambda\right)  =1$ if $X=X(\lambda)$ and $0$ otherwise.
\end{lemma}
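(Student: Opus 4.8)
The plan is to assemble the claimed response function directly from the three ingredients quoted immediately before the lemma, exploiting the fact that the $\eta$-sharp observable $\mathcal{M}_k$ is by construction a convex mixture of a sharp (projective) measurement and the trivial measurement. No new machinery is needed; the whole argument is a substitution.

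First I would recall from Eq.~(\ref{eq:sharpplusnoise}) that the POVM $\{E_+^k,E_-^k\}$ is the convex combination $(1-\eta)\{\unit/2,\unit/2\}+\eta\{\Pi_+^k,\Pi_-^k\}$, where the first summand is the trivial two-outcome measurement and the second is the projective spin measurement along $\hat n_k$. The key point is that this decomposition holds at the level of \emph{measurement procedures}: one realizes $\mathcal{M}_k$ by tossing a biased coin and performing the trivial measurement with probability $1-\eta$ and the sharp measurement with probability $\eta$. I would then invoke the result of Ref.~\cite{Spe04} that any ontological model represents a convex combination of measurement procedures by the corresponding convex combination of response functions, giving
\[
p(X_k|\mathcal{M}_k;\lambda)=\eta\,p(X_k|\Pi^k;\lambda)+(1-\eta)\,p(X_k|\{\unit/2,\unit/2\};\lambda).
\]

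The second step is to pin down each response function on the right-hand side using noncontextuality. For the trivial measurement I would appeal to the fact (again from Ref.~\cite{Spe04}) that in a measurement-noncontextual model the response function for each outcome of $\{\unit/2,\unit/2\}$ is the constant $1/2$, i.e.\ $\tfrac12[0]+\tfrac12[1]$; this holds because $\{\unit/2,\unit/2\}$ is operationally equivalent to ignoring the system and flipping a fair coin. For the sharp component I would use the result recalled in Sec.~\ref{sec:JustifyingOD} that, in quantum theory, preparation noncontextuality implies outcome determinism for projective measurements. Since a generalized-noncontextual model is by definition \emph{both} preparation- and measurement-noncontextual, the projective response function must be of the deterministic form $p(X_k|\Pi^k;\lambda)=[X_k(\lambda)]$. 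Substituting both expressions into the display yields precisely the claimed result.

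I do not anticipate a genuine obstacle: the lemma is essentially a bookkeeping consequence of the three cited facts. The only points demanding care are (i) making explicit that ``generalized-noncontextual'' licenses both the measurement-noncontextuality input (fixing the trivial POVM) and the preparation-noncontextuality input (which, via Sec.~\ref{sec:JustifyingOD}, delivers outcome determinism for the projective part), and (ii) emphasizing that the convex-combination rule applies at the level of \emph{procedures}, so no extra assumption is smuggled in. A subtlety worth flagging is that outcome determinism is guaranteed only for the projective component and \emph{not} for $\mathcal{M}_k$ itself, which is exactly why the resulting response function remains genuinely indeterministic whenever $\eta<1$.
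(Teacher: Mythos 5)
Your proposal is correct and follows essentially the same route as the paper, which likewise obtains the lemma by combining the convex decomposition of Eq.~(\ref{eq:sharpplusnoise}) with the two results of Ref.~\cite{Spe04} (convex combinations of procedures map to convex combinations of response functions, and measurement noncontextuality fixes the trivial POVM's response function to $\tfrac{1}{2}[0]+\tfrac{1}{2}[1]$) and the implication, recalled in Sec.~\ref{sec:JustifyingOD}, from preparation noncontextuality to outcome determinism for projective measurements. Your closing remark that determinism is licensed only for the projective component, so that $p(X_k|\mathcal{M}_k;\lambda)$ remains indeterministic for $\eta<1$, is exactly consistent with the paper's treatment.
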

This yields a strong constraint on the response function for the joint
measurement, denoted $\mathcal{M}_{12}$, of $\eta$-sharp spin observables
along distinct axes. The joint response function $p\left(
X_{1},X_{2}|\mathcal{M}_{12}; \lambda\right) $ must yield $p\left(
X_{1}|\mathcal{M}_{1};\lambda\right)  $ when averaged over $X_{2}$ and
$p\left( X_{2}|\mathcal{M}_{2}; \lambda\right)  $ when averaged over $X_{1}.$
\ The most general form that can recover these
marginals is%
\begin{align}
p(X_{1},X_{2}|\mathcal{M}_{12};\lambda)  &  =\alpha\lbrack X_{1}(\lambda)][X_{2}%
(\lambda)]\\ \nonumber
&  +\beta\lbrack X_{1}(\lambda)]\left(  \frac{1}{2}[0]+\frac{1}{2}[1]\right)\\ \nonumber
&+\gamma\left(  \frac{1}{2}[0]+\frac{1}{2}[1]\right)  [X_{2}(\lambda)]\\ \nonumber
&  +\delta\left(  \frac{1}{2}[0]\left[  0\right]  +\frac{1}{2}[1][1]\right)\\ \nonumber
&+\varepsilon\left(  \frac{1}{2}[0][1]+\frac{1}{2}[1]\left[  0\right]  \right)
.
\end{align}
where the marginals are
\begin{align}
p(X_{1}|M_{12}; \lambda)  &  =\left(  \alpha+\beta\right)  [X_{1}
(\lambda)]\\ \nonumber
&+\left(  \gamma+\delta+\varepsilon\right)  \left(  \frac{1}%
{2}[0]+\frac{1}{2}[1]\right) \\
p(X_{2}|M_{12}; \lambda)  &  =\left(  \alpha+\gamma\right)  [X_{2}%
(\lambda)]\\ \nonumber
&+\left(  \beta+\delta+\varepsilon\right)  \left(  \frac{1}%
{2}[0]+\frac{1}{2}[1]\right),
\end{align}
so that we require%
\begin{align}
\alpha+\beta &  =\alpha+\gamma=\eta,\\
\gamma+\delta+\varepsilon &  =\beta+\delta+\varepsilon=1-\eta.
\end{align}
We infer that $\beta=\gamma.$

In order to give the model the best chance of reproducing the
operational statistics, we consider what values of $\alpha,\beta,\gamma,$
$\delta$ and $\varepsilon$ achieve the largest possible amount of
anti-correlation. The $\delta$ terms always yields correlation, while the
$\beta$ and $\gamma$ terms yield correlation as often as anti-correlation.
\ Only the $\alpha$ and $\varepsilon$ terms can have anti-correlation more
frequently than correlation. Thus, to maximize the amount of
anti-correlation, one sets $\beta=\gamma=\delta=0.$ It then follows that
$\alpha=\eta$ and $\varepsilon=1-\eta.$

 The same reasoning applies for the joint measurements of $\mathcal{M}%
_{1}$ and $\mathcal{M}_{3}$ and of $\mathcal{M}_{2}$ and $\mathcal{M}_{3},$ so
that for all $i,j\in \{1,2,3\}$ such that $i\ne j$,
\begin{align}
p(X_{i},X_{j}|\mathcal{M}_{ij};\lambda)  &  =\eta\lbrack X_{i}(\lambda
)][X_{j}(\lambda)] \\ \nonumber
&+\left(  1-\eta\right)  \left(  \frac{1}{2}[0][1]+\frac
{1}{2}[1]\left[  0\right]  \right).
\end{align}
\ The question then arises of how much anti-correlation one can have on average
for a pair of measurements (assuming the pair is chosen uniformly at
random), that is, what is the upper bound on $R_{3}$ of Eq.~(\ref{Retasharp})?
\ For every $\lambda,$ at most two out of the three products $[X_{1}%
(\lambda)][X_{2}(\lambda)],$ $[X_{1}(\lambda)][X_{3}(\lambda)]$ and
$[X_{2}(\lambda)][X_{3}(\lambda)]$ can yield anti-correlation, so the
probability of anti-correlation for the $\eta$ term is at most $2/3.$
\ Meanwhile, the $1-\eta$ term always yield anti-correlation. Therefore,%
\begin{equation}
    R_{3}  \leq\eta\left(  \frac{2}{3}\right)
    +\left(1-\eta\right) =1-\frac{\eta}{3}.
\end{equation}

One might have expected that the ability to add noise to the response function
in the ontological model would not help explain a high degree of
anti-correlation, but such an expectation fails to take into account the
fact that the noise can itself be anti-correlated and thereby explain \emph{more} anti-correlation in the statistics. Thus
rather than only being able to explain a probability of anti-correlation of 2/3
in a generalized-noncontextual model, we can explain a probability of anti-correlation of
$1-\frac{\eta}{3}$ which is always \emph{greater than }$2/3$ because $\eta
\leq1.$ For instance, for $\eta=1/\sqrt{2},$ the upper bound on $R_{3}$ is $1-1/(3\sqrt{2})\simeq 0.76430,$ while for $\eta=\sqrt{3}-1,$ it is $(4-\sqrt{3})/3\simeq 0.75598.$

Because the degree of anti-correlation we found in quantum theory was less than
$2/3$ in both examples, there is no problem providing a generalized-noncontextual model.
 More precisely, the degree of quantum anti-correlation obtained in the
example with orthogonal spin axes can be explained noncontextually because
$R_{3}^{\text{quantum}}=1/2<$ $0.76430$, and the degree
obtained in the example with the trine spin axes can be explained
noncontextually because  $R_{3}^{\text{quantum}}=0.63397<0.75598.$

Is it the case that for \emph{all} triples of nonprojective quantum measurements that can be implemented pairwise but not triplewise, the strength of anti-correlations can be explained by a generalized-noncontextual ontological model?  The question remains open, but we expect a positive answer.

\section{Concluding Remarks}

There has been a lot of work in recent years on ``foils to
quantum theory'', operational theories that one studies not
primarily as competitors to quantum theory, but as useful tools
for getting a handle on the principles underlying it.  Only by
situating quantum theory in a landscape of possible theories
does it make sense to speak of the principles that pick it out,
 to answer Wheeler's question: ``how come the quantum?''.  Specker's parable
provides an interesting new kind of foil, because the kind of
complementarity it exhibits --- three measurements
that can be implemented jointly pairwise but not triplewise --
is something that is not found among projective measurements in
quantum theory.  This prompts the question: why does quantum
theory not have this sort of complementarity?  It
might be interesting, for instance, to deduce the
information-processing power of a foil theory incorporating
such correlations. Furthermore, even if we consider a kind of
complementarity that can be accommodated in quantum theory,
such as five measurements that can be measured in adjacent
pairs, there is an interesting question about why the
correlations exhibited by quantum theory are not stronger.  Why
is  quantum theory not more contextual or more
nonlocal~\cite{PR94,vanDam, Ll.Masanes:PRA:012112,G.Brassard:PRL:2006, N.Linden:PRL:2007,
N.Brunner:PRL:2009,IC,J.Allcock:PRA:2009, M.Navascues,
A.Acin:PRL:140404,J.Oppenheim:1004.2507}? The same sort of question arises for
quantum examples of triples of \emph{nonprojective}
measurements that can be implemented pairwise but not
triplewise.  Why can these not yield the strength of
anti-correlations required to obtain a no-go theorem for generalized noncontextuality?  We
hope that these questions might provide a new angle on the
problem of deriving the structure of quantum theory from within
a landscape of operational foil theories.

\section{Acknowledgement}
This project was inspired by Ernst Specker's talk at the
workshop ``Information Primitives and Laws of Nature'' which
took place at ETH, Z\"{u}rich in May 2008.  On the topic of the
joint measurability of POVMs, we thank Robin Blume-Kohout for a
motivating discussion and David Pegg for helpful comments. On
the topic of nonlocal OS correlations, we acknowledge useful
discussions with Ben Toner and Jean-Daniel Bancal.
We also thank Allen Stairs for comments on a draft of this article and Lucien Hardy for
bringing Ref.~\cite{D.Boschi:PRL:2755} to our attention and for
suggesting the connection between Specker's parable and the
failure of transitivity of implication in proofs of
nonlocality.  Finally, we thank Ad\'an Cabello for pointing out the connection between our Kochen-Specker proof based on the failure of transitivity of implication and Clifton's proof.  YCL and HMW acknowledge funding from the Australian Research Council. Part of this work was conducted
during visits by YCL and HMW to Perimeter Institute, and RWS to
Australia, through
the PIAF (Perimeter Institute Australia
Foundations) collaboration. Perimeter Institute is supported by
the Government of Canada through Industry Canada and by the
Province of Ontario through the Ministry of Research and
Innovation. YCL was also supported by the Swiss NCCR
"Quantum Photonics" and the European ERC-AG QORE.

\appendix

\section{Explicit form of OS correlations in the double-query,
3-box system} \label{App:SpeckerCorrelations}

Perfect negative correlation in the outcomes of the joint
observables $\mathcal{M}_{12}, \mathcal{M}_{13}$ and
$\mathcal{M}_{23}$ constrains their statistics to be of the
form
\begin{align}\label{eq:MABstats}
p(0,1|\mathcal{M}_{12};\mathcal{P}_{\ast})  & =q_{12},\quad
p(1,0|\mathcal{M}_{12};\mathcal{P}_{\ast})  & =1-q_{12}, \nonumber \\
p(0,1|\mathcal{M}_{13};\mathcal{P}_{\ast})  & =q_{13},\quad
p(1,0|\mathcal{M}_{13};\mathcal{P}_{\ast})  & =1-q_{13}, \nonumber \\
p(0,1|\mathcal{M}_{23};\mathcal{P}_{\ast})  & =q_{23},\quad
p(1,0|\mathcal{M}_{23};\mathcal{P}_{\ast})  & =1-q_{23}
\end{align}
for $0\leq q_{12},q_{13},q_{23}\leq1.$
This fixes the statistics for the individual measurements
$\mathcal{M}_{1},$ $\mathcal{M}_{2}$ and $\mathcal{M}_{3}$ through the
marginals of Eq.~(\ref{eq:MABstats}). Specifically,
\begin{align}
p(0|\mathcal{M}_{1};\mathcal{P}_{\ast})  & =\sum_{X_{2}}p(0,X_{2}%
|\mathcal{M}_{12};\mathcal{P}_{\ast})=q_{12},\\
& =\sum_{X_{3}}p(0,X_{3}|\mathcal{M}_{13};\mathcal{P}_{\ast})=q_{13},\\
p(0|\mathcal{M}_{2};\mathcal{P}_{\ast})  & =\sum_{X_{1}}p(X_{1},0|\mathcal{M}%
_{12};\mathcal{P}_{\ast})=1-q_{12},\\
& =\sum_{X_{3}}p(0,X_{3}|\mathcal{M}_{23};\mathcal{P}_{\ast})=q_{23},\\
p(0|\mathcal{M}_{3};\mathcal{P}_{\ast})  & =\sum_{X_{1}}p(X_{1},0|\mathcal{M}%
_{13};\mathcal{P}_{\ast})=1-q_{13},\\
& =\sum_{X_{2}}p(X_{2},0|\mathcal{M}_{23};\mathcal{P}_{\ast})=1-q_{23}%
\end{align}
 All together, we find that we must have
\[
q_{12}=q_{13}=q_{23}=\frac{1}{2},
\]
which implies that the correlations are of the form of Eq.~(\ref{eq:OScorrelations}).

\section{Proof of theorem~\ref{Thm:Fine}}
\label{App:ProofFineTheorem}

\noindent Here, we provide the proof of theorem~\ref{Thm:Fine}
as follows.

\begin{proof} By measurement noncontextuality, the response function depends
only on the equivalence class of a measurement procedure.  By
outcome determinism, the response
function for every measurement $\mathcal{M}_{S}$ is deterministic, so that $%
p\left( X_{S}|\mathcal{M}_{S};\lambda \right) \in \left\{ 0,1\right\} .$ \
In particular, this is true for singleton sets.  It follows that%
\begin{equation}
p\left( X_{S}|\mathcal{M}_{S};\lambda \right) =\prod\limits_{s\in S}p\left(
X_{s}|\mathcal{M}_{s};\lambda \right) .
\end{equation}%

We can then define a joint distribution $p\left(
X_{1}...X_{N}|\lambda \right) $ yielding the correct marginals
by the product of the single measurement response functions,%
\begin{equation}
p\left( X_{1}...X_{N}|\lambda \right) \equiv \prod\limits_{s=1}^{N}p\left(
X_{s}|\mathcal{M}_{s};\lambda \right) .
\end{equation}
By assumption of the empirical adequacy of the ontological model, there exists
a distribution $p(\lambda |\mathcal{P})$ for all $\mathcal{P},$ such
that%
\begin{equation}
\int d\lambda p\left( X_{S}|\mathcal{M}_{S},\mathcal{\lambda }\right)
p(\lambda |\mathcal{P})=p\left( X_{S}|\mathcal{M}_{S};\mathcal{P}\right) .
\end{equation}%
Using $p(\lambda |\mathcal{P}),$ we can define%
\begin{equation}
p\left( X_{1}...X_{N}|\mathcal{P}\right) =\int d\lambda p\left(
X_{1}...X_{N}|\lambda \right) p(\lambda |\mathcal{P})
\end{equation}%
which has marginal on $X_{S}$ of%
\begin{eqnarray}
p\left( X_{S}|\mathcal{P}\right)  &=&\sum_{X_{\nu }:\nu \notin S}p\left(
X_{1}...X_{N}|\mathcal{P}\right)  \notag\\
&=&\int d\lambda \sum_{X_{\nu }:\nu \notin S}p\left( X_{1}...X_{N}|\mathcal{%
\lambda }\right) p(\lambda |\mathcal{P}) \notag\\
&=&\int d\lambda p\left( X_{S}|\mathcal{M}_{S},\mathcal{\lambda }\right)
p(\lambda |\mathcal{P}) \notag\\
&=&p\left( X_{S}|\mathcal{M}_{S};\mathcal{P}\right)
\end{eqnarray}%
We have therefore shown that $p\left(
X_{1}...X_{N}|\mathcal{P}\right) $ is a joint distribution
whose marginals yield the operational statistics of all
measurements.
\end{proof}

\section{Maximal quantum violation of the $n$-box-set Klyachko-type
Kochen-Specker inequality} \label{App:MaxQV:Klyachko}

To see that $\S_n^\text{quantum}$ given in
Eq.~\eqref{Eq:MaxKlyachkoViolation:n} is indeed the strongest
possible quantum violation of
inequality~\eqref{Ineq:Noncontextual:Sn}, it suffices to
consider the following polynomial of Hermitian operators
\begin{equation}\label{Eq:KlyachkoOperator}
    \hat{\mathcal{B}}_{n}=\sum_{a=1}^n\bar{X}_a\bar{X}_{a\oplus1},
\end{equation}
and note that for arbitrary Hermitian operators satisfying the
commutation relation $[\bar{X}_a, \bar{X}_{a\oplus1}]=0$, we
have
\begin{align}
    &\hat{\mathcal{B}}_{n}
    -n\left(1-\frac{4\cos\frac{\pi}{n}}{1+\cos\frac{\pi}{n}}\right)\unit\nonumber\\
    =&\frac{1}{4}\left(2-\sec\frac{\pi}{n}\right)\sum_{a=1}^n
    \left[\unit-\left(\bar X_a\right)^2\right]
    +\frac{1}{4}\sum_{a=1}^n
    \left[\unit-\left(\bar X_a\bar X_{a\oplus1}\right)^2\right]
    \nonumber\\
    &+\frac{1}{4}\sec\frac{\pi}{n}\sum_{a=1}^n \bar X_a\bar X_{a\oplus 2}\left[\unit-\left(\bar X_{a\oplus1}\right)^2\right]\nonumber\\
    &+\frac{1}{4n}\left(1+\sec\frac{\pi}{n}\right)v_{0}^\dag v_0^{\phantom{\dag}}
    + \frac{1}{n}\sum_{k=1,2}\sum_{j=1}^n \lambda_{kj} v_{kj}^\dag v_{kj}^{\phantom{\dag}},
    \label{Eq:KlyachkoDecomposition}
\end{align}
where
\begin{gather*}
    v_0=n\left(3-2\sec^2\frac{\pi}{2n}\right)\unit+
    \sum_{a=1}^n \bar X_a\bar X_{a\oplus 1},\\
    v_{1j}=\sum_{a=1}^n \omega_n^{ja} \bar X_a,\quad
    v_{2j}=\sum_{a=1}^n \omega_n^{ja} \bar X_a\bar X_{a\oplus 1}
\end{gather*}
$\omega_n={\rm e}^{-i2\pi/n}$ is the $n$-th root of unity,
$\lambda_{2n}=0$, and
\begin{align*}
\lambda_{1j}&=\left(1+\cos\frac{2\pi j}{n}\sec\frac{\pi}{n}\right)
\sin^2\frac{\pi j}{n},\quad j=1,\ldots,n,\\
\lambda_{2j}&=\frac{1}{4}\left(1+\cos\frac{2\pi j}{n}\sec\frac{\pi}{n}\right)
,\quad j=1,\ldots,n-1.
\end{align*}
It is straightforward to check that $\lambda_{1j}$ and $\lambda_{2j}$ are
non-negative for all $j$. Thus, for dichotomic Hermitian observables that
also satisfy $\left(\bar X_a\right)^2=\unit$, the right-hand-side of
Eq.~\eqref{Eq:KlyachkoDecomposition} become a sum of non-negative Hermitian
operators and hence non-negative. As a result, the smallest eigenvalue of
$\hat{\mathcal{B}}_n$ is lower bounded by
$n\left(1-\tfrac{4\cos\tfrac{\pi}{n}}{1+\cos\tfrac{\pi}{n}}\right)$,
therefore making $\S_n^\text{quantum}$ given in
Eq.~\eqref{Eq:MaxKlyachkoViolation:n} the strongest possible quantum
violation of inequality~\eqref{Ineq:Noncontextual:Sn}.

\section{Explicit form of correlations in the separated pair
of single-query 3-box systems} \label{App:Correlations:OS2}

Here, we will give a simple proof that after taking into
account of the no-signaling condition, the nonlcoal OS
correlations have to take the form of
Eq.~\eqref{Eq:SpeckerCorrelations:Explicit}.

First, note that by virtue of satisfying
Eq.~\eqref{Eq:SpeckerCorrelations}, the nonlocal OS
correlations may be written as
\begin{eqnarray}\label{eq:OS2withqs}
\forall a\ne b: &p(0,1|\mathcal{M}_a,\mathcal{M}_{b};\mathcal{P_*})=q_{ab}
\nonumber \\
&p(1,0|\mathcal{M}_a,\mathcal{M}_{b};\mathcal{P_*})=1-q_{ab} \nonumber \\
\forall a= b: &p(0,0|\mathcal{M}_a,\mathcal{M}_{b};\mathcal{P_*})=q_{ab}
\nonumber \\
&p(1,1|\mathcal{M}_a,\mathcal{M}_{b};\mathcal{P_*})=1-q_{ab},
\end{eqnarray}
where $0\le q_{ab}\le 1$. These joint probabilities are depicted in Table~\ref{tbl:OS2}.

\begin{table}[h!btp]
    \begin{tabular}{|c|c|cc|cc|cc|}\hline
     & & \multicolumn{2}{c|}{$B_1$} & \multicolumn{2}{c|}{$B_2$} & \multicolumn{2}{c|}{$B_3$} \\
    \hline
     & & 0 & 1 & 0 & 1 & 0 & 1 \\ \hline
    \multirow{2}{*}{$A_1$} & 0 & $q_{11}$ & 0 & 0 & $q_{12}$ & 0 & $q_{13}$ \\
                           & 1 & 0 & 1-$q_{11}$ & 1-$q_{12}$ & 0 & 1-$q_{13}$ & 0 \\ \hline
    \multirow{2}{*}{$A_2$} & 0 & 0 & $q_{21}$ & $q_{22}$ & 0 &  0 & $q_{23}$  \\
                           & 1 & 1-$q_{21}$ &0 & 0 & 1-$q_{22}$ & 1-$q_{23}$& 0 \\ \hline
    \multirow{2}{*}{$A_3$} & 0 & 0 & $q_{31}$ & 0 & $q_{32}$ & $q_{33}$ & 0 \\
                           & 1 & 1-$q_{31}$ & 0 & 1-$q_{32}$ & 0 & 0 & 1-$q_{33}$ \\ \hline
    \end{tabular}
    \caption{\label{tbl:OS2} Joint conditional probability distributions
    $p(A_a, B_b | \M_a, \M_{b};\P)$ of Eq.~(\ref{eq:OS2withqs}) for all pairs of values of $a$ and $b$.  Along the horizontal (vertical) are the three choices of measurement on the $B(A)$ wing together with the two outcomes for each.}
\end{table}

Now we consider the consequences of the no-signaling
conditions of Eq.~(\ref{eq:non-signaling}).   From the
independence on $b$ of $\sum_{B_b}p(A_a,B_b |\M_a,\M_b;\P)$, we
deduce that
\begin{equation}
q_{a1}=q_{a2}=q_{a3},
\end{equation}
implying that the joint distributions can be made to depend on just three
parameters, which we will denote by $s_a=q_{ab}$. It then follows from the
independence on $a$ of $\sum_{A_a}p(A_a,B_b |\M_a,\M_b;\P)$ that
\begin{equation}
s_a=1-s_a,
\end{equation}
which implies that $s_a=\frac{1}{2}$ for all $a$, and therefore $q_{ab} =
\frac{1}{2}$ for all $a,b$.  It follows that the nonlocal OS correlations
must be of the form given in Eq.~\eqref{Eq:SpeckerCorrelations:Explicit} if
they are to be non-signaling.

\section{Maximum quantum violation of the $n$-box-set Bell-Mermin inequality}
\label{App:MaxQV:Bell:n}

In general, the problem of determining the maximal quantum violation of a
Bell inequality is highly non-trivial (see, for example,
Refs.~\cite{QMP.Hierarchy,R.F.Werner:QIC:2001,YCL:PRA:2007} and references therein). Here, we
will show that $R_n^\text{quantum}$ defined in Eq.~\eqref{Eq:MaxQuantumRn} is
indeed the maximal winning probability, {\em cf.}
Eq.~\eqref{Eq:Dfn:WinningProb:TwoWing}, allowed in quantum mechanics. To this
end, it suffices to show that the winning probability $R_n$ is upper bounded
by $R_n^\text{quantum}$ in quantum theory. For convenience, we will show this
in terms of
\begin{equation}
    \mathcal{S}_n\equiv\sum_{a=1}^n\langle\bar A_a \bar B_a\rangle
    -\sum_{a, b: b=a\oplus1} \langle\bar A_a \bar B_b\rangle-\sum_{a, b: a=b\oplus1}
    \langle\bar A_a \bar B_b\rangle,
\end{equation}
which can be re-expressed as:
\begin{equation}\label{Eq:SnRn}
    \S_n=6n\left(R_n-\frac{1}{2}\right),
\end{equation}
using Eq.~\eqref{Eq:Dfn:Correlator}.

Now, consider the Bell operator~\cite{S.L.Braunstein:PRL:3259} corresponding
to above expression defining $S_n$:
\begin{equation}\label{Eq:BellOperator:n}
    \hat{\mathcal{B}}^{[n]}_{\mbox{\tiny NLOS }}=\sum_{a=1}^n\bar A_a \bar B_a
    -\sum_{a, b: b=a\oplus1} \bar A_a \bar B_b-\sum_{a, b: a=b\oplus1} \bar A_a \bar B_b.
\end{equation}
Following a procedure very similar to that described in Sec.~III of
Ref.~\cite{S.Wehner:PRA:022110} (see also Ref.~\cite{QMP.Hierarchy}), one
finds that for arbitrary Hermitian observables $\{\bar A_a\}_{a=1}^n$ and
$\{\bar B_b\}_{b=1}^n$ satisfying $\left[\bar A_a, \bar B_b\right]=0$,
\begin{align}
    &n\lambda_{\tfrac{n+1}{2}}\unit
    -\hat{\mathcal{B}}^{[n]}_{\mbox{\tiny NLOS}}\nonumber\\
    =&\frac{1}{2}\sum_{a=1}^n\left[\left(\lambda_{\tfrac{n+1}{2}}+\lambda_a\right)
    v_{a-}^\dag v_{a-}+\left(\lambda_{\tfrac{n+1}{2}}-\lambda_a\right)
    v_{a+}^\dag v_{a+}\right]\nonumber\\
    +&\frac{1}{2}\lambda_{\tfrac{n+1}{2}}\left\{\sum_{a=1}^n
    \left[\unit-\left(\bar A_a\right)^2\right]
    +\sum_{b=1}^n \left[\unit-\left(\bar B_b\right)^2\right]\right\},
    \label{Eq:Decomposition:n}
\end{align}
\noindent{where}
\begin{equation}
    v_{a\pm}=\frac{1}{\sqrt{2n}}\sum_{k=1}^n
    \omega^{ak}_n\left(\bar A_k\pm \bar B_k\right),\quad
    \quad\lambda_a=1-2\cos\frac{2\pi}{n}a,
\end{equation}
and $\omega_n=\text{e}^{-i2\pi/n}$.

It is easy to verify that
\begin{equation}
    \max_{a\in\{1,2,\ldots,n\}} \lambda_a
    =\lambda_{\tfrac{n+1}{2}}=4\cos^2\frac{\pi}{2n}-1.
\end{equation}
Thus, Eq.~\eqref{Eq:Decomposition:n} implies that whenever the constraints
$\left(\bar A_a\right)^2=\unit$ and $\left(\bar B_b\right)^2=\unit$ are
satisfied for all $a,b\in\{1,2,\ldots,n\}$, the right hand side of
Eq.~\eqref{Eq:Decomposition:n} becomes a sum of squares of polynomial of
Hermitian operators and hence $n\left(4\cos^2\frac{\pi}{2n}-1\right)\unit-
\hat{\mathcal{B}}^{[n]}_{\mbox{\tiny NLOS }}\ge 0$. As a result, the maximal
quantum mechanical expectation value of $\hat{\mathcal{B}}^{[n]}_{\mbox{\tiny
NLOS}}$ is upper bounded by $n\left(4\cos^2\frac{\pi}{2n}-1\right)$, so is
the maximal value of $\S_n$ allowed in quantum theory.

Equivalently, it follows from Eq.~\eqref{Eq:SnRn} that in quantum theory, the
maximal winning probability $R^n$ is upper bounded by:
\begin{equation}
    \frac{1}{2}+\frac{1}{6n}\times n\left(4\cos^2\frac{\pi}{2n}-1\right)=
    \frac{1}{3}+\frac{2}{3}\cos^2\frac{\pi}{2n}.
\end{equation}
which is just $R^{\mbox{\tiny quantum}}_n$ given in
Eq.~\eqref{Eq:MaxQuantumRn}.

\section{Necessary and sufficient conditions for joint measurability of noisy spin observables} \label{app:jointmmtPOVMs}

\begin{theorem} Consider a set of noisy spin observables along the axes
$\hat{n}_{k}$, that is, a set of POVMS $\left\{  E{_{X_{k}}^{k}}\right\}  $
with $X_{k}\in\left\{  +1,-1\right\}  $ of the form%
\begin{equation}
E_{X_{k}}^{k}=\frac{1}{2}\unit+\frac{1}{2}\vec{\sigma}\cdot X_{k}\eta\hat{n}_{k}.%
\end{equation}
Defining  $2^N$ different 3-vectors %
\begin{equation}\label{Eq:vec:m}
\vec{m}_{X_{1}\dots X_{N}}\equiv\sum_{k=1}^{N}X_{k}\hat{n}_{k},
\end{equation}
a necessary condition for the spin observables to be jointly measurable is
that%
\begin{equation}
\eta\leq\frac{1}{N}\frac{\sum_{X_{1}\dots X_{N}}|\vec{m}_{X_{1}\dots X_{N}%
}|^{2}}{\sum_{X_{1}\dots X_{N}}|\vec{m}_{X_{1}\dots X_{N}}|}%
\label{eq:etanecessary}%
\end{equation}
and a sufficient condition is that%
\begin{equation}
\eta\leq\frac{2^{N}}{\sum_{X_{1}\dots X_{N}}|\vec{m}_{X_{1}\dots X_{N}}%
|}.\label{eq:etasufficient}%
\end{equation}
\end{theorem}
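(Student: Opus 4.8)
The plan is to pass to the Bloch representation of qubit effects and turn joint measurability into a feasibility problem for the trace weights and Bloch vectors of the candidate joint POVM. Writing a putative fine-graining as $F_{X_1\dots X_N}=\tfrac12\bigl(w_{X_1\dots X_N}\unit+\vec{u}_{X_1\dots X_N}\cdot\vec{\sigma}\bigr)$, positivity $F_{X_1\dots X_N}\ge0$ is equivalent to $w_{X_1\dots X_N}\ge|\vec{u}_{X_1\dots X_N}|$, while $\sum F=\unit$ becomes $\sum w=2$ and $\sum\vec{u}=\vec{0}$. Demanding that marginalizing over all but the $k$-th outcome reproduce $E^{k}_{X_k}$ of Eq.~\eqref{eq:noisyspinobservables} splits into a scalar constraint $\sum_{X_j:j\ne k}w_{X_1\dots X_N}=1$ and a vector constraint $\sum_{X_j:j\ne k}\vec{u}_{X_1\dots X_N}=\eta X_k\hat{n}_k$. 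I will repeatedly use three elementary facts about the vectors $\vec{m}_{X_1\dots X_N}$ of Eq.~\eqref{Eq:vec:m}: the partial sum $\sum_{X_j:j\ne k}\vec{m}_{X_1\dots X_N}=2^{N-1}X_k\hat{n}_k$ (off-axis terms cancel in $\pm$ pairs); the antipodal symmetry $\vec{m}_{-X_1\dots-X_N}=-\vec{m}_{X_1\dots X_N}$, which forces $\sum_{X_j:j\ne k}|\vec{m}_{X_1\dots X_N}|=\tfrac12\sum_{X_1\dots X_N}|\vec{m}_{X_1\dots X_N}|$ independently of $k$ and $X_k$; and the sum rule $\sum_{X_1\dots X_N}|\vec{m}_{X_1\dots X_N}|^2=N\,2^{N}$ (the cross terms $\hat n_k\cdot\hat n_l$ vanish on summing over signs).

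For the sufficiency statement, Eq.~\eqref{eq:etasufficient}, I would construct an explicit joint POVM. Setting $Z\equiv\sum_{X_1\dots X_N}|\vec{m}_{X_1\dots X_N}|$, define the sharp assignment $F^{\ast}_{X_1\dots X_N}=\tfrac{|\vec{m}_{X_1\dots X_N}|}{Z}\bigl(\unit+\vec{\sigma}\cdot\hat{m}_{X_1\dots X_N}\bigr)$, which is manifestly positive (a nonnegative multiple of a rank-one projector). The identity $\sum\vec{m}=\vec{0}$ gives $\sum F^{\ast}=\unit$; the partial-sum identity gives the correct vector marginal with sharpness exactly $\eta^{\ast}=2^{N}/Z$; and the antipodal symmetry gives the scalar marginal $\tfrac12$. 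Thus $F^{\ast}$ jointly measures the triple precisely at $\eta=2^{N}/Z$. For any smaller $\eta$ I would take the convex combination $F_{X_1\dots X_N}=t\,F^{\ast}_{X_1\dots X_N}+(1-t)\,2^{-N}\unit$ with $t=\eta Z/2^{N}\in[0,1]$: positivity is inherited, the uniform term feeds only the identity part of each marginal, and the vector marginal rescales to $\eta X_k\hat{n}_k$. This establishes joint measurability for every $\eta\le 2^{N}/Z$.

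For the necessity statement, Eq.~\eqref{eq:etanecessary}, I would begin from the vector marginal. Dotting $\sum_{X_j:j\ne k}\vec{u}=\eta X_k\hat{n}_k$ with $X_k\hat{n}_k$, summing over $X_k$ and then over $k$, yields the single identity $\sum_{X_1\dots X_N}\vec{m}_{X_1\dots X_N}\cdot\vec{u}_{X_1\dots X_N}=2\eta N$. Because $\vec{\sigma}\cdot\vec{m}_{X_1\dots X_N}\le|\vec{m}_{X_1\dots X_N}|\,\unit$ as operators and $F_{X_1\dots X_N}\ge0$, each term obeys $\vec{m}_{X_1\dots X_N}\cdot\vec{u}_{X_1\dots X_N}=\mathrm{Tr}\bigl(F_{X_1\dots X_N}\,\vec{\sigma}\cdot\vec{m}_{X_1\dots X_N}\bigr)\le|\vec{m}_{X_1\dots X_N}|\,w_{X_1\dots X_N}$, so that $2\eta N\le\sum_{X_1\dots X_N}|\vec{m}_{X_1\dots X_N}|\,w_{X_1\dots X_N}$. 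The content of the bound is that the weights compatible with both marginal families are extremal exactly at $w_{X_1\dots X_N}=2|\vec{m}_{X_1\dots X_N}|/Z$ (the weights of $F^{\ast}$, which saturate positivity), whereupon $\sum|\vec{m}|\,w=2\sum|\vec{m}|^2/\sum|\vec{m}|$ and hence $\eta\le\tfrac1N\,\sum|\vec{m}|^2/\sum|\vec{m}|$. Invoking $\sum|\vec{m}|^2=N2^{N}$ shows this coincides with the sufficient bound, so the two conditions pin the exact threshold $\eta=2^{N}/Z$.

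The hard part is making that last step rigorous. Taken alone, the termwise estimate $\vec{m}\cdot\vec{u}\le|\vec{m}|\,w$ is too lossy: it tacitly allows every $\vec{u}_{X_1\dots X_N}$ to align with $\vec{m}_{X_1\dots X_N}$ and the weight $w$ to concentrate on large-$|\vec{m}|$ configurations, which the vector and scalar marginals jointly forbid. A clean proof therefore treats the coupled system in $(w,\vec{u})$ as a small second-order-cone (equivalently linear-programming) feasibility problem and produces an explicit dual certificate witnessing infeasibility for $\eta>2^{N}/Z$, with optimality attained at the radial assignment $\vec{u}_{X_1\dots X_N}\propto\vec{m}_{X_1\dots X_N}$, $w_{X_1\dots X_N}\propto|\vec{m}_{X_1\dots X_N}|$. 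Constructing this certificate for arbitrary axes is the main obstacle; as consistency checks I would verify that for $N=2$ it collapses to the Busch criterion $|\eta\hat{n}_1+\eta\hat{n}_2|+|\eta\hat{n}_1-\eta\hat{n}_2|\le2$, and that for the orthogonal and trine families the symmetry renders the radial assignment optimal by inspection, recovering $\eta\le1/\sqrt{3}$ and $\eta\le2/3$ respectively.
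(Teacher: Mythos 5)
Your sufficiency argument is complete and is essentially the paper's own proof of Eq.~(\ref{eq:etasufficient}): your $F^{\ast}_{X_1\dots X_N}$ is exactly the simulating POVM of Eq.~(\ref{eq:POVMthatsimulates}), verified by the same three identities (vanishing total sum of the $\vec{m}$'s, partial sums equal to $2^{N-1}X_k\hat{n}_k$, and the antipodal symmetry that halves the norm sum), and the convex mixing with $2^{-N}\unit$ to reach smaller $\eta$ matches the paper's remark about adding uniform noise. Your observation that $\sum_{X_1\dots X_N}|\vec{m}_{X_1\dots X_N}|^2=N2^N$, so that the two bounds of the theorem always coincide, is correct and sharper than the paper's case-by-case verification in its corollary.

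The genuine gap is in the necessity direction, and you have named it yourself. After deriving $2N\eta=\sum_x \vec{m}_x\cdot\vec{u}_x\le\sum_x|\vec{m}_x|\,w_x$ --- the same chain as the paper's $\eta\le\frac{1}{2N}\sum_x|\vec{m}_x|\mathrm{Tr}[E_x]$ --- you assert that the admissible weights are extremal at $w_x=2|\vec{m}_x|/Z$ and defer the proof to an unconstructed dual certificate; without it, Eq.~(\ref{eq:etanecessary}) is not proven. Indeed, over weights satisfying only $w_x\ge0$ and the marginal trace constraints, the linear functional $\sum_x|\vec{m}_x|w_x$ is maximized at a vertex, not at the proportional point: for the trine pair the vertex value yields only $\eta\le\sqrt{3}/2$, weaker than $\sqrt{3}-1$. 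You should know that the paper's own proof fills this same hole with an invalid step: it claims the scalar product is maximized by taking the trace vector \emph{parallel} to $(|\vec{m}_x|)_x$, but the parallel rule governs the Euclidean ball, not the simplex $\{w_x\ge0,\ \sum_x w_x=2\}$, where the maximum is $2\max_x|\vec{m}_x|$. Worse, the certificate you hope for cannot exist for arbitrary axes: take $N=3$ with $\hat{n}_1\perp\hat{n}_2$ and $\hat{n}_3=\hat{n}_2$ (or nearly so). Any joint POVM $G_{X_1X_2}$ for the pair yields a joint triple via $F_{X_1X_2X_3}=G_{X_1X_2}\delta_{X_2,X_3}$, so the triple is jointly measurable up to the pair threshold $\eta=1/\sqrt{2}\approx0.707$, while the right-hand side of Eq.~(\ref{eq:etanecessary}) evaluates to $8/(4+4\sqrt{5})=2/(1+\sqrt{5})\approx0.618$. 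The stated necessary condition is therefore false for such configurations; it survives in the paper's four examples for special reasons. For the orthogonal pair and triple and the trine triple, all nonzero $|\vec{m}_x|$ are equal, so even the crude vertex bound $\eta\le\max_x|\vec{m}_x|/N$ reproduces the stated value. For general $N=2$ (hence the trine pair) your instinct that the coupled vector constraints must be used can be carried through by hand: the marginal relations give $\vec{u}_{++}-\vec{u}_{--}=\eta(\hat{n}_1+\hat{n}_2)$ and $\vec{u}_{+-}-\vec{u}_{-+}=\eta(\hat{n}_1-\hat{n}_2)$, whence positivity yields $2=\sum_x w_x\ge\sum_x|\vec{u}_x|\ge\eta\left(|\hat{n}_1+\hat{n}_2|+|\hat{n}_1-\hat{n}_2|\right)$, which is exactly the Busch criterion you cite and coincides with Eq.~(\ref{eq:etanecessary}) for $N=2$. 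In short, your refusal to wave through the extremality step is the right instinct --- that step is not merely hard but false in general --- and a repaired necessity statement must either restrict to $N=2$ or to symmetric configurations like those actually used in the corollary.
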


\begin{proof}
Clearly,
\begin{equation}
\eta=\mathrm{Tr}\left[  \left(  \vec{\sigma}\cdot X_{k}\hat{n}_{k}\right)
E_{X_{k}}^{k}\right]  .
\end{equation}
But given that this equality holds for both values of $X_{k}$ and for all $k,$
we have%
\begin{equation}
\eta=\frac{1}{2N}\sum_{k=1}^{N}\sum_{X_{k}}\mathrm{Tr}\left[  \left(
\vec{\sigma}\cdot X_{k}\hat{n}_{k}\right)  E_{X_{k}}^{k}\right]  .
\end{equation}
Recall that joint measurability of POVMs $\left\{  E_{X_{k}}^{k}\right\}  $
for different $k$ implies the existence of another POVM $\left\{
E_{X_{1}X_{2}\dots X_{N}}\right\}  $ such that%
\begin{equation}
E_{X_{k}}^{k}=\sum_{X_{1}\dots X_{N},\text{ fix }X_{k}}E_{X_{1}\dots X_{N}}%
\end{equation}
Consequently,%
\begin{equation}
\eta=\frac{1}{2N}\sum_{X_{1}\dots X_{N}}\mathrm{Tr}\left[  \left(  \vec
{\sigma}\cdot\sum_{k=1}^{N}X_{k}\hat{n}_{k}\right)  E_{X_{1}\dots X_{N}%
}\right]  .
\end{equation}
Defining $\vec{m}_{X_{1}\dots X_{N}}$ as above, we may write $\eta$ as%
\begin{equation}
\eta=\frac{1}{2N}\sum_{X_{1}\dots X_{N}}|\vec{m}_{X_{1}\dots X_{N}%
}|\mathrm{Tr}\left[  \left(  \vec{\sigma}\cdot\hat{m}_{X_{1}\dots X_{N}%
}\right)  E_{X_{1}\dots X_{N}}\right]  .
\end{equation}
where $\hat{m}_{X_{1}\dots X_{N}}=\vec{m}_{X_{1}\dots X_{N}}/|\vec{m}%
_{X_{1}\dots X_{N}}|.$ We then note that
\begin{equation}
\mathrm{Tr}\left[  \left(  \vec{\sigma}\cdot\hat{m}_{X_{1}\dots X_{N}}\right)
E_{X_{1}\dots X_{N}}\right]  \leq\mathrm{Tr}\left[  E_{X_{1}\dots X_{N}%
}\right]  \label{eq:ineq1}%
\end{equation}
to obtain the inequality%
\begin{equation}
\eta\leq\frac{1}{2N}\sum_{X_{1}\dots X_{N}}|\vec{m}_{X_{1}\dots X_{N}%
}|\mathrm{Tr}\left[  E_{X_{1}\dots X_{N}}\right]  .
\end{equation}
We need only determine the maximum value of the right-hand-side in a
variation over all POVMs $\left\{  E_{X_{1}\dots X_{N}}\right\}  .$ Given
that $\sum_{X_{1}\dots X_{N}}E_{X_{1}\dots X_{N}}=\unit,$ we know that $\sum
_{X_{1}\dots X_{N}}\mathrm{Tr}\left[  E_{X_{1}\dots X_{N}}\right]  =2.$ \
Consequently, the $2^N$-dimensional  vector $\left(  \frac{1}{2}\mathrm{Tr}\left[ E_{X_{1}\dots
X_{N}}\right]  \right)  _{X_{1}\dots X_{N}}$ has unit 1-norm.
\ Thinking of $\sum_{X_{1}\dots X_{N}}|\vec{m}_{X_{1}\dots X_{N}}%
|\mathrm{Tr}\left[  E_{X_{1}\dots X_{N}}\right]  $ as a scalar product, we see
that it is maximized by taking $\left(  \frac{1}{2}\mathrm{Tr}\left[
E_{X_{1}\dots X_{N}}\right]  \right)  _{X_{1}\dots X_{N}}$ to be a unit vector
parallel to $\left(  |\vec{m}_{X_{1}\dots X_{N}}|\right)  _{X_{1}\dots X_{N}}$
with unit 1-norm, that is,%
\begin{equation}
\frac{1}{2}\mathrm{Tr}\left[  E_{X_{1}\dots X_{N}}\right]  =\frac{|\vec
{m}_{X_{1}\dots X_{N}}|}{\sum_{X_{1}^{\prime}\dots X_{N}^{\prime}}|\vec
{m}_{X_{1}^{\prime}\dots X_{N}^{\prime}}|},
\end{equation}
which yields the necessary condition on $\eta$, Eq.~\eqref{eq:etanecessary}.

To derive the sufficient condition, Eq.~\eqref{eq:etasufficient}, we
construct a POVM that jointly measures a set of spin observables with value
of $\eta$ saturating the inequality. Any set of observables with smaller
$\eta$ can then be jointly measured by simply adding uniformly random noise
to this POVM.

The simulating POVM is
\begin{equation}
E_{X_{1}\dots X_{N}}=\frac{2|\vec{m}_{X_{1}\dots X_{N}}|}{\sum_{X_{1}^{\prime
}\dots X_{N}^{\prime}}|\vec{m}_{X_{1}^{\prime}\dots X_{N}^{\prime}}|}\left[
\frac{1}{2}\unit+\frac{1}{2}\vec{\sigma}\cdot\hat{m}_{X_{1}\dots X_{N}}\right]
.\label{eq:POVMthatsimulates}%
\end{equation}
It suffices to demonstrate that this is indeed a POVM and that it
coarse-grains to the appropriate noisy spin observables. First, note that
\begin{align*}
\sum_{X_{1}\dots X_{N}}\vec{m}_{X_{1}\dots X_{N}}  & =\sum_{X_{1}\dots X_{N}%
}\left(  \sum_{j=1}^{N}X_{j}\hat{n}_{j}\right)  \\
& =\sum_{j=1}^{N}\hat{n}_{j}\left(  \sum_{X_{1}\dots X_{N}}X_{j}\right)  \\
& =0.
\end{align*}
while
\begin{align*}
\sum_{\{X_{i}\}_{i\neq k}}\vec{m}_{X_{1}\dots X_{N}}  & =\sum_{\{X_{i}%
\}_{i\neq k}}\left(  \sum_{j=1}^{N}X_{j}\hat{n}_{j}\right)  \\
& =\sum_{j=1}^{N}\hat{n}_{j}\left(  \sum_{\{X_{i}\}_{i\neq k}}X_{j}\right)
\\
& =\hat{n}_{k}2^{N-1}X_{k}.
\end{align*}
Also, note that because $\vec{m}_{X_{1}\dots X_{k}\dots X_{N}}=-\vec
{m}_{-X_{1}\dots-X_{k}\dots-X_{N}},$ it follows that $|\vec{m}_{X_{1}\dots
X_{N}}|=|\vec{m}_{-X_{1}\dots-X_{N}}|$ and consequently that
\begin{equation}
\sum_{\{X_{i}\}_{i\neq k}}|\vec{m}_{X_{1}\dots X_{N}}|=\frac{1}{2}\sum
_{X_{1}\dots X_{N}}|\vec{m}_{X_{1}\dots X_{N}}|.
\end{equation}
\newline

It is then easy to verify that
\begin{align*}
\sum_{X_{1}\dots X_{N}}E_{X_{1}\dots X_{N}}  & =\sum_{X_{1}\dots X_{N}}%
\frac{2|\vec{m}_{X_{1}\dots X_{N}}|}{\sum_{X_{1}^{\prime}\dots X_{N}^{\prime}%
}|\vec{m}_{X_{1}^{\prime}\dots X_{N}^{\prime}}|}\\
& \times\left[  \frac{1}{2}\unit+\frac{1}{2}\vec{\sigma}\cdot\hat{m}_{X_{1}\dots
X_{N}}\right]  \\
& =\unit,
\end{align*}
and that%
\begin{align*}
\sum_{\{X_{i}\}_{i\neq k}}E_{X_{1}\dots X_{N}}  & =\sum_{\{X_{i}\}_{i\neq k}%
}\frac{2|\vec{m}_{X_{1}\dots X_{N}}|}{\sum_{X_{1}^{\prime}\dots X_{N}^{\prime
}}|\vec{m}_{X_{1}^{\prime}\dots X_{N}^{\prime}}|}\\
& \times\left[  \frac{1}{2}\unit+\frac{1}{2}\vec{\sigma}\cdot\hat{m}_{X_{1}\dots
X_{N}}\right]  \\
& =\frac{1}{2}\unit+\frac{1}{2}\vec{\sigma}\cdot\frac{2^{N}}{\sum_{X_{1}^{\prime
}\dots X_{N}^{\prime}}|\vec{m}_{X_{1}^{\prime}\dots X_{N}^{\prime}}|}X_{k}%
\hat{n}_{k}.
\end{align*}
This establishes the sufficient condition.
\end{proof}

\begin{corollary}
The necessary and sufficient conditions for joint
measurability of a set of spin observables are: for a pair of orthogonal spin axes,%
\begin{equation}
\eta\leq\frac{1}{\sqrt{2}},\label{eq:eta1}%
\end{equation}
for a triple of orthogonal spin axes,
\begin{equation}
\eta\leq\frac{1}{\sqrt{3}},
\end{equation}
for a pair of trine spin axes,
\begin{equation}
\eta\leq\sqrt{3}-1,
\end{equation}
for a triple of trine spin axes,
\begin{equation}
\eta\leq\frac{2}{3}.\label{eq:eta4}%
\end{equation}
To saturate each of these inequalities, it suffices to implement the POVM
defined in Eq.~(\ref{eq:POVMthatsimulates}).
\end{corollary}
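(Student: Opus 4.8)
The plan is to obtain the corollary as a direct specialization of the preceding theorem: for each of the four geometries one simply evaluates the necessary bound~(\ref{eq:etanecessary}) and the sufficient bound~(\ref{eq:etasufficient}) and checks that they coincide, so that the common value is simultaneously necessary and sufficient. Achievability requires no separate argument, since the POVM of Eq.~(\ref{eq:POVMthatsimulates}) already saturates the sufficient bound by construction.

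The key observation, which I would establish first, is that the two bounds \emph{always} agree for any collection of unit axes, independently of the particular geometry. Writing $\vec{m}_{X_1\dots X_N}=\sum_k X_k \hat{n}_k$ as in Eq.~(\ref{Eq:vec:m}) and summing its squared length over all $2^N$ sign patterns, the cross terms cancel because $\sum_{X_j=\pm1}X_j=0$, leaving
\begin{equation}
\sum_{X_1\dots X_N}|\vec{m}_{X_1\dots X_N}|^2=\sum_{X_1\dots X_N}\sum_{k}X_k^2|\hat{n}_k|^2=N\,2^N,
\end{equation}
since $X_k^2=1$ and $|\hat{n}_k|=1$. Substituting this into the necessary condition~(\ref{eq:etanecessary}) turns its numerator $\tfrac1N\sum|\vec{m}|^2$ into exactly $2^N$, which is precisely the numerator of the sufficient condition~(\ref{eq:etasufficient}). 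Hence both bounds collapse to $\eta\le 2^N/\sum_{X_1\dots X_N}|\vec{m}_{X_1\dots X_N}|$, and the inequality is necessary and sufficient for every set of spin axes.

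It then remains only to evaluate the single quantity $\sum_{X_1\dots X_N}|\vec{m}_{X_1\dots X_N}|$ in each case, using the pairwise inner products of the axes ($\hat{n}_j\cdot\hat{n}_k=0$ for orthogonal axes and $-\tfrac12$ for trine axes). For the orthogonal pair ($N=2$) all four vectors have length $\sqrt2$, giving $\sum|\vec{m}|=4\sqrt2$ and $\eta\le 1/\sqrt2$; for the orthogonal triple ($N=3$) all eight have length $\sqrt3$, giving $8\sqrt3$ and $\eta\le1/\sqrt3$. For the trine pair, $|\vec{m}|^2=2-X_1X_2$ yields lengths $1$ and $\sqrt3$ in equal numbers, so $\sum|\vec{m}|=2+2\sqrt3$ and, after rationalizing, $\eta\le\sqrt3-1$. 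For the trine triple, $|\vec{m}|^2=3-(X_1X_2+X_1X_3+X_2X_3)$ vanishes for the two fully aligned patterns and equals $4$ for the remaining six, giving $\sum|\vec{m}|=12$ and $\eta\le2/3$. The main point of the argument is the cancellation identity above, which is what forces necessity and sufficiency to agree; the four magnitude evaluations are then routine, the only care being needed in the trine-triple case, where the two zero-length vectors (corresponding to the weights $w_{+++}=w_{---}=0$ noted earlier) must be counted correctly. Saturation in each case is witnessed by Eq.~(\ref{eq:POVMthatsimulates}).
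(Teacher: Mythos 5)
Your proof is correct, and it takes a genuinely cleaner route than the paper's at one key step. The paper's own proof simply tabulates the magnitudes $\left\vert \vec{m}_{X_1\dots X_N}\right\vert$ for each of the four geometries (the same values you obtain, including the two vanishing vectors $\vec{m}_{+++}=\vec{m}_{---}=0$ in the trine-triple case) and then asserts that it is ``straightforward to verify in each case'' that the necessary bound of Eq.~(\ref{eq:etanecessary}) and the sufficient bound of Eq.~(\ref{eq:etasufficient}) coincide and yield the four stated values; coincidence is thus checked example by example. You instead prove it once and for all: the cancellation identity $\sum_{X_1\dots X_N}\left\vert \vec{m}_{X_1\dots X_N}\right\vert^2 = N\,2^N$, valid for \emph{any} set of unit axes because the cross terms $X_j X_k\,\hat{n}_j\cdot\hat{n}_k$ with $j\ne k$ sum to zero over the $2^N$ sign patterns, shows that the numerator of Eq.~(\ref{eq:etanecessary}) always equals $2^N$, so the theorem's two bounds collapse to the single condition $\eta\le 2^N/\sum_{X_1\dots X_N}\left\vert \vec{m}_{X_1\dots X_N}\right\vert$ for every configuration of spin axes. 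This buys two things: it halves the per-case work, since only the sum $\sum\left\vert \vec{m}\right\vert$ need be evaluated in each geometry (and your four evaluations, using $\hat{n}_j\cdot\hat{n}_k=0$ or $-\tfrac{1}{2}$, all match the paper's), and it makes explicit a fact the paper leaves implicit --- that for unbiased noisy spin observables along unit axes the theorem is actually a sharp if-and-only-if characterization of joint measurability, not merely a pair of one-sided bounds that happen to agree in four examples. Your treatment of saturation, via the POVM of Eq.~(\ref{eq:POVMthatsimulates}), is identical to the paper's.
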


\begin{proof}
We begin by establishing the values of $\left\{  |\vec
{m}_{X_{1}\dots X_{N}}|\right\}  $ for each of our four examples. For
orthogonal spin axes, defined in Eq. (\ref{eq:orthogonalspinaxes}), we have
for $N=2,$%
\begin{equation}
\forall X_{1},X_{2}:|\vec{m}_{X_{1}X_{2}}|=|X_{1}\hat{n}_{1}+X_{2}\hat{n}%
_{2}|=\sqrt{2},
\end{equation}
and for $N=3,$%
\begin{align}
\forall X_{1},X_{2},X_3  & :|\vec{m}_{X_{1}X_{2}X_{3}}|\\ \nonumber
&=|X_{1}\hat{n}_{1}+X_{2}\hat{n}_{2}+X_{3}\hat{n}_{3}|
 =\sqrt{3}.
\end{align}
For trine spin axes, defined in Eq.~(\ref{eq:planarspinaxes}), we have for
$N=2,$%
\begin{align}
|\vec{m}_{++}|  & =|\vec{m}_{--}|=1,\\
|\vec{m}_{+-}|  & =|\vec{m}_{-+}|=\sqrt{3},
\end{align}
and for $N=3,$ we have (making use of the fact that $\hat{n}_{i}+\hat{n}%
_{j}=-\hat{n}_{k}$ for $i,j,k$ distinct),%
\begin{align}
|\vec{m}_{+++}|  & =|\vec{m}_{---}|=0,\\
|\vec{m}_{++-}|  & =|\vec{m}_{--+}|=|\vec{m}_{+-+}|\notag\\
& =|\vec{m}_{-+-}|=|\vec{m}_{-++}|=|\vec{m}_{+--}|=2.
\end{align}
It is then straightforward to verify in each case that the necessary and
sufficient conditions on $\eta$, Eqs. (\ref{eq:etanecessary}) and
(\ref{eq:etasufficient}), coincide and yield the bounds given in Eqs.
(\ref{eq:eta1})-(\ref{eq:eta4}). We have already shown that the bound of the
sufficient condition is saturated by the POVM of Eq. (\ref{eq:planarspinaxes}%
).
\end{proof}

\thispagestyle{empty}
\newpage
\phantom{-}
\setcounter{page}{35}
\addcontentsline{toc}{section}{{\bf \red Erratum}}

\end{document}